\documentclass[11pt]{article}
\usepackage{fullpage}
\parindent = 0 pt
\parskip = 8 pt
\addtolength{\textheight}{0.4in}
\usepackage{datetime}
\settimeformat{ampmtime}
\usdate


\usepackage{amsmath, amsthm, amssymb, bm}
\usepackage{cite}
\usepackage{graphicx}
\usepackage{epstopdf}
\usepackage{multirow}
\usepackage{url}
\usepackage{hyperref, color} 
\usepackage{subfigure}



\renewcommand{\P}[1]{\operatorname{P}\left\{#1\right\}}
\newcommand{\E}{\operatorname{E}}
\newcommand{\Tr}[1]{\operatorname{trace}\left(#1\right)}

\newcommand{\er}{\mathrm{e}}
\renewcommand{\j}{\mathrm{j}}
\newcommand{\vct}[1]{\bm{#1}}
\newcommand{\mtx}[1]{\bm{#1}}

\newcommand{\R}{\mathbb{R}}
\newcommand{\C}{\mathbb{C}}
\newcommand{\xx}{\vct{x}}
\newcommand{\ww}{\vct{w}}
\newcommand{\Bmat}{\mtx{B}}
\newcommand{\Cmat}{\mtx{C}}
\newcommand{\mm}{\vct{m}}
\newcommand{\hh}{\vct{h}}
\newcommand{\yy}{\vct{y}}
\newcommand{\yh}{\hat{\yy}}

\newcommand{\Ak}{\mtx{A}_k}
\newcommand{\bk}{\hat{\vct{b}}_k}
\newcommand{\ck}{\hat{\vct{c}}_k}
\newcommand{\I}{\mtx{I}}
\newcommand{\W}{\mtx{W}}
\newcommand{\Zk}{\mtx{Z}_k}
\newcommand{\vk}{\vct{v}_k}
\newcommand{\uk}{\vct{u}_k}
\newcommand{\uu}{\vct{u}}
\newcommand{\vv}{\vct{v}}
\newcommand{\cZ}{\mathcal{Z}}

\def\PT{\mathcal{P}_T}
\def\PTc{\mathcal{P}_{T^\perp}}
\def\cA{\mathcal{A}}
\newcommand{\<}{\langle}
\renewcommand{\>}{\rangle}

\newtheorem{lem}{Lemma}
\newtheorem{thm}{Theorem}
\newtheorem{prop}{Proposition}
\newtheorem{defn}{Definition}
\newtheorem{cor}{Corollary}
\newcommand{\revise}[1]{#1}

\begin{document}

\title{Blind Deconvolution using Convex Programming}

\author{Ali Ahmed, Benjamin Recht, and Justin Romberg\thanks{A.\ A.\, and J.\ R.\ are with the School of Electrical and Computer Engineering at Georgia Tech in Atlanta, GA.  Emails: alikhan@gatech.edu and jrom@ece.gatech.edu. B.\ R.\ is with the Department of Computer Science at the University of Wisconsin.  Email: brecht@cs.wisc.edu.  A.\ A.\ and J.\ R.\ are supported by ONR grant N00014-11-1-0459 and a grant from the Packard Foundation.  B.\ R.\  is generously supported by ONR award N00014-11-1-0723 and NSF awards CCF-1139953 and CCF-1148243. A.A. and J.R. would like to thank William Mantzel and M. Salman Asif for discussions related to this paper.  \revise{This is a corrected version of the original paper which appeared in the {\em IEEE Transactions on Information Theory}, vol.\ 60, no.\ 3, 2014.  We would also like to thank Shuyang Ling and Thomas Strohmer for pointing out the error in the original paper.} }
}

\date{\today~ (Revised Version)}

\maketitle

\begin{abstract}
	We consider the problem of recovering two unknown vectors, $\ww$ and $\xx$, of length $L$ from their circular convolution.  We make the structural assumption that the two vectors are members of known subspaces, one with dimension $N$ and the other with dimension $K$.  Although the observed convolution is nonlinear in both $\ww$ and $\xx$, it is linear in the rank-1 matrix formed by their outer product $\ww\xx^*$.  This observation allows us to recast the deconvolution problem as low-rank matrix recovery problem from linear measurements, whose natural convex relaxation is a nuclear norm minimization program.
	
	We prove the effectiveness of this relaxation by showing that for ``generic'' signals, the program can deconvolve $\ww$ and $\xx$ exactly when the maximum of $N$ and $K$ is almost on the order of $L$.  That is, we show that if $\xx$ is drawn from a random subspace of dimension $N$, and $\ww$ is a vector in a subspace of dimension $K$ whose basis vectors are ``spread out'' in the frequency domain, then nuclear norm minimization recovers $\ww\xx^*$ without error.
	
	We discuss this result in the context of blind channel estimation in communications.  If we have a message of length $N$ which we code using a random $L\times N$ coding matrix, and the encoded message travels through an unknown linear time-invariant channel of maximum length $K$, then the receiver can recover both the channel response and the message when $L\gtrsim N+K$, to within constant and log factors.
\end{abstract}
\textbf{Index terms:} Blind deconvolution, matrix factorization, low-rank matrix, compressed sensing, channel estimation, rank-1 matrix, image deblurring, convex programming, and nuclear norm minimization.
\section{Introduction}

This paper considers a fundamental problem in signal processing and communications: we observe the convolution of two unknown signals, $\ww$ and $\xx$, and want to separate them.  We will show that this problem can be naturally relaxed as a semidefinite program (SDP), in particular, a nuclear norm minimization program.  We then use this fact in conjunction with recent results on recovering low-rank matrices from underdetermined linear observations to provide conditions under which $\ww$ and $\xx$ can be deconvolved exactly.  Qualitatively, these results say that if both $\ww$ and $\xx$ have length $L$, $\ww$ lives in a fixed subspace of dimension $K$ and is spread out in the frequency domain, and $\xx$ lives in a ``generic'' subspace chosen at random, then $\ww$ and $\xx$ are separable with high probability.

The general statement of the problem is as follows.  We will assume that the length $L$ signals live in known subspaces of $\R^L$ whose dimensions are $K$ and $N$.  That is, we can write
\begin{align*}
	\ww &= \Bmat\hh, \quad \hh\in\R^K\\
	\xx &= \Cmat\mm, \quad \mm\in\R^N
\end{align*}
for some $L\times K$ matrix $\Bmat$ and $L\times N$ matrix $\Cmat$.  The columns of these matrices provide bases for the subspaces in which $\ww$ and $\xx$ live; recovering $\hh$ and $\mm$, then, is equivalent to recovering $\ww$ and $\xx$.  

We observe the circular
convolution of $\ww$ and $\xx$:
\begin{equation}
	\label{eq:yconv}
	\yy = \ww \ast \xx,
	\quad\text{or}\quad
	y[\ell] = \sum_{\ell'=1}^{L}w[\ell']x[\ell-\ell'+1], 
\end{equation}
where the index $\ell-\ell'+1$ in the sum above is understood to be modulo $\{1,\ldots,L\}$.  It is clear that without structural assumptions on $\ww$ and $\xx$, there will not be a unique separation given the observations $\yy$.  But we will see that once we account for our knowledge that $\ww$ and $\xx$ lie in the span of the columns of $\Bmat$ and $\Cmat$, respectively, they can be uniquely separated in many situations.  This paper details a particular set of conditions under which the signals can be deconvolved tractably via convex programming.

\subsection{Notations}
Unless specified otherwise, we use uppercase bold, lowercase bold, and not bold letters for matrices, vectors, and scalars, respectively. For example, $\mtx{X}$ denotes a matrix, $\vct{x}$ represents a vector, and $x$ refers to a scalar. Calligraphic letters such as $\cA$ specify linear operators. The symbol $C$ refers to a constant number, which may not refer to the same number every time it is used. The notations $\|\cdot\|$, $\|\cdot\|_*$, and $\|\cdot\|_F$ denote the operator, nuclear, and Frobenius norms of the matrices, respectively. Furthermore, we will use $\|\cdot\|_2$, and $\|\cdot\|_1$ to represent the vector $\ell_2$, and $\ell_1$ norms.
\subsection{Matrix observations}
\label{sec:matrixobs}

We can break apart the convolution in \eqref{eq:yconv} by expanding $\xx$ as a linear combination of the columns $\Cmat_1,\ldots,\Cmat_N$ of $\Cmat$,
\begin{align*}
	\yy &= m(1)\ww\ast\Cmat_1 + m(2)\ww\ast\Cmat_2 + \cdots + m(N)\ww\ast\Cmat_N \\
	&= 
	\begin{bmatrix} 
		\operatorname{circ}(\Cmat_1) & \operatorname{circ}(\Cmat_2) & \cdots & \operatorname{circ}(\Cmat_N)
	\end{bmatrix}
	\begin{bmatrix}
		m(1)\ww \\ m(2)\ww \\ \vdots \\ m(N)\ww
	\end{bmatrix},
\end{align*}
where $\operatorname{circ}(\Cmat_n)$ corresponds to the $L\times L$ circulant matrix whose action corresponds to circular convolution with the vector $\Cmat_n$.  Expanding $\ww$ as a linear combination of the columns of $\Bmat$, this becomes
\begin{align}
	\label{eq:ymattime}
	\yy &= 
	\begin{bmatrix} 
		\operatorname{circ}(\Cmat_1)\Bmat & \operatorname{circ}(\Cmat_2)\Bmat & \cdots &
		\operatorname{circ}(\Cmat_N)\Bmat
	\end{bmatrix}
	\begin{bmatrix}
		m(1)\hh \\ m(2)\hh \\ \vdots \\ m(N)\hh
	\end{bmatrix}.
\end{align}

We will find it convenient to write \eqref{eq:ymattime} in the Fourier domain.  Let $\mtx{F}$ be the $L$-point normalized discrete Fourier transform (DFT) matrix
\[
	\mtx{F}(\omega,\ell) = \frac{1}{\sqrt{L}}\er^{-\j2\pi(\omega-1)(\ell-1)/L},
	\qquad 1\leq\omega,\ell\leq L.
\]
We will use $\hat{\Cmat}=\mtx{F}\Cmat$ for the $\Cmat$-basis transformed into the Fourier domain, and also $\hat{\Bmat} = \mtx{F}\Bmat$.  Then $\operatorname{circ}(\Cmat_n) = \mtx{F}^*\Delta_n\mtx{F}$, where $\Delta_n$ is a diagonal matrix constructed from the $n$th column of $\hat{\Cmat}$, $\Delta_n = \operatorname{diag}(\sqrt{L}\hat{\Cmat}_n)$, and \eqref{eq:ymattime} becomes
\begin{align}
	\label{eq:ymatfreq}
	\yh = \mtx{F}\yy &=
	\begin{bmatrix} 
		\Delta_1\hat{\Bmat} & \Delta_2\hat{\Bmat} & \cdots &
		\Delta_N\hat{\Bmat}
	\end{bmatrix}
	\begin{bmatrix}
		m(1)\hh \\ m(2)\hh \\ \vdots \\ m(N)\hh
	\end{bmatrix}.
\end{align}
Clearly, recovering $\vct{m}$ and $\vct{h}$ from $\yh$ is the same as recovering $\vct{x}$ and $\vct{w}$ from $\yy$.

The expansions \eqref{eq:ymattime} and \eqref{eq:ymatfreq} make it clear that while $\yy$ is a nonlinear combination of the coefficients $\hh$ and $\mm$, it is a {\em linear} combination of the entries of their outer product $\mtx{X}_0 = \hh\mm^*$.  We can pose the blind deconvolution problem as a linear inverse problem where we want to recover a $K\times N$ matrix from observations
\begin{equation}
	\label{eq:Adef}
	\yh = \cA(\mtx{X}_0),
\end{equation}
through a linear operator $\cA$ which maps $K\times N$ matrices to $\R^L$.  For $\cA$ to be invertible over all matrices, we need at least as many observations as unknowns, $L\geq NK$.  But since we know $\mtx{X}_0$ has special structure, namely that its rank is 1, we will be able to recover it from $L\ll NK$ under certain conditions on $\cA$.

As each entry of $\yh$ is a linear combination of the entries in $\hh\mm^*$, we can write them as trace inner products of different $K\times N$ matrices against $\hh\mm^*$.  Using $\hat{\vct{b}}_\ell\in\C^K$ for the $\ell$th column of $\hat{\Bmat}^*$ and $\hat{\vct{c}}_\ell\in\C^N$ as the $\ell$th row of $\sqrt{L}\hat{\Cmat}$, we can translate one entry in \eqref{eq:ymatfreq} as\footnote{As we are now manipulating complex numbers in the frequency domain, we will need to take a little bit of care with definitions.  Here and below, we use $\<\vct{u},\vct{v}\> = \vct{v}^*\vct{u} = \Tr{\vct{u}\vct{v}^*}$ for complex vectors $\vct{u}$ and $\vct{v}$.}
\begin{align}
	\nonumber
	\hat{y}(\ell) &= \hat{c}_\ell(1)m(1)\<\hh,\hat{\vct{b}}_\ell\> + \hat{c}_\ell(2)m(2)\<\hh,\hat{\vct{b}}_\ell\> + \cdots +
	\hat{c}_\ell(N)m(N)\<\hh,\hat{\vct{b}}_\ell\> \\
	\nonumber
	&= \<\hat{\vct{c}}_\ell,\mm\>\<\hh,\hat{\vct{b}}_\ell\> \\
	&= 
	\label{eq:Al}
	\Tr{\mtx{A}_\ell^*(\hh\mm^*)},\quad\text{where}\quad \mtx{A}_\ell = \hat{\vct{b}}_\ell\hat{\vct{c}}_\ell^*.
\end{align}

Now that we have seen that separating two signals given their convolution can be recast as a matrix recovery problem, we turn our attention to a method for solving it.  In the next section, we argue that a natural way to recover the expansion coefficients $\mm$ and $\hh$ from measurements of the form \eqref{eq:ymatfreq} is using nuclear norm minimization.

\subsection{Convex relaxation}

The previous section demonstrated how the blind deconvolution problem can be recast as a linear inverse problem over the (nonconvex) set of rank-$1$ matrices.  A common heuristic to convexify the problem is to use the {\em nuclear norm}, the sum of the singular values of a matrix, as a proxy for rank \cite{fazel02ma}.  In this section, we show how this heuristic provides a natural convex relaxation. 

Given $\yh\in\C^L$, our goal is to find $\hh\in\R^K$ and $\mm\in\R^N$ that are consistent with the observations in \eqref{eq:ymatfreq}.  Making no assumptions about either of these vectors other than the dimension, the natural way to choose between multiple feasible points is using least-squares.  We want to solve
\begin{equation}
	\label{eq:least-squares-opt}
	\min_{\vct{u},\vct{v}}~\|\vct{u}\|_2^2 + \|\vct{v}\|_2^2 \quad 
	\mbox{subject to}\quad 
	\yh(\ell) = \<\hat{\vct{c}}_\ell,\vct{u}\>\<\vct{v},\hat{\vct{b}}_\ell\>,
	\quad \ell=1,\ldots,L.
\end{equation}
This is a non-convex quadratic optimization problem.  The cost function is convex, but the quadratic equality constraints mean that the feasible set is non-convex.  A standard approach to solving such quadratically constrained quadratic programs is to use duality (see for example \cite{Nesterov00}).  A standard calculation shows that the dual of \eqref{eq:least-squares-opt} is the semi-definite program (SDP)
\begin{align}
	\label{eq:sdp-dual}
	&\max_{\vct{\lambda}} \quad \operatorname{Re}\<\hat{\vct{y}},\vct{\lambda}\> \\
	\nonumber
	&\quad \text{subject to} \quad
	\begin{bmatrix}
		\mtx{I} & \sum_{\ell=1}^L \lambda(\ell) \mtx{A}_\ell\\
		\sum_{\ell=1}^L \lambda(\ell)^* \mtx{A}_\ell^* & \mtx{I}
	\end{bmatrix}
	\succeq 0,
\end{align}
with the $\mtx{A}_\ell=\hat{\vct{b}}_\ell\hat{\vct{c}}_\ell^*$ defined as in the previous section.  Taking the dual again will give us a convex program which is in some sense as close to \eqref{eq:least-squares-opt} as possible.  The dual SDP of \eqref{eq:sdp-dual} is \cite{recht10gu}
\begin{align}
	\label{eq:sdp-dual-dual}
	&\min_{\mtx{W}_1,\mtx{W}_2,\mtx{X}}~
	 \tfrac{1}{2} \Tr{\mtx{W}_1} + \tfrac{1}{2} \Tr{\mtx{W}_2} \\\nonumber
	&\quad \text{subject to} \quad
	\begin{bmatrix}
		\mtx{W}_1 &  \mtx{X}\\
		\mtx{X}^* & \mtx{W}_2
	\end{bmatrix}
	\succeq 0 \\ \nonumber
	& \qquad\qquad\qquad~~~\hat{\vct{y}} = \cA(\mtx{X}),
\end{align} 
which is equivalent to 
\begin{equation}
	\label{eq:nuclear-opt}
	\begin{array}{ll} \mbox{min} & \|\mtx{X}\|_*\\
	\mbox{subject to}
	& \hat{\vct{y}} = \cA(\mtx{X})
	\end{array}.
\end{equation}
That is, the nuclear norm heuristic is the ``dual-dual'' relaxation of the intuitive but non-convex least-squares estimation problem \eqref{eq:least-squares-opt}. 

Our technique for untangling $\ww$ and $\xx$ from their convolution, then, is to take the Fourier transform of the observation $\yy=\ww\ast\xx$ and use it as constraints in the program \eqref{eq:nuclear-opt}.  That \eqref{eq:nuclear-opt} is the natural relaxation is fortunate, as an entire body of literature in the field of {\em low-rank recovery} has arisen in the past five years that is devoted to analyzing problems of the form \eqref{eq:nuclear-opt}.  
We will build on some of the techniques from this area in establishing the theoretical guarantees for when \eqref{eq:nuclear-opt} is provably effective presented in the next section.

There have also been tremendous advances in algorithms for computing the solution to optimization problems of both types \eqref{eq:least-squares-opt} and \eqref{eq:nuclear-opt}.  In Section~\ref{sec:largescale}, we will briefly detail one such technique we used to solve \eqref{eq:least-squares-opt} on a relatively large scale for a series of numerical experiments in Sections~\ref{sec:phase}--\ref{sec:imagedeblur}.
%
%
\subsection{Main results}
\label{sec:mainresult}

We can guarantee the effectiveness of \eqref{eq:nuclear-opt} for relatively large subspace dimensions $K$ and $N$ when $\Bmat$ is incoherent in the Fourier domain, and when $\Cmat$ is generic.  Before presenting our main analytical result, Theorem~\ref{th:main} below, we will carefully specify our models for $\Bmat$ and $\Cmat$, giving a concrete definition to the terms `incoherent' and `generic' in the process.

\revise{
We will assume that the signal $\ww$ is time-limited to $Q$, where $K\leq Q \leq L$.  This means that the last $L-Q$ rows of $\Bmat$ are zero (see \eqref{eq:Bembed} below, for example).  We will also assume, without additional loss of generality, that the columns of $\Bmat$ are orthonormal:
}
\begin{equation}
	\label{eq:Biso}
	\Bmat^*\Bmat = \hat{\Bmat}^*\hat{\Bmat} = \sum_{\ell=1}^L \hat{\vct{b}}_\ell\hat{\vct{b}}_\ell^* = \I,
\end{equation}
where the $\hat{\vct{b}}_\ell$ are the columns of $\hat{\Bmat}^*$, as in \eqref{eq:Al}.  Our results will be most powerful when $\Bmat$ is diffuse in the Fourier domain, meaning that the $\hat{\vct{b}}_\ell$ all have similar norms.  We will use the (in)coherence parameter $\mu_{\max}$ to quantify the degree to which the columns of $\Bmat$ are jointly concentrated in the Fourier domain:
\begin{equation}
	\label{eq:mu1def}
	\mu_{\max}^2 = \frac{L}{K}\max_{1\leq\ell\leq L} \|\hat{\vct{b}}_\ell\|^2_2.
\end{equation}
From \eqref{eq:Biso}, we know that the total energy in the rows of $\hat{\Bmat}$ is $\sum_{\ell=1}^L\|\hat{\vct{b}}_\ell\|^2_2 = K$, and that $\|\hat{\vct{b}}_\ell\|^2_2\leq 1$.  Thus $1\leq\mu_{\max}^2\leq L/K$, with the coherence taking its minimum value when the energy in $\hat{\Bmat}$ is evenly distributed throughout its rows, and its maximum value when the energy is completely concentrated on $K$ of the $L$ rows.  Our results will also depend on the minimum of these norms
\begin{equation}
	\label{eq:muLdef}
	\mu_{\min}^2 = \frac{L}{K}\min_{1\leq\ell\leq L} \|\hat{\vct{b}}_\ell\|^2_2.
\end{equation}
We will always have $0\leq\mu_{\min}^2\leq 1$ and $\mu_{\min}^2\leq\mu_{\max}^2$.  An example of a maximally incoherent $\Bmat$, where $\mu_{\max}^2=\mu_{\min}^2=1$, is 
\begin{equation}
	\label{eq:Bembed}
	\Bmat = 
	\begin{bmatrix}
		\I_K \\ \mtx{0}
	\end{bmatrix},
\end{equation}
where $\I_K$ is the $K\times K$ identity matrix.  In this case, the range of $\Bmat$ consists of ``short'' signals whose first $K$ terms may be non-zero.  The matrix $\hat{\Bmat}$ is simply the first $K$ columns of the discrete Fourier matrix, and so every entry has the same magnitude.

Our analytic results also depend on how diffuse the particular signal we are trying to recover $\ww = \Bmat\hh$ is in the Fourier domain.  With $\hat{\ww} = \mtx{F}\ww = \hat{\Bmat}\hh$, we define
\begin{equation}
	\label{eq:muhdef}
	\mu_h^2 = L\max_{1\leq\ell\leq L} |\hat{w}(\ell)|^2 = 
	 L\cdot\max_{1\leq\ell\leq L} |\<\vct{h},\hat{\vct{b}}_\ell\>|^2.
\end{equation}
Note that it is always the case that $1\leq\mu_h^2\leq \mu_{\max}^2K$. The lower bound follows from the Cauchy-Schwarz inequality, i.e.,
\[
\mu_h^2 \leq L\cdot\max_{1\leq\ell\leq L}\|\hat{\vct{b}}_\ell\|_2^2\|\hh\|_2^2 \leq K \mu_{\max}^2,
\]
where the last inequality is the result of \eqref{eq:mu1def}, and $\|\hh\|_2 = 1$. To show the lower bound of $\mu_h^2$, take a summation over $\ell$ on both sides
\[
\sum_{\ell = 1}^L \mu_h^2 = L\sum_{\ell = 1}^L\max_{1\leq\ell\leq L} |\<\vct{h},\hat{\vct{b}}_\ell\>|^2,
\]
which means
\begin{align*}
\mu_h^2 &\geq \sum_{\ell = 1}^L|\<\vct{h},\hat{\vct{b}}_\ell\>|^2 = \|\hh\|_2^2 = 1,
\end{align*}
where the equality holds because $\hat{\mtx{B}}$ is a matrix with orthonormal columns. As an illustration, if $\mtx{B}$ is as in \eqref{eq:Bembed} (i.e., $\hat{\mtx{B}}$ is the partial Fourier matrix), then $\mu_h^2$ quantifies the dispersion of $\ww$ in the frequency domain. In particular, if the signal $\ww$ is more or less ``flat'' in the frequency domain, then $\mu_h^2$ will be a small constant. 
%

With the subspace in which $\ww$ resides fixed, we will show that separating $\ww$ and $\xx=\Cmat\mm$ will be possible for ``most'' choices of the subspace $\Cmat$ of a certain dimension $N$ --- we do this by choosing the subspace at random from an isotropic distribution, and show that \eqref{eq:nuclear-opt} is successful with high probability.  For the remainder of the paper, we will take the entries of $\Cmat$ to be independent and identically distributed random variables,
\[
	C[\ell,n] ~\sim~ \mathrm{Normal}(0,L^{-1}).
\]
In the Fourier domain, the entries of $\hat{\Cmat}$ will be complex Gaussian, and its columns will have conjugate symmetry (since the columns of $\Cmat$ are real).  Specifically, the rows of $\hat{\Cmat}$ will be distributed as\footnote{We are assuming here that $L$ is even; the argument is straightforward to adapt to odd $L$.} 
\begin{align}
	\label{eq:normalck}
	\hat{\vct{c}}_\ell &\sim
	\begin{cases}
		\mathrm{Normal}(0, \I) & \ell = 1 \\
		\mathrm{Normal}(0,2^{-1/2}\I) + \j\mathrm{Normal}(0,2^{-1/2}\I) & \ell = 2,\ldots,L/2+1
	\end{cases}, \\
	\nonumber
	\hat{\vct{c}}_\ell &= \hat{\vct{c}}_{L-\ell+2},\quad\text{for}~~\ell=L/2+2,\ldots,L.	
\end{align}
Similar results to those we present here most likely hold for other models for $\Cmat$.  
The key property that our analysis hinges critically on is the rows $\hat{\vct{c}}_\ell$ of $\hat{\Cmat}$ are {\em independent} --- this allows us to apply recently developed tools for estimating the spectral norm of a sum of independent random linear operators.  

We now state our main result:
\begin{thm}
	\label{th:main}
	\revise{Fix $\alpha\geq 1$.  Let $\mtx{B}$ be a deterministic $L\times K$ matrix satisfying \eqref{eq:Biso} and whose last $L-Q$ rows are zero with\footnote{Throughout the manuscript we will use the notation $C_\alpha$ to denote a constant which depends only on the probability exponent $\alpha$.  Its value may be different from instantiation to instantiation.}
	\[
		Q~\geq~C_\alpha\cdot M\log(L)\log M,\quad M = \max(\mu_{\mathrm{max}}^2K,\mu_h^2N),
	\]
	and $L$ as an integer multiple of $Q$ with $L/Q\geq \log(C_\alpha'\sqrt{N\log L})/\log 2$.  Let $\mtx{C}$ be an $L\times N$ Gaussian random matrix drawn as in \eqref{eq:normalck}, and set $\vct{w}=\mtx{B}\vct{h}$ and $\vct{x}=\mtx{C}\vct{m}$ for arbitrary basis coefficients $\vct{h}\in\R^K$, $\vct{m}\in\R^N$.  Let the coherence parameters of the basis $\mtx{B}$, and expansion coefficients $\vct{h}$ be as defined in   \eqref{eq:mu1def}, and \eqref{eq:muhdef} above.    Then there exists a constant $C_\alpha''=O(\alpha)$ depending only on $\alpha$, such that if
	\begin{equation}
		\label{eq:Mbound}
		\max\left(\mu_{\max}^2K,~\mu_h^2N\right) ~\leq~
		\frac{L}{C_\alpha''\log^3 L},
	\end{equation}
	then $\mtx{X}_0=\vct{h}\vct{m}^*$ is the unique solution to \eqref{eq:nuclear-opt} with probability $1-O(L^{-\alpha+1})$, and we can recover both $\vct{w}$ and $\vct{x}$ (within a scalar multiple) from $\vct{y}=\vct{w}*\vct{x}$. }
\end{thm}

When the coherences are low, meaning that $\mu_{\max}$ and $\mu_h$ are on the order of a constant, then \eqref{eq:Mbound} is tight to within a logarithmic factor, as we always have $\max(K,N)\leq L$.

While Theorem~\ref{th:main} establishes theoretical guarantees for specific types of subspaces specified by $\Bmat$ and $\Cmat$, we have found that treating blind deconvolution as a linear inverse problem with a rank constraint leads to surprisingly good results in many situations; see, for example, the image deblurring experiments in Section~\ref{sec:imagedeblur}.

The recovery can also be made stable in the presence of noise, as described by our second theorem:
\begin{thm}
	\label{th:stability}
	Let $\mtx{X}_0=\hh\mm^*$ and $\cA$ as in \eqref{eq:Adef} with $N,K,L$ obeying \eqref{eq:Mbound}.  We observe 
	\[
		\hat{\yy} = \cA(\mtx{X}_0) + \vct{z},
	\] 
	where $\vct{z}\in\R^L$ is an unknown noise vector with $\|\vct{z}\|_2\leq\delta$, and estimate $\mtx{X}_0$ by solving
	\begin{equation}
		\label{eq:nnrelaxed}
		\begin{array}{ll} \min & \|\mtx{X}\|_*\\
		\mathrm{subject~to}
		& \|\hat{\vct{y}} - \cA(\mtx{X})\|_2\leq\delta
		\end{array}.
	\end{equation}
	Let $\lambda_{\mathrm{min}}$ be the smallest non-zero eigenvalue of $\cA\cA^*$, and $\lambda_{\mathrm{max}}$ be the largest.  Then with the same probability $1-L^{-\alpha+1}$ as in Theorem~\ref{th:main} the solution $\tilde{\mtx{X}}$ to \eqref{eq:nnrelaxed} will obey
	\begin{equation}
		\label{eq:stablerecovery}
		\|\tilde{\mtx{X}} - \mtx{X}_0\|_F ~\leq~
		C\,\frac{\lambda_{\max}}{\lambda_{\min}}\sqrt{\min(K,N)}\,\delta,
	\end{equation}
	for a fixed constant $C$.
\end{thm}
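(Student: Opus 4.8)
The plan is to leverage the exact-recovery machinery from Theorem~\ref{th:main} together with the existence of a valid \emph{dual certificate}, converting it into a stability bound via the standard noise-robustness argument for nuclear norm minimization. First I would recall the geometric picture underlying Theorem~\ref{th:main}: exact recovery hinges on constructing an approximate dual certificate $\mtx{Y}$ in the range of $\cA^*$ whose projection onto the tangent space $T$ at $\mtx{X}_0 = \hh\mm^*$ is close to $\mtx{U}\mtx{V}^*$ (the sign matrix of $\mtx{X}_0$) and whose projection onto $T^\perp$ has operator norm strictly below $1$. The same certificate, which already exists on the event of probability $1 - O(L^{-\alpha+1})$, is the engine that drives the stable recovery guarantee; I would not reconstruct it but simply invoke its existence on that event.

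The core steps are as follows. Write $\tilde{\mtx{X}} = \mtx{X}_0 + \mtx{H}$ where $\mtx{H}$ is the error, and split $\mtx{H} = \PT(\mtx{H}) + \PTc(\mtx{H})$. From optimality of $\tilde{\mtx{X}}$ and feasibility of $\mtx{X}_0$ (both satisfy the relaxed constraint $\|\hat{\yy} - \cA(\cdot)\|_2 \le \delta$), the nuclear norm inequality $\|\tilde{\mtx{X}}\|_* \le \|\mtx{X}_0\|_*$ combined with the certificate yields a bound on the ``off-tangent'' piece: one obtains $\|\PTc(\mtx{H})\|_*$ controlled by $\|\PT(\mtx{H})\|_F$ (up to constants coming from the gap between the certificate's $T^\perp$-norm and $1$). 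Next I would use the data-fidelity constraint: since both $\tilde{\mtx{X}}$ and $\mtx{X}_0$ lie within $\delta$ of $\hat{\yy}$ in the measurement space, the triangle inequality gives $\|\cA(\mtx{H})\|_2 \le 2\delta$. The remaining task is to show $\cA$ is well-conditioned \emph{when restricted to the tangent space} $T$ — that is, a restricted-isometry-type lower bound $\|\cA(\PT(\mtx{H}))\|_2 \ge c\,\lambda_{\min}^{1/2}\|\PT(\mtx{H})\|_F$ — so that control on $\|\cA(\mtx{H})\|_2$ translates into control on $\|\PT(\mtx{H})\|_F$. Combining the tangent-space conditioning with the earlier bound on $\|\PTc(\mtx{H})\|_*$ and the fact that $\mtx{H}$ has rank at most $\rank{\tilde{\mtx{X}}} + \rank{\mtx{X}_0}$ on $T$ (so $\|\PT(\mtx{H})\|_* \le \sqrt{2}\,\|\PT(\mtx{H})\|_F$, and the $\sqrt{\min(K,N)}$ factor enters through bounding $\|\mtx{H}\|_F$ by $\|\mtx{H}\|_*$ on the low-rank part) gives the final Frobenius bound \eqref{eq:stablerecovery}, with the ratio $\lambda_{\max}/\lambda_{\min}$ emerging from the conditioning of $\cA$ on $T$ versus the crude upper bound $\|\cA(\mtx{H})\|_2 \le \lambda_{\max}^{1/2}\|\mtx{H}\|_F$ used on the complementary piece.

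The main obstacle I expect is establishing the restricted lower bound for $\cA$ on the tangent space $T$, i.e.\ that $\cA$ does not annihilate or severely shrink matrices in $T$. This is precisely the ingredient that does not follow formally from the dual certificate alone; it requires a separate concentration argument showing that $\cA^*\cA$ acts nearly isometrically on $T$, which is where the eigenvalues $\lambda_{\min}, \lambda_{\max}$ of $\cA\cA^*$ naturally appear and where the Gaussian structure of $\Cmat$ via \eqref{eq:normalck} and the incoherence hypotheses \eqref{eq:Mbound} must be used again. I anticipate that this tangent-space injectivity is proven as part of (or as a corollary to) the certificate construction in Theorem~\ref{th:main} — likely an ``approximate isometry on $T$'' lemma established en route — so the stability proof should be able to cite it directly rather than reprove it. If that restricted-isometry estimate is available as a lemma, the remainder is the routine convex-geometry bookkeeping sketched above; if it is not, deriving it with the correct dependence on $\lambda_{\min}$ would be the real technical work.
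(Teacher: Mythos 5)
Your proposal diverges from the paper in its very first structural choice, and that choice creates a genuine gap. You split the error as $\mtx{H}=\PT(\mtx{H})+\PTc(\mtx{H})$ and assert that the cone condition $\|\tilde{\mtx{X}}\|_*\leq\|\mtx{X}_0\|_*$ together with the dual certificate ``yields $\|\PTc(\mtx{H})\|_*$ controlled by $\|\PT(\mtx{H})\|_F$.'' That step is only valid when the error lies in $\operatorname{Null}(\cA)$: the certificate $\mtx{Y}=\cA^*\vct{\lambda}$ enters the subgradient inequality through the pairing $\<\mtx{Y},\mtx{H}\>_F=\<\vct{\lambda},\cA(\mtx{H})\>$, which vanishes identically for null-space errors but not here, since in the noisy problem one only has $\|\cA(\mtx{H})\|_2\leq 2\delta$. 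What survives is an extra term of size up to $2\delta\|\vct{\lambda}\|_2$, so your bound on the off-tangent piece requires control of the norm of the dual multiplier $\vct{\lambda}$ --- a quantity that the golfing construction in Section~\ref{sec:golfing} never bounds (it only controls $\|\PT(\mtx{Y})-\hh\mm^*\|_F$ and $\|\PTc(\mtx{Y})\|$), and that your proposal never mentions. As written, the chain of inequalities cannot close.

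The paper sidesteps exactly this issue by decomposing the error not along $T$ but along the row space and null space of $\cA$: $\mtx{\xi}=\mathcal{P}_{\cA}(\mtx{\xi})+\mathcal{P}_{\cA^\perp}(\mtx{\xi})$. The certificate machinery (the consequence of Proposition~\ref{prop:dualdescent}, plus \eqref{eq:PTZbound}) is applied only to $\mathcal{P}_{\cA^\perp}(\mtx{\xi})\in\operatorname{Null}(\cA)$, where the pairing against $\mtx{Y}$ is automatically zero, giving $\|\PTc\mathcal{P}_{\cA^\perp}(\mtx{\xi})\|_*\leq C\|\mathcal{P}_{\cA}(\mtx{\xi})\|_*$; the row-space piece is then killed directly by the pseudo-inverse, $\|\mathcal{P}_{\cA}(\mtx{\xi})\|_F\leq\lambda_{\min}^{-1}\|\cA(\mtx{\xi})\|_2\leq 2\delta/\lambda_{\min}$. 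This also corrects two mislocations in your sketch: $\lambda_{\min}$ does not appear in any tangent-space restricted-isometry bound (Corollary~\ref{cor:Tcond} gives the absolute constant $2^{-1/2}$ there, already established for Theorem~\ref{th:main}, so no ``real technical work'' with $\lambda_{\min}$ dependence is needed); and the $\sqrt{\min(K,N)}$ factor does not come from the rank-2 structure of $T$, but from converting $\|\mathcal{P}_{\cA}(\mtx{\xi})\|_*$ to $\|\mathcal{P}_{\cA}(\mtx{\xi})\|_F$, since the row-space component of the error can have full rank. If you want to rescue your $T$-based route, you would have to augment the golfing scheme to track $\|\vct{\lambda}\|_2$ for the constructed certificate, which is additional work the paper's decomposition renders unnecessary.
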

The program in \eqref{eq:nnrelaxed} is also convex, and is solved with numerical techniques similar to the equality constrained program in \eqref{eq:nuclear-opt}.  The performance bound relies on the conditioning of $\cA\cA^*$.  Lemma~\ref{lm:cAcAcond} below tells us that when $\cA$ is sufficiently underdetermined,
\begin{equation}
	\label{eq:stableover}
	NK ~\geq~ \frac{C_\alpha}{\mu^2_{\mathrm{min}}}\, L\log^2 L,
\end{equation}
then with high probability we can replace the ratio of eigenvalues in \eqref{eq:stablerecovery} with the ratio of coherence parameters for $\hat{\Bmat}$, as
\[
	\frac{\lambda_{\max}}{\lambda_{\min}} ~\sim~ \frac{\mu_{\mathrm{max}}}{\mu_{\mathrm{min}}}.  
\] 
For $L$ large enough, there will be many $N$ and $K$ which satisfy \eqref{eq:stableover} and \eqref{eq:Mbound} simultaneously.



In the end, we are interested in how well we recover $\xx$ and $\ww$.  The stability result for $\mtx{X}_0$ can easily be extended to a guarantee for the two unknown vectors.
\begin{cor}	
	Let $\tilde{\sigma_1}\tilde{\vct{u}_1}\tilde{\vct{v}_1}$ be the best rank-1 approximation to $\tilde{\mtx{X}}$, and set $\tilde{\hh}=\sqrt{\tilde{\sigma_1}}\tilde{\vct{u}_1}$ and $\tilde{\mm}=\sqrt{\tilde{\sigma_1}}\tilde{\vct{v}_1}$.  Set $\tilde{\delta}=\|\tilde{\mtx{X}} - \mtx{X}_0\|_F$.  Then there exists a constant $C$ such that
	\[
		\|\hh - \alpha\tilde{\hh}\|_2 ~\leq~ C\min\left(\tilde{\delta}/\|\hh\|_2,\|\hh\|_2\right),
		\qquad
		\|\mm - \alpha^{-1}\tilde{\mm}\|_2 ~\leq~ C\min\left(\tilde{\delta}/\|\mm\|_2,\|\mm\|_2\right).
	\]
	for some scalar multiple $\alpha$.
\end{cor}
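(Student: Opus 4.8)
The plan is to strip the problem down to a perturbation statement about the rank-one factorizations of two nearby rank-one matrices, and then run standard singular-subspace perturbation theory, using the nuclear-norm optimality of $\tilde{\mtx{X}}$ to control the high-noise (saturated) regime. First I would replace $\tilde{\mtx{X}}$ by its best rank-one approximation $\tilde{\mtx{X}}_1 := \tilde{\sigma}_1\tilde{\vct{u}}_1\tilde{\vct{v}}_1^* = \tilde{\hh}\tilde{\mm}^*$. Since $\mtx{X}_0=\hh\mm^*$ is itself rank one and $\tilde{\mtx{X}}_1$ is the Frobenius-optimal rank-one approximant of $\tilde{\mtx{X}}$, we have $\|\tilde{\mtx{X}}_1-\tilde{\mtx{X}}\|_F\leq\|\mtx{X}_0-\tilde{\mtx{X}}\|_F=\tilde{\delta}$, so the triangle inequality gives $\|\tilde{\hh}\tilde{\mm}^*-\hh\mm^*\|_F\leq 2\tilde{\delta}$. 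This removes $\tilde{\mtx{X}}$ from the picture and leaves the clean question: if two rank-one matrices lie within $2\tilde{\delta}$ in Frobenius norm, how close are their scale-normalized factors?

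Write $\sigma_1=\|\hh\|_2\|\mm\|_2=\|\mtx{X}_0\|_F$ for the single nonzero singular value of $\mtx{X}_0$, with unit singular vectors $\hh/\|\hh\|_2$ and $\mm/\|\mm\|_2$, and set $\mtx{E}=\tilde{\hh}\tilde{\mm}^*-\hh\mm^*$. Weyl's inequality gives $|\tilde{\sigma}_1-\sigma_1|\leq\|\mtx{E}\|_F\leq 2\tilde{\delta}$, and the Wedin $\sin\Theta$ theorem bounds the rotation of the leading singular vectors as $\|\hh/\|\hh\|_2-\tilde{\vct{u}}_1\|_2\lesssim\|\mtx{E}\|_F/\sigma_1$, and likewise for $\tilde{\vct{v}}_1$, once the global sign of the singular-vector pair is fixed to align. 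I would then commit to the \emph{single} scalar $\alpha=\|\hh\|_2/\sqrt{\tilde{\sigma}_1}$, which forces $\alpha\tilde{\hh}=\|\hh\|_2\,\tilde{\vct{u}}_1$ and hence $\|\hh-\alpha\tilde{\hh}\|_2=\|\hh\|_2\,\|\hh/\|\hh\|_2-\tilde{\vct{u}}_1\|_2\lesssim\|\hh\|_2\,\tilde{\delta}/\sigma_1=\tilde{\delta}/\|\mm\|_2$. For the other factor, $\alpha^{-1}\tilde{\mm}=(\tilde{\sigma}_1/\|\hh\|_2)\tilde{\vct{v}}_1$ picks up both the directional error and the singular-value mismatch $|\tilde{\sigma}_1-\sigma_1|/\|\hh\|_2$, which combine through $\sin\Theta$ and Weyl into $\|\mm-\alpha^{-1}\tilde{\mm}\|_2\lesssim\tilde{\delta}/\|\hh\|_2$. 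Thus both factor errors are the relative error $\lesssim\tilde{\delta}/\sigma_1$ scaled by the norm of the corresponding factor, which is the claimed form.

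The $\min$ encodes high-noise saturation, and I would obtain it automatically rather than by splitting into cases: since $\hh/\|\hh\|_2$ and $\tilde{\vct{u}}_1$ are unit vectors, $\|\hh/\|\hh\|_2-\tilde{\vct{u}}_1\|_2\leq\min(2,\,C\tilde{\delta}/\sigma_1)$, so the $\hh$-bound reads $\min(2\|\hh\|_2,\,C\tilde{\delta}/\|\mm\|_2)$, and similarly for $\mm$. The only term that does not self-cap this way is the magnitude mismatch $|\sigma_1-\tilde{\sigma}_1|/\|\hh\|_2$ in the $\mm$-bound: controlling it by $\|\mm\|_2$ in the saturated regime requires $\tilde{\sigma}_1\lesssim\sigma_1$, which is \emph{not} free from Weyl alone when $\tilde{\delta}$ is large. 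Here I would invoke that $\tilde{\mtx{X}}$ solves the nuclear-norm program \eqref{eq:nnrelaxed} for which $\mtx{X}_0$ is feasible (as $\|\vct{z}\|_2\leq\delta$), so $\tilde{\sigma}_1\leq\|\tilde{\mtx{X}}\|_*\leq\|\mtx{X}_0\|_*=\sigma_1$. This caps the spurious leading singular value by $\sigma_1$, makes $\alpha^{-1}\tilde{\mm}$ have norm at most $\|\mm\|_2$, and closes the saturated branch.

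I expect the main obstacle to be exactly this magnitude term: making one scalar $\alpha$ serve both inequalities while keeping $\|\alpha^{-1}\tilde{\mm}\|_2$ (and $\|\alpha\tilde{\hh}\|_2$) bounded by the true factor norms even when $\tilde{\delta}$ is large, since an uncontrolled $\tilde{\sigma}_1$ would blow up one factor while the scaling freedom shrinks the other. Recognizing that nuclear-norm optimality supplies the needed bound $\tilde{\sigma}_1\leq\sigma_1$ is the crux; the remaining steps are the routine $\sin\Theta$/Weyl estimates above.
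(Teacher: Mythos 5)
Your proposal is correct and follows essentially the same route as the paper: the paper gives no self-contained proof but defers to the latter part of Theorem~1.2 of \cite{candes12ph}, whose argument is precisely yours---pass to the best rank-one approximation, apply Weyl and Wedin $\sin\Theta$ perturbation bounds to the factors, and use feasibility/optimality ($\tilde{\sigma}_1 \leq \|\tilde{\mtx{X}}\|_* \leq \|\mtx{X}_0\|_*$, the analogue of the trace cap in PhaseLift) to control the leading singular value in the saturated regime, which you correctly identified as the crux. One minor remark: your bounds come out with $\|\hh\|_2$ and $\|\mm\|_2$ swapped in the denominators relative to the corollary as literally stated (yours is the scale-invariant form), but the two coincide under the normalization $\|\hh\|_2=\|\mm\|_2=1$ in force throughout the paper's analysis.
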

Proof of this corollary follows the exact same line of reasoning as the later part of Theorem 1.2 in \cite{candes12ph}.
\subsection{Relationship to phase retrieval and other quadratic problems}

Blind deconvolution of $\ww\ast\xx$, as is apparent from \eqref{eq:yconv}, is equivalent to solving a system of quadratic equations in the entries of $\ww$ and $\xx$.  The discussion in Section~\ref{sec:matrixobs} shows how this system of quadratic equations can be recast as a linear set of equations with a rank constraint.  In fact, this same recasting can be used for any system of quadratic equations in $\ww$ and $\xx$.  The reason is simple: taking the outer product of the concatenation of $\ww$ and $\xx$ produces a rank-1 matrix that contains all the different combinations of entries of $\ww$ multiplied with each other and multiplied by entries in $\xx$:
\begin{equation}
	\label{eq:rank1lift}
	\begin{bmatrix} \ww \\ \xx \end{bmatrix}
	\begin{bmatrix} \ww^* & \xx^* \end{bmatrix}
	=
	\left[
	\begin{array}{cccc|cccc}
		w[1]^2 & w[1]w[2] & \cdots & w[1]w[L] & w[1]x[1] & w[1]x[2] & \cdots & w[1]x[L]\\
		w[2]w[1] & w[2]^2 & \cdots & w[2]w[L] & w[2]x[1] & w[2]x[2] & \cdots & w[2]x[L]\\
		\vdots & & & & \vdots & & \vdots \\
		w[L]w[1] & w[L]w[2] & \cdots & w[L]^2 & w[L]x[1] & w[L]x[2] & \cdots & w[L]x[L]\\ \hline
		x[1]w[1] & x[1]w[2] & \cdots & x[1]w[L] & x[1]^2 & x[1]x[2] & \cdots & x[1]x[L]\\
		x[2]w[1] & x[2]w[2] & \cdots & x[2]w[L] & x[2]x[1] & x[2]^2 & \cdots & x[2]x[L]\\
		\vdots & & & & \vdots & & \vdots \\
		x[L]w[1] & x[L]w[2] & \cdots & x[L]w[L] & x[L]x[1] & x[L]x[2] & \cdots & x[L]^2
	\end{array}
	\right].
\end{equation}
Then any quadratic equation can be written as a linear combination of the entries in this matrix, and any system of equations can be written as a linear operator acting on this matrix.  For the particular problem of blind deconvolution, we are observing sums along the skew-diagonals of the matrix in the upper right-hand (or lower left-hand) quadrant.  Incorporating the subspace constraints allows us to work with the smaller $K\times N$ matrix $\hh\mm^*$, but this could also be interpreted as adding additional linear constraints on the matrix in \eqref{eq:rank1lift}.

Recent work on {\em phase retrieval} \cite{candes12ph} has used this same methodology of ``lifting'' a quadratic problem into a linear problem with a rank constraint to show that a vector $\vct{w}\in\R^N$ can be recovered from $O(N\log N)$ measurements of the form $|\<\vct{w},\vct{a}_n\>|^2$ for $\vct{a}_n$ selected uniformly at random from the unit sphere.  In this case, the measurements are being made entirely in the upper left-hand (or lower-right hand) quadrant in \eqref{eq:rank1lift}, and the measurements in \eqref{eq:Al} have the form $\mtx{A}_n = \vct{a}_n\vct{a}_n^*$.  In fact, another way to interpret the results in \cite{candes12ph} is that if a signal of length $L$ is known to live in a generic subspace of dimension $\sim L/\log L$, then it can be recovered from an observation of a convolution with itself.  Phase retrieval using convex programming was also explored in \cite{moravec07co,chan08te}

In the current work, we are considering a non-symmetric rank-1 matrix being measured by matrices $\hat{\vct{b}}_\ell\hat{\vct{c}}_\ell^*$ formed by the outer product of two different vectors, one of which is random, and one of which is fixed.  Another way to cast the problem, which perhaps brings these differences into sharper relief, is that we are measuring the symmetric matrix in \eqref{eq:rank1lift} by taking inner products against rank-two matrices $\frac{1}{2}\left(\begin{bmatrix}\hat{\vct{b}}_\ell \\ \vct{0} \end{bmatrix}\begin{bmatrix} \vct{0} & \hat{\vct{c}}_\ell^*\end{bmatrix} + \begin{bmatrix} \vct{0} \\ \hat{\vct{c}}_\ell\end{bmatrix}\begin{bmatrix} \hat{\vct{b}}_\ell^* & \vct{0} \end{bmatrix}\right)$.  These seemingly subtle differences lead to a considerably different mathematical treatment.

\subsection{Application: Multipath channel protection using random codes}
\label{sec:channelprotect}

The results in Section~\ref{sec:mainresult} have a direct application in the context of channel coding for transmitting a message over an unknown multipath channel.  The problem is illustrated in Figure~\ref{fig:channelprotect}.  A message vector $\mm\in\R^N$ is encoded through an $L\times N$ encoding matrix $\mtx{C}$.  The protected message $\xx=\mtx{C}\mm$ travels through a channel whose impulse response is $\ww$.  The receiver observes $\vct{y} = \ww\ast\xx$, and from this would like to jointly estimate the channel and determine the message that was sent.

In this case, a reasonable model for the channel response $\ww$ is that it is nonzero in relatively small number of known locations.  Each of these entries corresponds to a different path over which the encoded message traveled; we are assuming that we know the timing delays for each of these paths, but not the fading coefficients.  The matrix $\Bmat$ in this case is a subset of columns from the identity, and the $\hat{\vct{b}}_\ell$ are partial Fourier vectors.  This means that the coherence $\mu_{\max}$ in \eqref{eq:mu1def} takes its minimal value of $\mu_{\max}^2=1$, and the coherence $\mu_h^2$ in \eqref{eq:muhdef} has a direct interpretation as the peak-value of the (normalized) frequency response of the unknown channel.  The resulting linear operator $\cA$ corresponds to a matrix comprised of $N$ $L\times K$ random Toeplitz matrices, as shown in Figure~\ref{fig:multitoeplitz}.  The first column of each of these matrices corresponds to a columns of $\Cmat$.
The formulation of this problem as a low-rank matrix recovery program was proposed in \cite{asif09ra}, which presented some first numerical experiments.

In this context, Theorem~\ref{th:main} tell us that a length $N$ message can be protected against a channel with $K$ {\em reflections} that is relatively flat in the frequency domain with a random code whose length $L$ obeys $L/\log^3 L\gtrsim (K+N)$.  Essentially, we have a theoretical guarantee that we can estimate the channel without knowledge of the message from a single transmitted codeword.  


It is instructive to draw a comparison in to previous work which connected error correction to structured solutions to underdetermined systems of equations.  In \cite{candes05de,rudelson05ge}, it was shown that a message of length $N$ could be protected against corruption in $K$ unknown locations with a code of length $L\gtrsim N + K\log(N/K)$ using a random codebook.  This result was established by showing how the decoding problem can be recast as a sparse estimation problem to which results from the field of compressed sensing can be applied.  

For multipath protection, we have a very different type of corruption: rather than individual entries of the transmitted vector being tampered with, instead we observe overlapping copies of the transmission.  We show that with the same type of codebook (i.e.\ entries chosen independently at random) can protect against $K$ reflections during transmission, where the timing of these bounces is known (or can be reasonably estimated) but the fading coefficients (amplitude and phase change associated with each reflection) are not.
%
%
\begin{figure}
	\centering
	\includegraphics[height=2.5in]{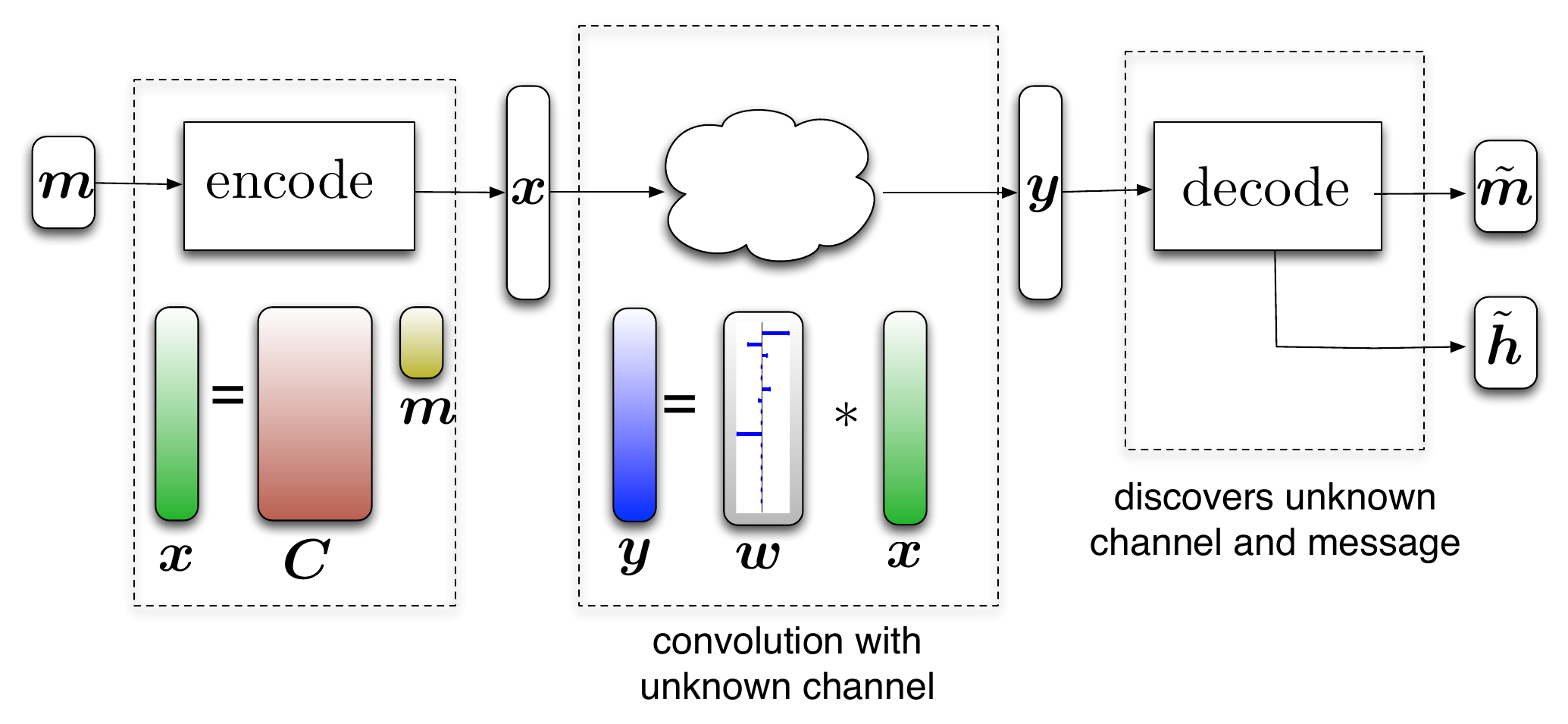}
	\caption{ Overview of the channel protection problem.  A message $\mm$ is encoded by applying a tall matrix $\mtx{C}$; the receiver observes the encoded message convolved with an unknown channel response $\ww=\Bmat\hh$, where $\Bmat$ is a subset of columns from the identity matrix.  The decoder is faced with the task of separating the message and channel response from this convolution, which is a nonlinear combination of $\hh$ and $\mm$.}
	\label{fig:channelprotect}
\end{figure}
%
\begin{figure}
	\centering
	\includegraphics[width=6in]{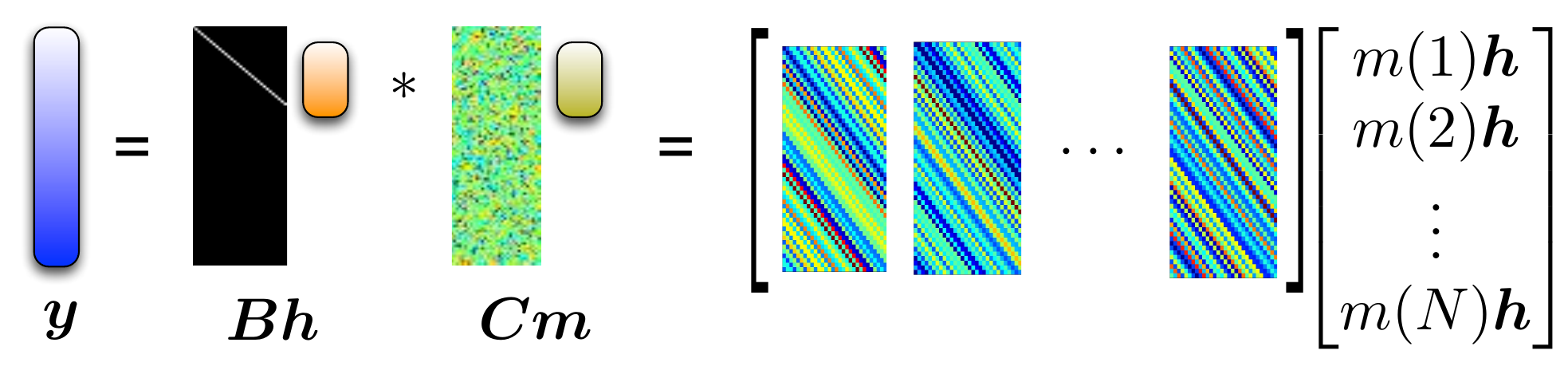}
	\caption{The multi-toeplitz matrix corresponding to the multipath channel protection problem in Section~\ref{sec:channelprotect}.  In this case, the columns of $\Bmat$ are sampled from the identity, the entries of $\Cmat$ are chosen to be iid Gaussian random variables, and the corresponding linear operator $\cA$ is formed by concatenating $N$ $L\times K$ random Toeplitz matrices, each of which is generated by a column of $\Cmat$.}
	\label{fig:multitoeplitz}
\end{figure}
%
%
%
%
\subsection{Other related work}

As it is a ubiquitous problem, many different approaches for blind deconvolution have been proposed in the past, each using different statistical or deterministic models tailored to particular applications.  
A general overview for blind deconvolution techniques in imaging (including methods based on parametric modeling of the inputs and incorporating spatial constraints) can be found in \cite{kundur96bl}.  An example of a more modern method can be found in \cite{chan98to}, where it is demonstrated how an image, which is expected to have small total-variation with respect to its energy, can be effectively deconvolved from an unknown kernel with known compact support. Since the advent of compressed sensing and our understanding of the fact that the $\ell_1$ penalty favors sparsity, several works; see, for example, \cite{saligrama09co,krishnan11bl}, imposed the $\ell_1$ penalty on the vectors being convolved to separate the unknown sparse vectors. However, such approaches perform under additional restrictive assumptions on the convolved vectors. In \cite{lev11un}, a maximum-a-posteriori (MAP) based scheme is analyzed for image deblurring; the article illustrates the shortcomings of imposing sparsity enforcing priors on the gradients of natural images, and presents an alternative MAP estimator to recover only the blur kernel and then uses it to deblur the image.
In wireless communications, knowledge of the modulation scheme \cite{sato75me} or an estimate of the statistics of the source signal \cite{tong94bl} have been used for blind channel identification; these methods are overviewed in the review papers \cite{liu96re,tong98mu,johnson98bl,giannakis98ba}.  An effective scheme based on a deterministic model was put forth in \cite{xu95le}, where fundamental conditions for being able to identify multichannel responses from cross-correlations are presented.  
The work in this paper differs from this previous work in that it relies only on a single observation of two convolved signals, the model for these signals is that they lie in known (but arbitrary) subspaces rather than have a prescribed length, and we give a concrete relationship between the dimensions of these subspaces and the length of the observation sufficient for perfect recovery.

Recasting the quadratic problem in \eqref{eq:yconv} as the linear problem with a rank constraint in \eqref{eq:Al} is appealing since it puts the problem in a form for which we have recently acquired a tremendous amount of understanding.  Recovering a $N\times K$ rank-$R$ matrix from a set of linear observations has primarily been considered in two scenarios.  In the case where the observations come through a random projection, where either the $\mtx{A}_\ell$ are filled with independent Gaussian random variables or $\cA$ is an orthoprojection onto a randomly chosen subspace, the nuclear norm minimization program in \eqref{eq:nuclear-opt} is successful with high probability when\cite{recht10gu,candes11ti} 
\[
	L ~\geq~\mathrm{Const}\cdot R\max(K,N).
\]
When the observations are randomly chosen entries in the matrix, then subject to incoherence conditions on the singular vectors of the matrix being measured, the number of samples sufficient for recovery, again with high probability, is \cite{recht11si,candes10po,gross11re,candes09ex}
\[
	L ~\geq~ \mathrm{Const}\cdot R\max(K,N)\log^2(\max(K,N)).
\]
Our main result in Theorem~\ref{th:main} uses a completely different kind measurement system which exhibits a type of {\em structured randomness}; for example, when $\Bmat$ has the form \eqref{eq:Bembed}, $\cA$ has the concatenated Toeplitz structure shown in Figure~\ref{fig:multitoeplitz}.  In this paper, we will only be concerned with how well this type of operator can recover rank-1 matrices, ongoing work has shown that it also effectively recover general low-rank matrices \cite{ahmed12co}.

While this paper is only concerned with recovery by nuclear norm minimization, other types of recovery techniques have proven effective both in theory and in practice; see for example \cite{keshavan10ma,koltchinskii10nu,lee10ad}.  It is possible that the guarantees given in this paper could be extended to these other algorithms.

As we will see below, our mathematical analysis has mostly to do how matrices of the form in \eqref{eq:ymattime} act on rank-2 matrices in a certain subspace.  Matrices of this type have been considered in the context of sparse recovery in the compressed sensing literature for applications including multiple-input multiple-output channel estimation \cite{romberg10sp}, multi-user detection \cite{applebaum11as}, and multiplexing of spectrally sparse signals \cite{slavinsky11co}.
%
%
%
%
%
%
\section{Numerical Simulations}

In this section, we illustrate the effectiveness of the reconstruction algorithm for the blind deconvolution of vectors $\vct{x}$ and $\vct{w}$ with numerical experiments\footnote{MATLAB code that reproduces all of the experiments in this section is available at \url{http://www.aliahmed.org/code.html}.}. In particular, we study phase diagrams, which demonstrate the empirical probability of success over a range of dimensions $N$ and $K$ for a fixed $L$; an image deblurring experiment, where the task is to recover an image blurred by an unknown blur kernel; a channel protection experiment, where we show the robustness of our algorithm in the presence of additive noise. 

Some of the numerical experiments presented below are ``large scale'', with thousands (and even tens of thousands) of unknown variables.  Recent advances in SDP solvers, which we discuss in the following subsection, make the solution of such problems computationally feasible.

\subsection{Large-scale solvers}
\label{sec:largescale}

To solve the semidefinite program \eqref{eq:sdp-dual-dual} on instances where $K$ and $M$ are of practical size, we rely on the heuristic solver developed by Burer and Monteiro~\cite{Burer03}.  To implement this solver, we perform the variable substitution
\[
	\begin{bmatrix} \mtx{H} \\ \mtx{M} \end{bmatrix}	\begin{bmatrix} \mtx{H} \\ \mtx{M} \end{bmatrix}^*=
	\begin{bmatrix}
		\mtx{W}_1 &  \mtx{X}\\
		\mtx{X}^* & \mtx{W}_2
	\end{bmatrix}
\]
where $\mtx{H}$ is $K \times r$ and $\mtx{M}$ is $N \times r$ for $r>1$.  Under this substitution, the semidefinite constraint is always satisfied and we are left with the nonlinear program:
\begin{equation}
	\label{eq:burer-monteiro}
	\min_{\mtx{M},\mtx{H}}~\|\vct{M}\|_F^2 + \|\vct{H}\|_F^2 \quad 
	\mbox{subject to}\quad 
	\hat{\vct{y}} = \cA(\mtx{H}\mtx{M}^*),
	\quad \ell=1,\ldots,L.
\end{equation}
When $r=1$, this reformulated problem is equivalent to~\eqref{eq:least-squares-opt}.  Burer and Monteiro showed that provided $r$ is bigger than the rank of the optimal solution of~\eqref{eq:sdp-dual-dual}, all of the local minima of~\eqref{eq:burer-monteiro} are global minima of~\eqref{eq:sdp-dual-dual}~\cite{Burer05}.  Since we expect a rank one solution, we can work with $r=2$, declaring recovery when a rank deficient $\mtx{M}$ or $\mtx{H}$ is obtained.  Thus, by doubling the size of the decision variable, we can avoid the non-global local solutions of~\eqref{eq:least-squares-opt}.  Burer and Monteiro's algorithm has had notable success in matrix completion problems, enabling some of the fastest solvers for nuclear-norm-based matrix completion~\cite{Lee10,RechtRe11}.

To solve~\eqref{eq:burer-monteiro}, we implement the method of multipliers strategy initially suggested by Burer and Monteiro.  Indeed, this algorithm is explained in detail by Recht \emph{et al} in the context of solving problem~\eqref{eq:nuclear-opt}~\cite{recht10gu}.  The inner operation of minimizing the augmented Lagrangian term is performed using LBFGS as implemented by the Matlab solver minfunc~\cite{MINFUNC}.  This solver requires only being able to apply $\cA$ and $\cA^*$ quickly, both of which can be done in time $O(r \min\{N \log N,K\log K\})$.  The parameters of the augmented Lagrangian are updated according to the schedule proposed by Burer and Monteiro~\cite{Burer03}.  This code allows us to solve problems where $N$ and $K$ are in the tens of thousands in seconds on a laptop.
%
\subsection{Phase transitions}
\label{sec:phase}

Our first set of numerical experiments delineates the boundary, in terms of values for $K,N$ and $L$, for when \eqref{eq:nuclear-opt} is effective on generic instances of four different types of problems.  
For a fixed value of $L$, we vary the subspace dimensions $N$ and $K$ and run $100$ experiments, with different random instances of $\ww$ and $\xx$ for each experiment. The vectors $\hh$ and $\mm$ are selected to be standard Gaussian vectors with independent entries. Figures~\ref{fig:PhaseDiagramGaussian} and \ref{fig:PhaseDiagramSparse} show the collected frequencies of success for four different probabilistic models.  We classify a recovery a success if its relative error is less than 2\%\footnote{ The diagrams in Figures~\ref{fig:PhaseDiagramGaussian} and \ref{fig:PhaseDiagramSparse} do not change significantly if a smaller threshold, say on the order of $10^{-6}$, is chosen.}, meaning that if $\hat{\mtx{X}}$ is the solution to \eqref{eq:nuclear-opt}, then
\begin{equation}
	\label{eq:RMS}
	\frac{\|\hat{\mtx{X}}-\vct{w}\vct{x}^*\|_F}{\|\vct{w}\vct{x}^*\|_F} < 0.02.
\end{equation}

Our first set of experiments mimics the channel protection problem from Section~\ref{sec:channelprotect} and Figure~\ref{fig:channelprotect}.   Figure~\ref{fig:PhaseDiagramGaussian} shows the empirical rate of success when $\Cmat$ is taken as a dense $L\times N$ Gaussian random matrix.  We fix $L=2048$ and vary $N$ and $K$  from $25$ to $1000$.  
In Figure~\ref{fig:PhaseDiagramGaussian_sparse}, we take $\ww$ to be sparse with known support; we form $\Bmat$ by randomly selecting $K$ columns from the $L\times L$ identity matrix. For Figure~\ref{fig:PhaseDiagramGaussian_short}, we take $\ww$ to be ``short'', forming $\Bmat$ from the first $K$ columns of the identity.  In both cases, the basis expansion coefficient were drawn to be iid Gaussian random vectors.  In both cases, we are able to deconvolve this signals with a high rate of success when $L\gtrsim 2.7(K+N)$. 

\begin{figure}
	\centering
	\subfigure[]{
	\includegraphics[trim=2.5cm 7.5cm 2.5cm 7.5cm,scale = 0.4]{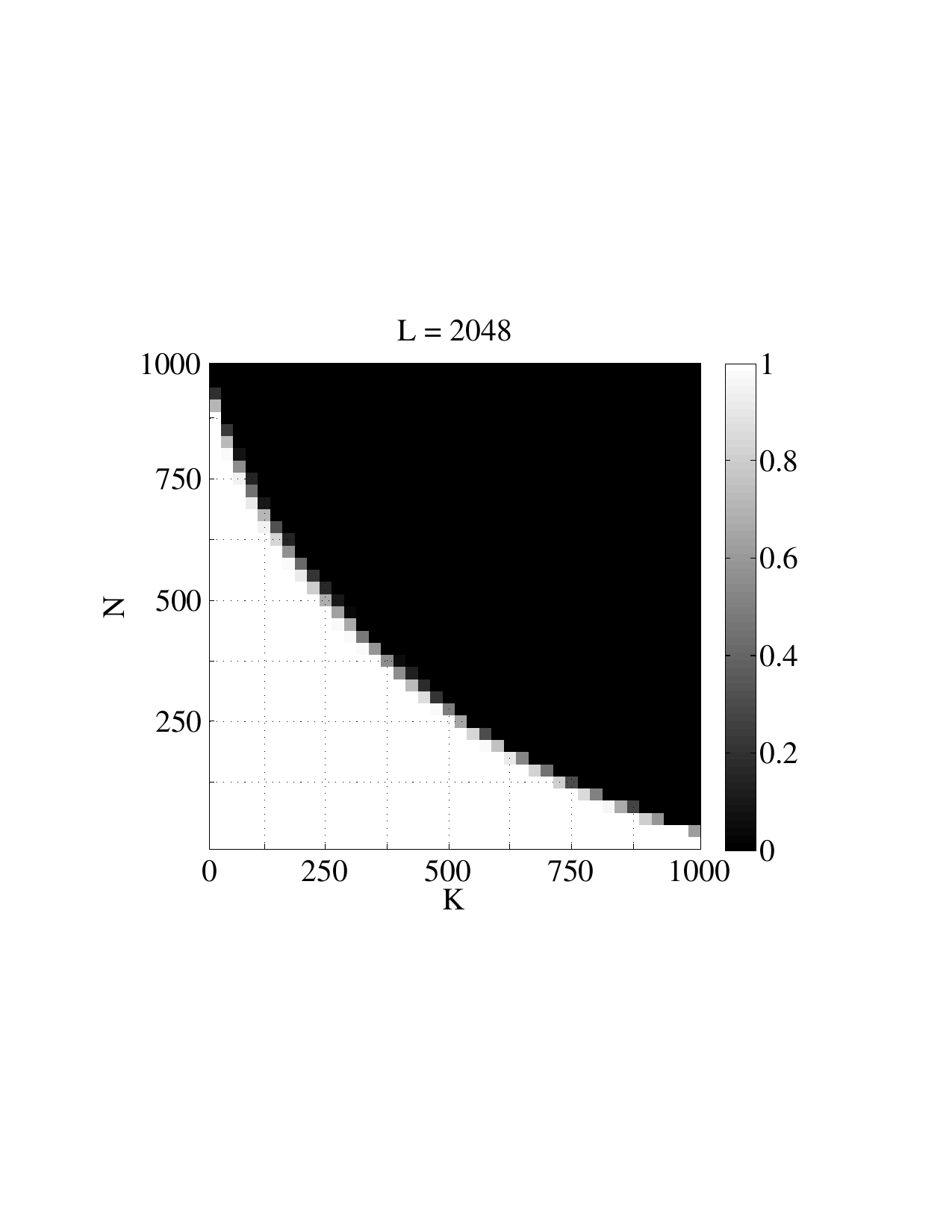}
	\label{fig:PhaseDiagramGaussian_sparse}}
	\subfigure[]{
  \includegraphics[trim=2.5cm 7.5cm 2.5cm 7.5cm,scale = 0.4]{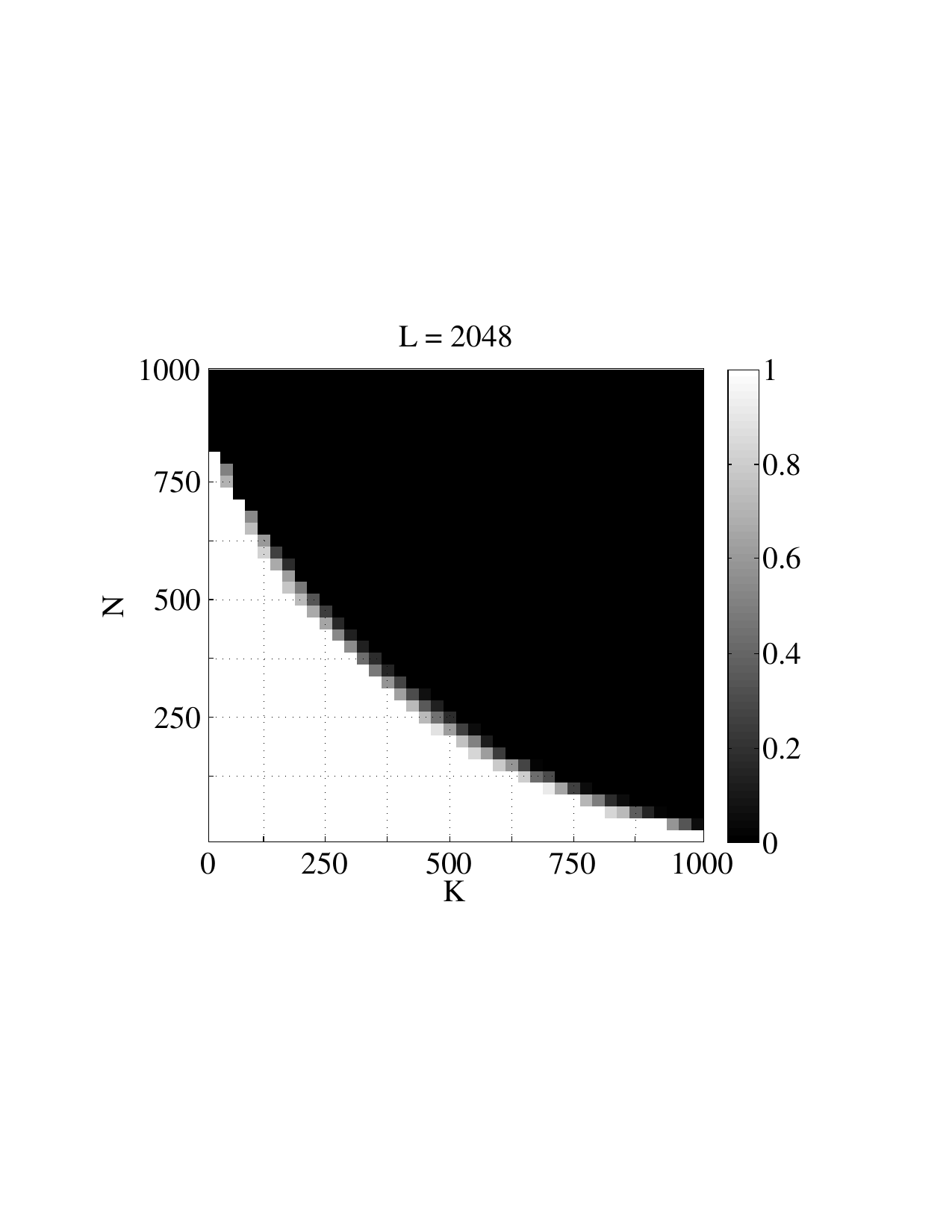}
  \label{fig:PhaseDiagramGaussian_short}}
  \caption{ Empirical success rate for the deconvolution of two vectors $\vct{x}$ and $\vct{w}$.  In these experiments, $\xx$ is a random vector in the subspace spanned by the columns of an $L\times N$ matrix whose entries are independent and identically distributed Gaussian random variables.  In part (a), $\ww$ is a generic sparse vector, with support and nonzero entries chosen randomly.  In part (b) $\ww$ is a generic short vector whose first $K$ terms are nonzero and chosen randomly.}
	\label{fig:PhaseDiagramGaussian}
\end{figure}
%
%
%
Figure~\ref{fig:PhaseDiagramSparse} shows the results of a similar experiment, only here both $\ww$ and $\xx$ are randomly generated sparse vectors.  We take $L$ to be much larger than the previous experiment, $L=32,768$, and vary $N$ and $K$ from $1000$ to $16,000$.  In Figure~\ref{fig:PhaseDiagramSparse_sparse}, we generate both $\Bmat$ and $\Cmat$ by randomly selecting columns of the identity --- despite the difference in the model for $\xx$ (sparse instead of randomly oriented) the resulting performance curve in this case is very similar to that in Figure~\ref{fig:PhaseDiagramGaussian_sparse}.  In Figure~\ref{fig:PhaseDiagramSparse_short}, we use the same model for $\Cmat$ and $\xx$, but use a ``short'' $\ww$ (first $K$ terms are non-zero).  Again, despite the difference in the model for $\xx$, the recovery curve looks almost identical to that in Figure~\ref{fig:PhaseDiagramGaussian_short}.

\begin{figure}[!ht]
	\centering
	\subfigure[]{
	      \includegraphics[trim=2.5cm 7.5cm 2.5cm 7.5cm,scale = 0.4]{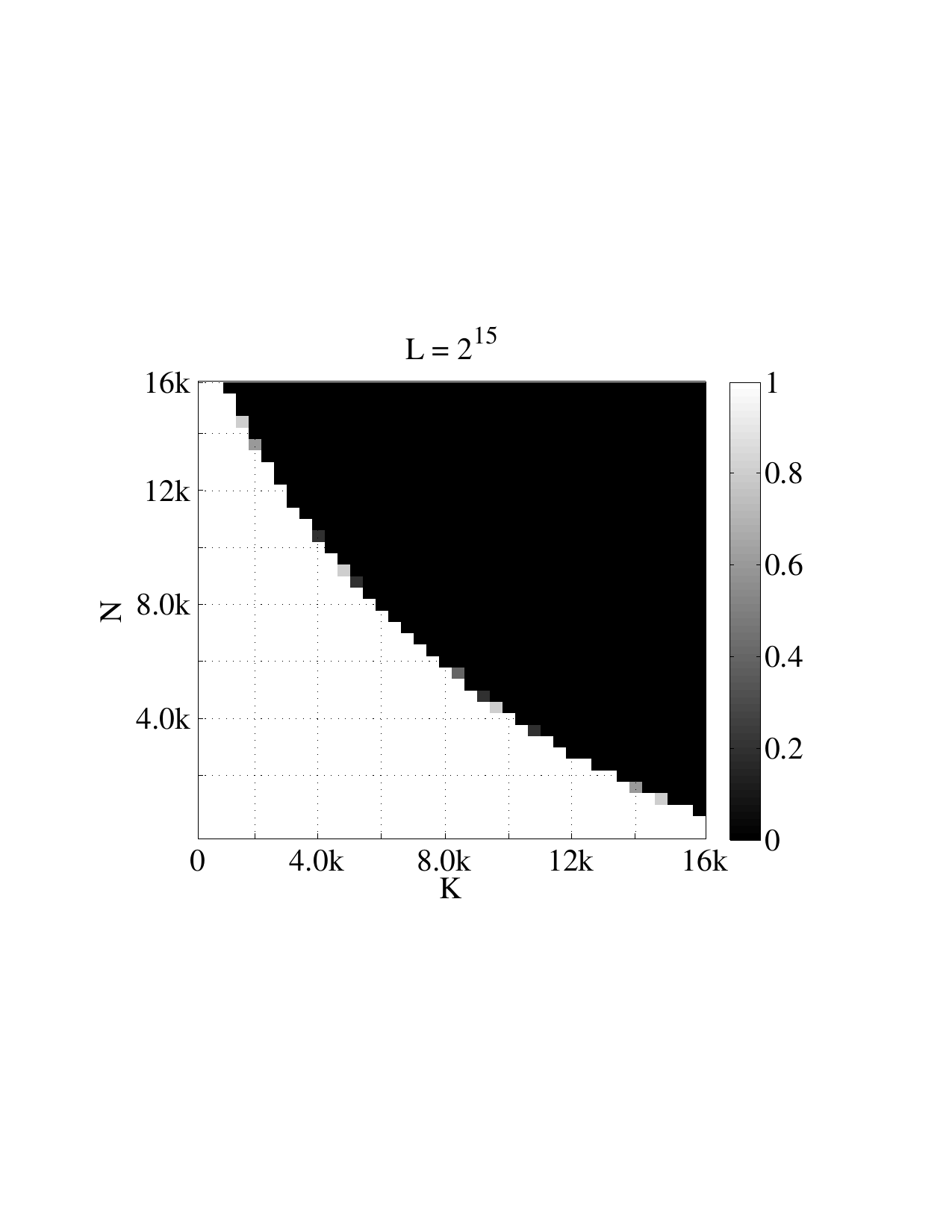}
	      \label{fig:PhaseDiagramSparse_sparse}}
	\subfigure[]{
        \includegraphics[trim=2.5cm 7.5cm 2.5cm 7.5cm,scale = 0.4]{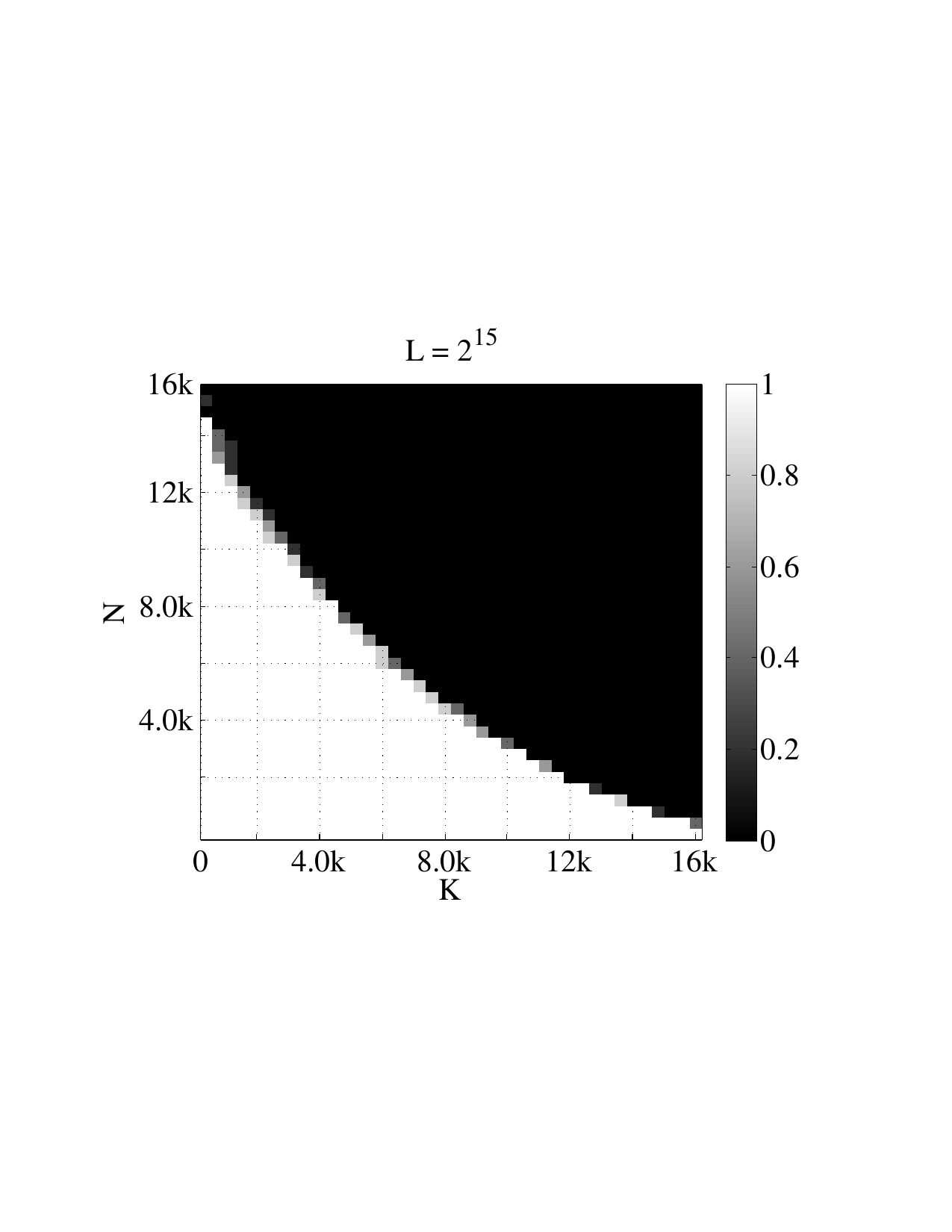}
        \label{fig:PhaseDiagramSparse_short}}
	\caption{ Empirical success rate for the deconvolution of two vectors $\vct{x}$ and $\vct{w}$.  In these experiments, $\xx$ is a random sparse vector whose support and $N$ non-zero values on that support are chosen at random.   In part (a), $\ww$ is a generic sparse vector, with support and $K$ nonzero entries chosen randomly.  In part (b) $\ww$ is a generic short vector whose first $K$ terms are nonzero and chosen randomly.
}
	\label{fig:PhaseDiagramSparse}
\end{figure}
%
\subsection{Recovery in the presence of noise}

Figure~\ref{fig:stable-recovery} demonstrates the robustness of the deconvolution algorithm in the presence of noise.  
We use the same basic experimental setup as in Figure~\ref{fig:PhaseDiagramGaussian_sparse}, with $L=2048$, $N=500$ and $K=250$, but instead of making a clean observation of $\ww\ast\xx$, we add a noise vector $\vct{z}$ whose entires are iid Gaussian with zero mean and variance $\sigma^2$.  We solve the program \eqref{eq:nnrelaxed} with $\delta = (L + \sqrt{4L})^{1/2}\sigma$, a value chosen since it will be an upper bound for $\|\vct{z}\|_2$ with high probability.  

Figure~\ref{fig:noisy1} shows how the relative error of the recovery changes with the noise level $\sigma$. On a log-log scale, the recovery error (show as $10\log_{10}\mbox{(relative error squared)}$) is linear in the signal-to-noise ratio (defined as SNR$=10\log_{10}(\|\vct{w}\vct{x}^*\|_F^2/\|\vct{z}\|_2^2$). For each SNR level, we calculate the average relative error squared over 100 iterations, each time using independent set of signals, coding matrix, and noise. Figure~\ref{fig:noisy2} shows how the recovery error is affected by the ``oversampling ratio''; as $L$ is made larger relative to $N+K$, the recovery error decreases. As before, each point is averaged over 100 independent iterations.

 \begin{figure}
 	\centering
 	\subfigure[]{
 	\includegraphics[trim=2.5cm 7.5cm 2.5cm 7.5cm, scale = 0.4]{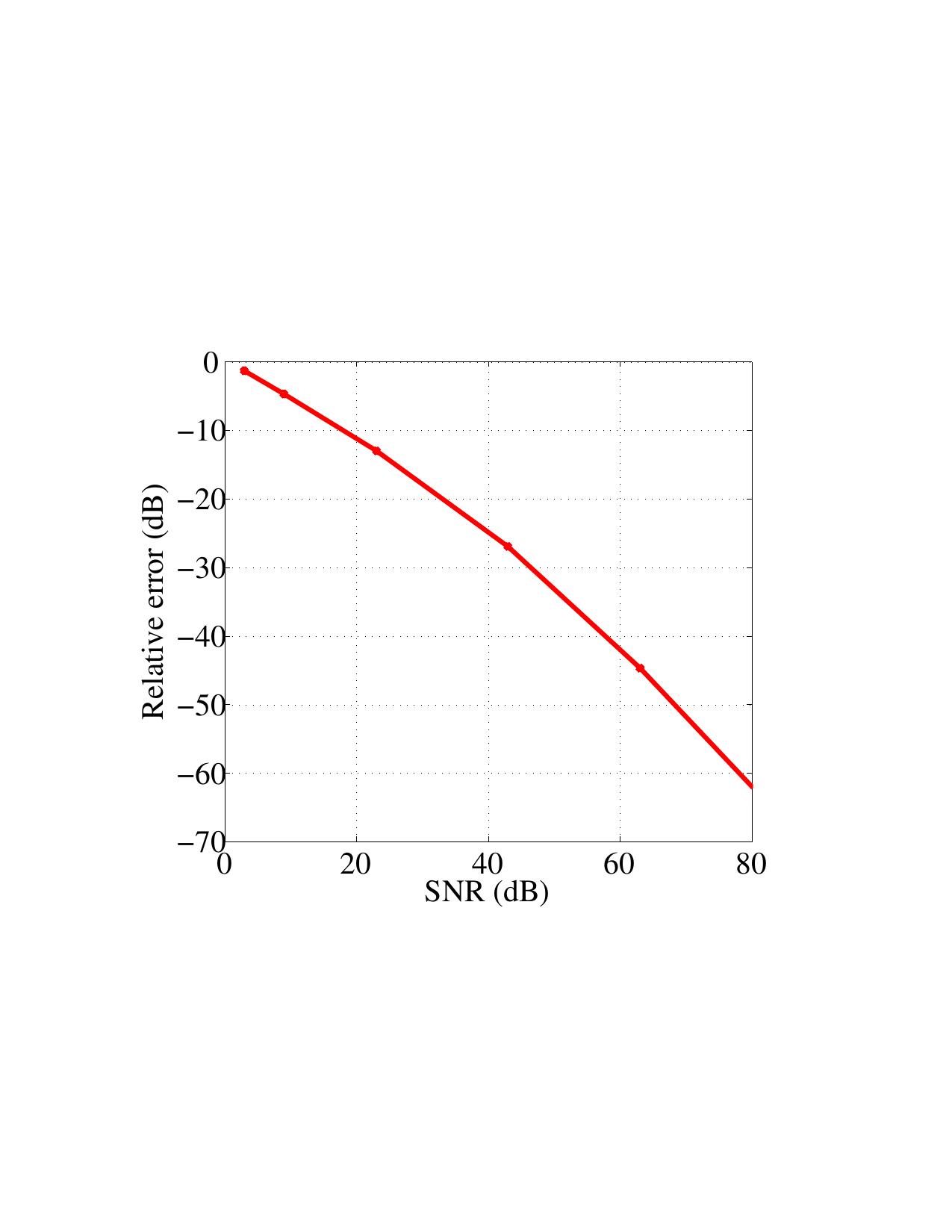}
 	\label{fig:noisy1}}
 	\subfigure[]{
 	\centering
 	\includegraphics[trim=2.5cm 7.5cm 2.5cm 7.5cm, scale = 0.4]{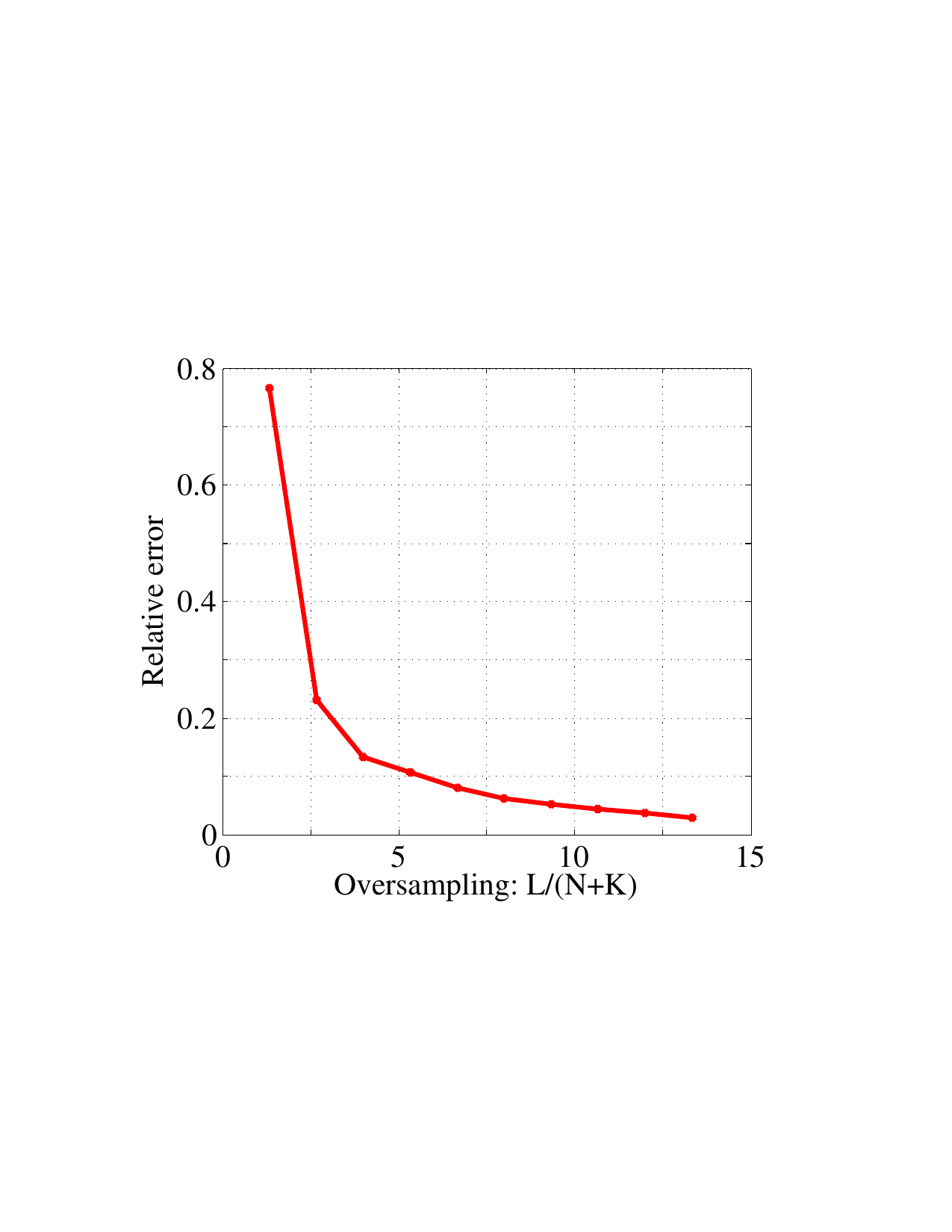}
 	\label{fig:noisy2}}
    \caption{ Performance of the blind deconvolution program in the presence of noise.  In all of the experiments, $L=2048$, $N=500$, $K=250$, $\Bmat$ is a random selection of columns from the identity, and $\Cmat$ is an iid Gaussian matrix.  (a) Relative error vs.\ SNR on a log-log scale. (b) Oversampling rate vs. relative error for a fixed SNR of $20dB$}
 	\label{fig:stable-recovery}
 \end{figure}
 
%
%
\subsection{Image deblurring}
\label{sec:imagedeblur}
The discrete signals $\ww$ and $\xx$ in the deconvolution problem \eqref{eq:yconv} may also represent higher-dimensional objects such as images. For example, the unknown $\xx \in \R^L$ may represent an image of the form $\xx[\ell_1,\ell_2]$, and the unknown $\ww \in \R^L$ may signify a 2D blur kernel $\ww[\ell_1,\ell_2]$, where $1 \leq \ell_1 \leq L_1, ~ 1 \leq \ell_2 \leq L_2$, and $L = L_1L_2$. The 2D convolution $\yy = \ww*\xx$ produces blurred image $\yy[\ell_1,\ell_2]$. Most natural images are sparse in some basis such as wavelets, DCT, or curvelets.  If we have an estimate of the active coefficients of the image $\xx$, then the image can be expressed as the multiplication of a small set of basis functions arranged as the columns of matrix $\mtx{C}$ and the corresponding short vector of active coefficients $\mm$, i.e., $\xx = \mtx{C}\mm$. In addition, if the non-zero components in the blur kernel $\ww$ are much smaller than the total number of pixels $L$, and we have an estimate of the support of the active components in $\ww$, then we can write $\ww = \mtx{B}\vct{h}$, where $\mtx{B}$ is the matrix formed by a subset of the columns of the identity matrix, and $\vct{h}$ is an unknown short vector. 

Figure~\ref{fig:Shapes}, \ref{fig:OracleShapesDeblur}, and \ref{fig:ActualShapesDeblur} illustrate an application of our blind deconvolution technique to two image deblurring problems.  In the first problem, we assume that we have oracle knowledge of a low-dimensional subspace in which the image to be recovered lies.  We observe a convolution of the $L = 65,536$ pixel Shapes image shown in Figure~\ref{fig:Shapes1} with the motion blurring kernel shown in  Figure~\ref{fig:Shapes2}; the observation is shown in Figure~\ref{fig:Shapes3}.  The Shapes image can be very closely approximated using only $N=5000$ terms in a Haar wavelet expansion, which capture 99.9\% of the energy in the image.  We start by assuming (perhaps unrealistically) that we know the indices for these most significant wavelet coefficients; the corresponding wavelet basis functions are taken as columns of $\Bmat$.  We will also assume that we know the support of the blurring kernel, which consists of $K=65$ connected pixels; the corresponding columns of the identity constitute $\Cmat$.  The image and blur kernel recovered by solving \eqref{eq:nuclear-opt} are shown in Figure~\ref{fig:OracleShapesDeblur}.

\begin{figure}[!ht]
	\centering	
	\subfigure[]{
	\includegraphics[trim=4.0cm 8.0cm 4.0cm 8.0cm,scale = 0.38]{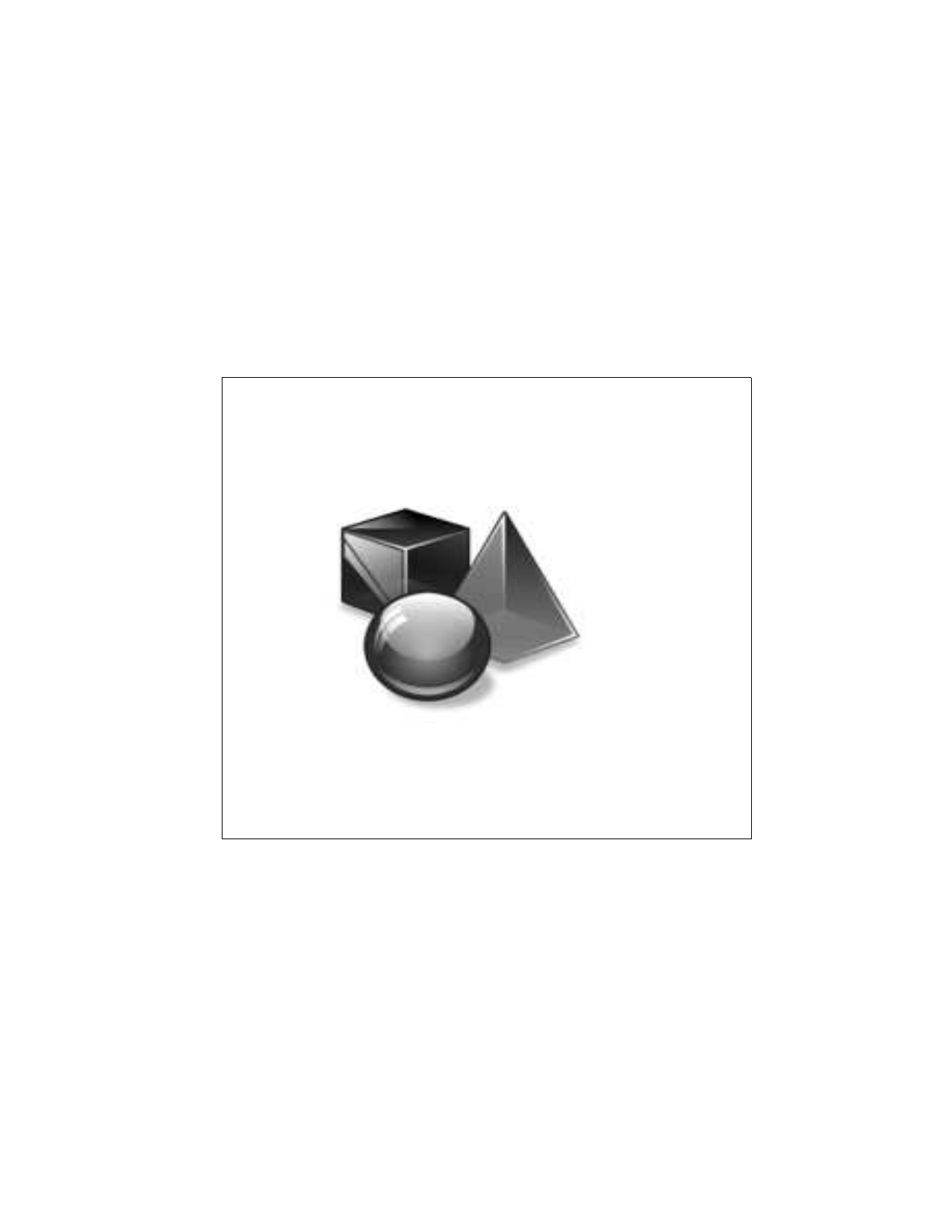}
	\label{fig:Shapes1}}
	\subfigure[]{
	\includegraphics[trim=4.0cm 8.0cm 4.0cm 8.0cm,scale = 0.38]{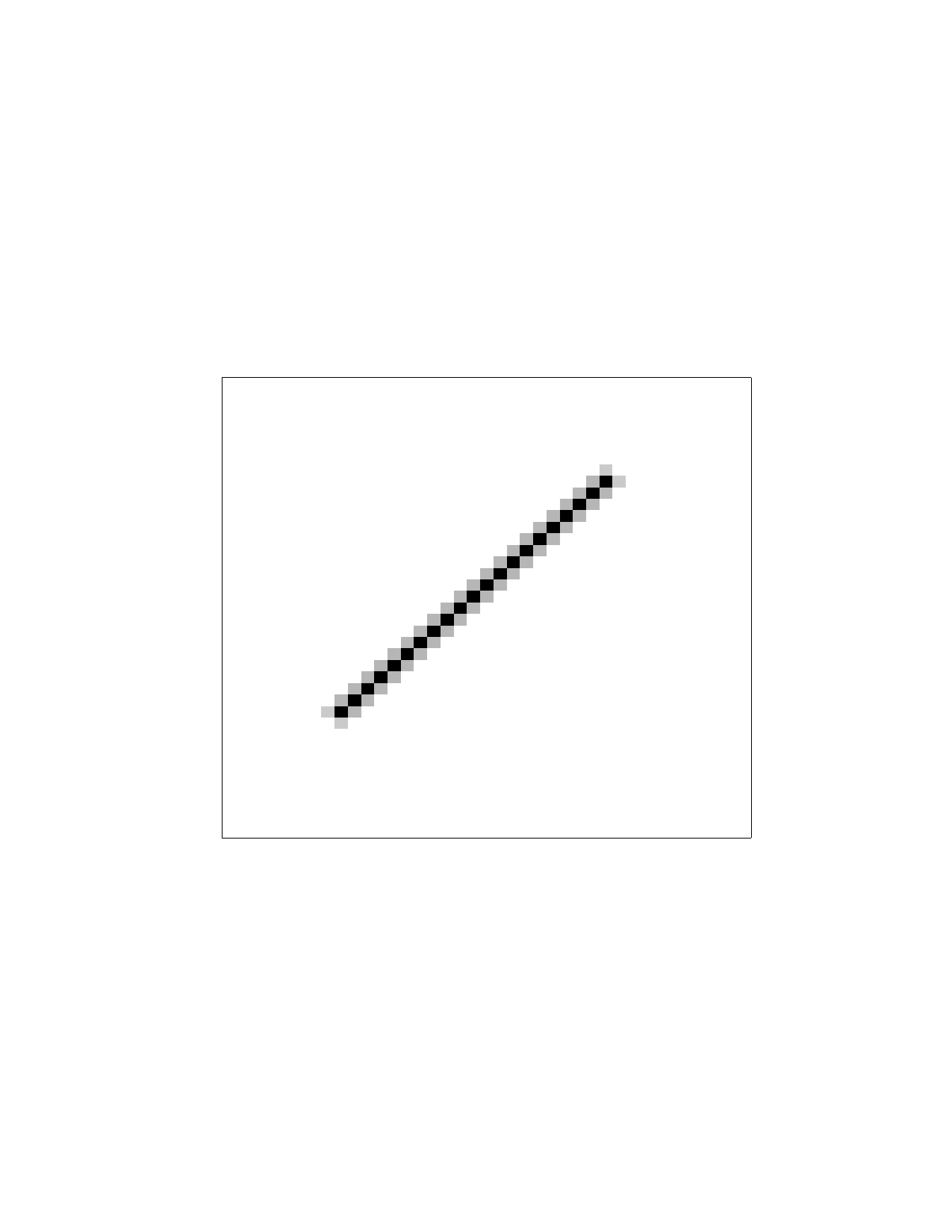}
	  \label{fig:Shapes2}}
	  \subfigure[]{
	  \includegraphics[trim=4.0cm 8.0cm 4.0cm 8.0cm,scale = 0.38]{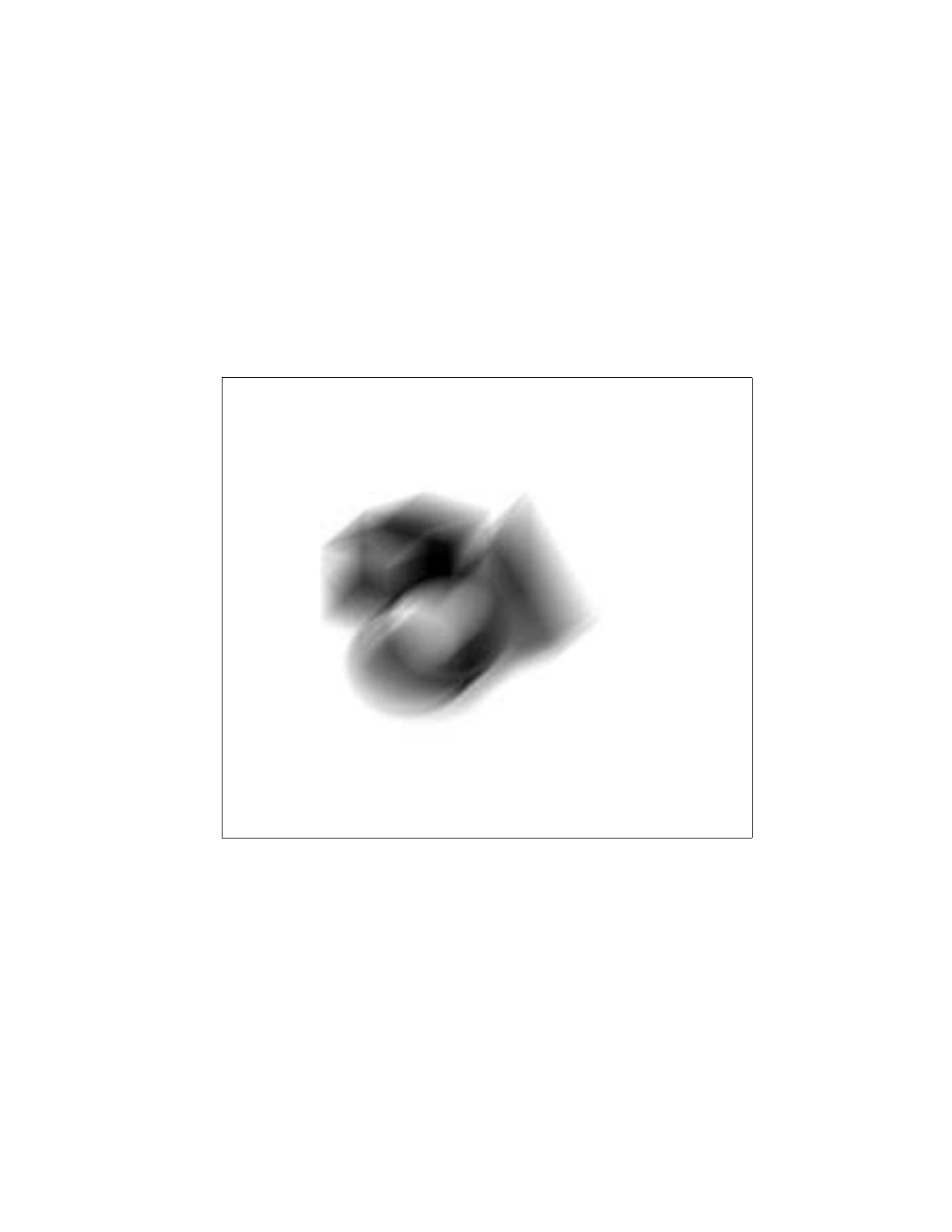}
	  \label{fig:Shapes3}}
	\caption{Shapes image for deblurring experiment. (a) Original $256 \times 256$ Shapes image $\xx$.  (b) Blurring kernel $\ww$ with a support size of 65 pixels, the locations of which are assumed to be known. (c) Convolution of (a) and (b).}
	\label{fig:Shapes}
\end{figure}

\begin{figure}[!ht]
	\centering
	\subfigure[]{
	\includegraphics[trim=4.0cm 8.0cm 4.0cm 8.0cm,scale = 0.45]{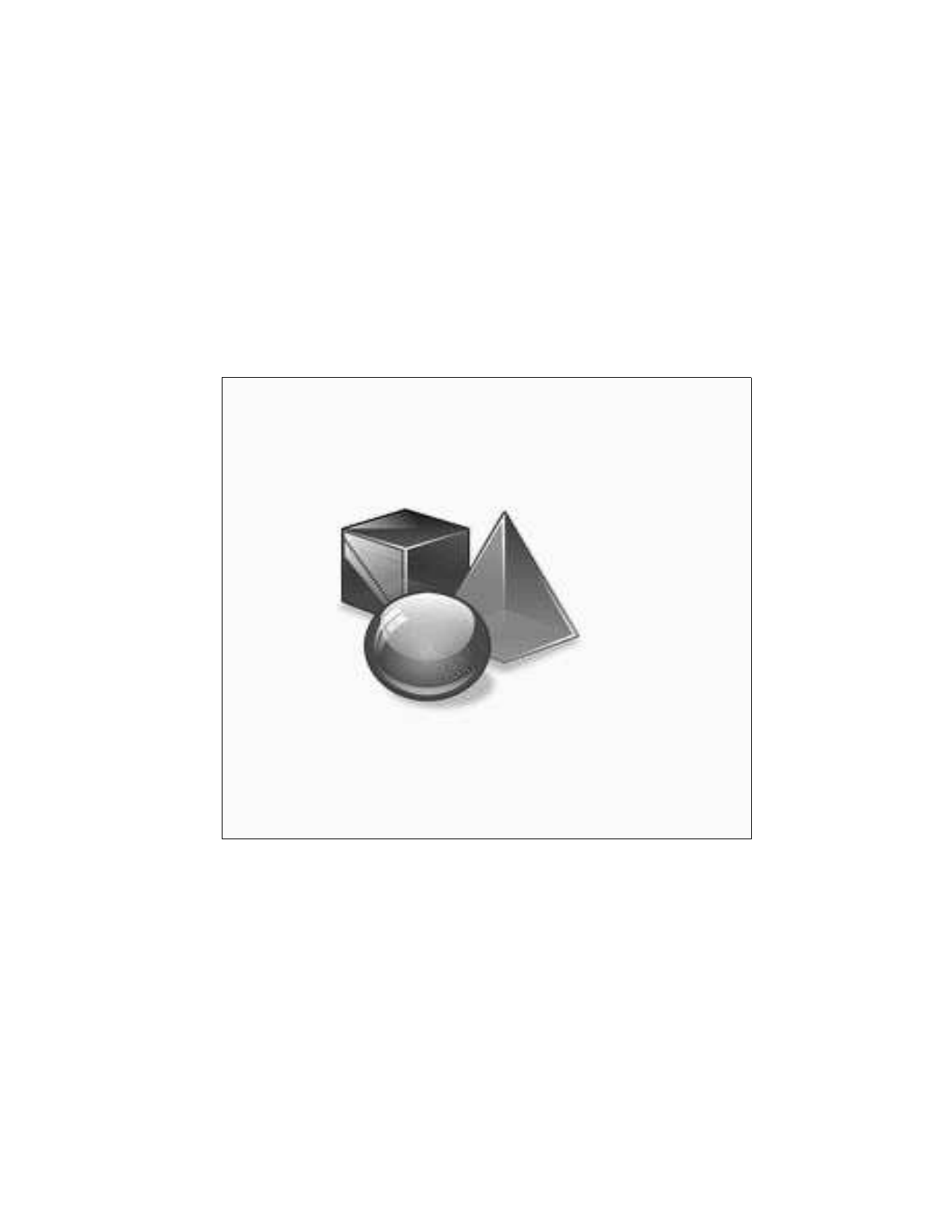}
	\label{fig:Shapes4}}
	\subfigure[]{
  \includegraphics[trim=4.0cm 8.0cm 4.0cm 8.0cm,scale = 0.45]{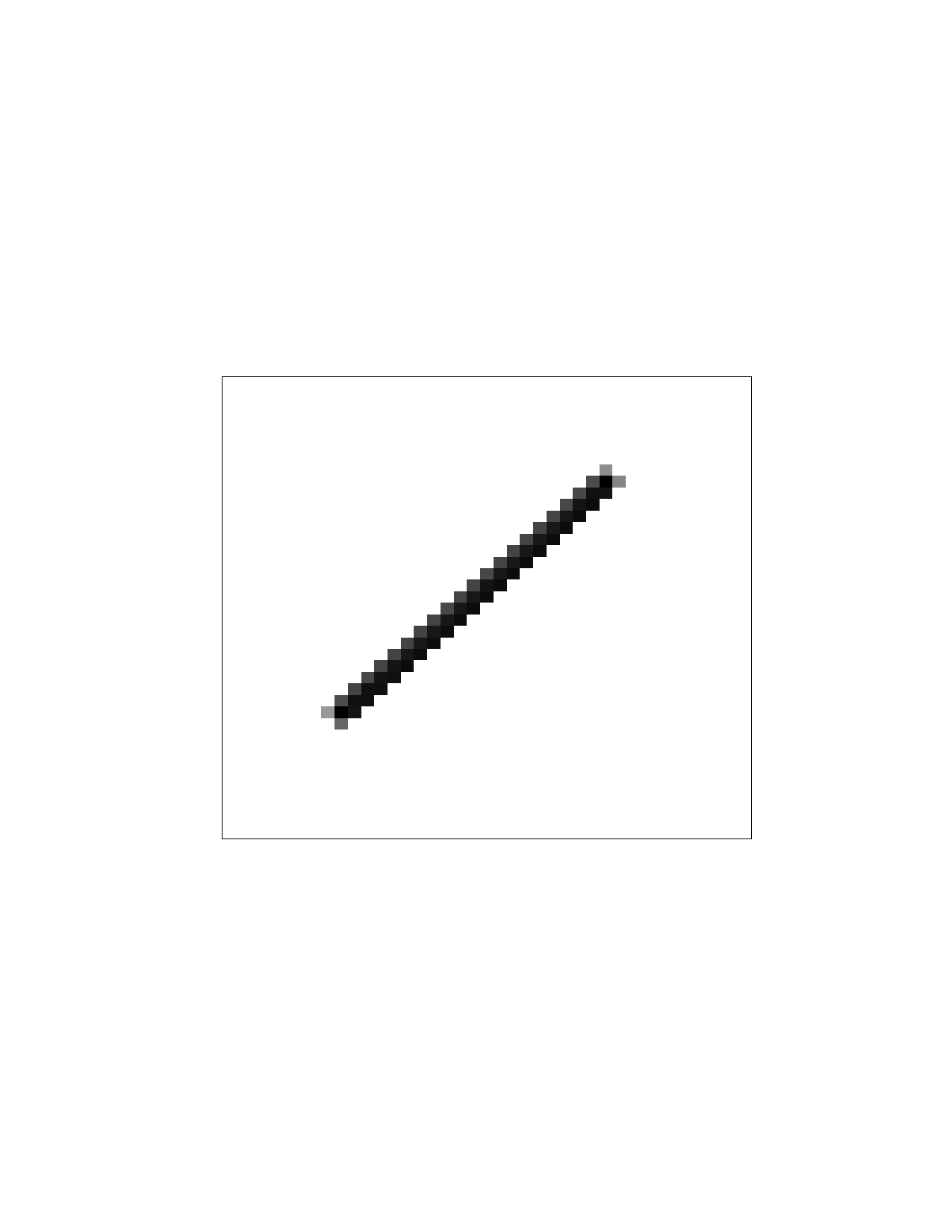}
  \label{fig:Shapes5}}\\
	\caption{An oracle assisted image deblurring experiment; we assume that we know the support of the $5000$ most significant wavelet coefficients of the original image. These wavelet coefficients capture 99.9\% of the energy in the original image. We obtain from the solution of \eqref{eq:nuclear-opt}: (a) Deconvolved image $\hat{\xx}$ obtained from the solution of \eqref{eq:nuclear-opt}, with relative error of $\|\hat{\xx}-\xx\|_2/\|\xx\|_2 = 1.6 \times 10^{-2}$. (b) Estimated blur kernel $\hat{\ww}$ with relative error of $\|\hat{\ww}-\ww\|_2/\|\ww\|_2 = 5.4 \times 10^{-1}$.}
	\label{fig:OracleShapesDeblur}
\end{figure}
\begin{figure}[!ht]
	\centering
		  \subfigure[]{
		  \includegraphics[trim=4.0cm 8.0cm 4.0cm 8.0cm,scale = 0.45]{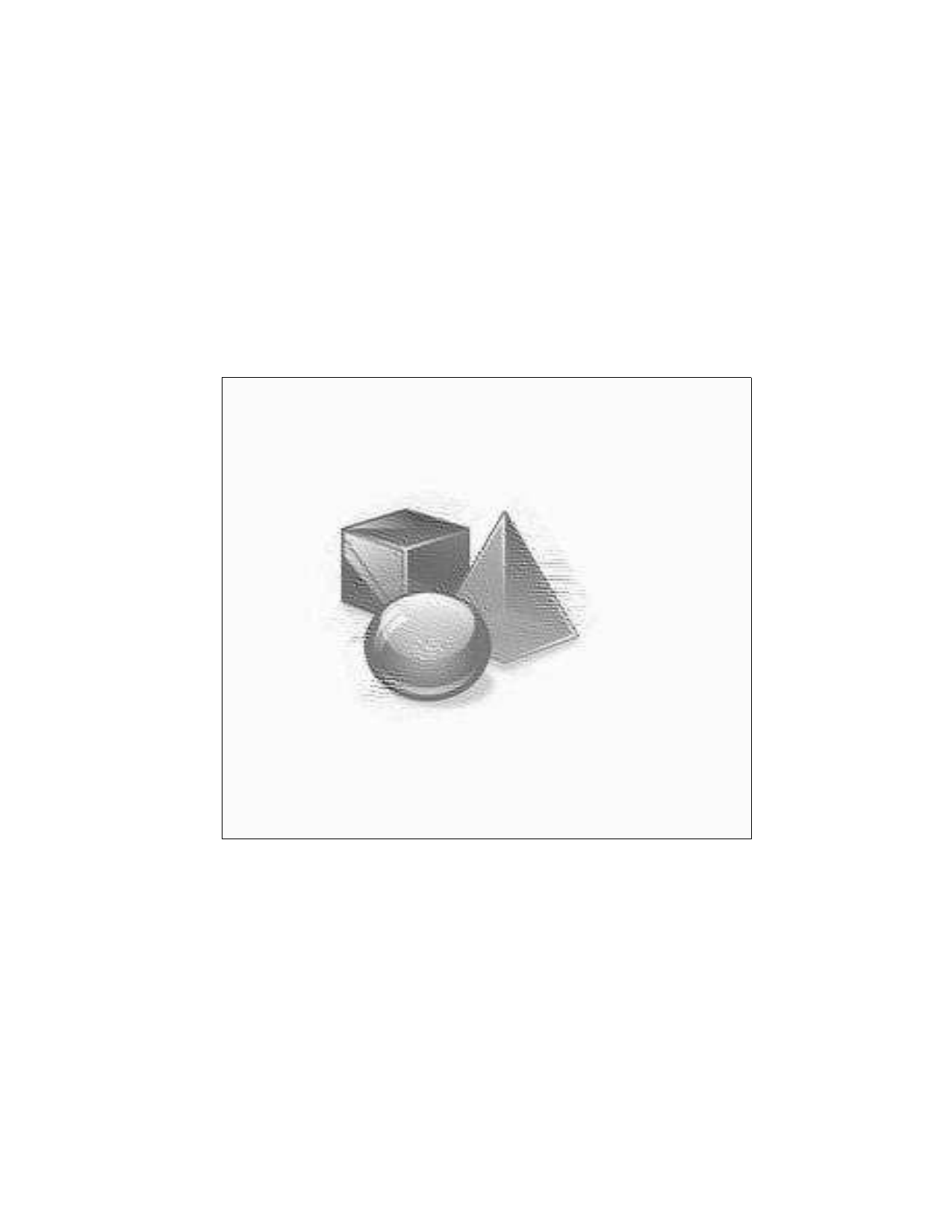}
		  \label{fig:Shapes6}}
		  \subfigure[]{
		  \includegraphics[trim=4.0cm 8.0cm 4.0cm 8.0cm,scale = 0.45]{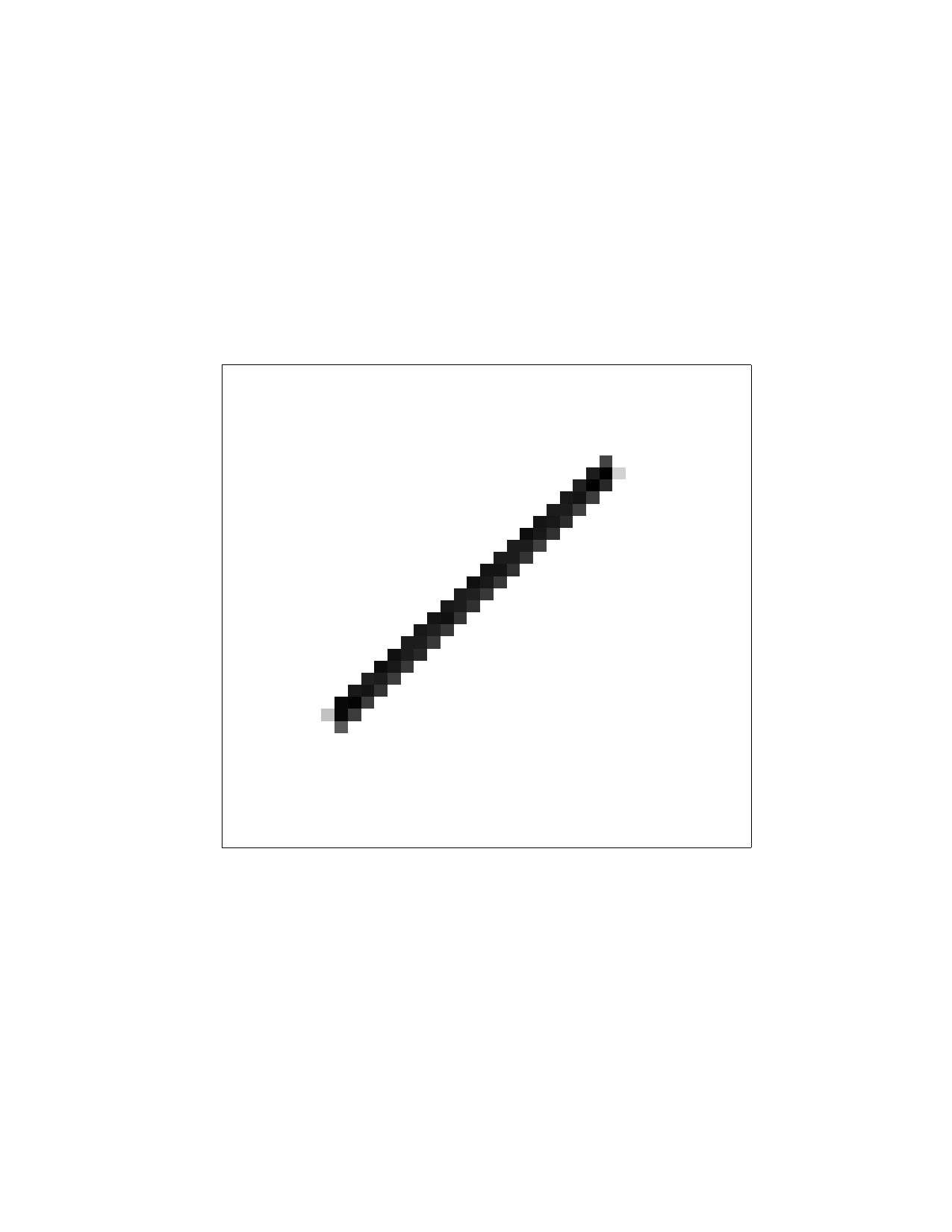}
		  \label{fig:Shapes7}}
		  \caption{Image recovery without oracle information. Take the support of the 9000 most-significant coefficients of Haar wavelet transform of the blurred image as our estimate of the subspace in which original image lives. (a) Deconvolved image obtained from the solution of \eqref{eq:nuclear-opt}, with relative error of $4.9 \times 10^{-2}$. (b) Estimated blur kernel; relative error = $5.6 \times 10^{-1}$.}
		  \label{fig:ActualShapesDeblur}
\end{figure}

Figure~\ref{fig:ActualShapesDeblur} shows a more realistic example where the support of the image in the wavelet domain is unknown.  We take the blurred image shown in Figure~\ref{fig:Shapes3} and, as before, we assume we know the support of the blurring kernel shown in Figure~\ref{fig:Shapes2}, with $K=65$ non-zero elements, but here we use the blurred image to estimate the support in the wavelet domain --- we take the Haar wavelet transform of the image in Figure~\ref{fig:Shapes3}, and select the indices of the $N=9000$ largest wavelet coefficients as a proxy for the support of the significant coefficients of the original image.  The wavelet coefficients of the original image at this estimated support capture 98.5\% of the energy in the blurred image.  The recovery using \eqref{eq:nuclear-opt} run with these linear models is shown in Figure~\ref{fig:Shapes6} and Figure~\ref{fig:Shapes7}. Despite not knowing the linear model explicitly, we are able to estimate it well enough from the observed data to get a reasonable reconstruction.

\section{Proof of main theorems}

In this section, we will prove Theorems~\ref{th:main} and \ref{th:stability} by establishing a set of standard sufficient conditions for $\mtx{X}_0$ to be the unique minimizer of \eqref{eq:nuclear-opt}.  At a high level, the argument follows previous literature \cite{candes09ex,gross10qu} on low-rank matrix recovery by constructing a valid {\em dual certificate} for the rank-1 matrix $\mtx{X}_0=\hh\mm^*$.  The main mathematical innovation in proving these results comes in Lemmas~\ref{lm:cAcAcond}, \ref{lm:PTAconditioning}, \ref{lm:pcoh} and \ref{lm:ApWpnorm}, which control the behavior of the random operator $\cA$.

Duality in convex programming is just one of the other known approaches employed to show the success of nuclear norm minimization in recovering the unknown matrix $\mtx{X}_0$. One method to establish the exact recovery is through the null-space properties of the linear operator $\cA$. In article \cite{recht11nu}, the authors characterize the null-space properties of $\cA$ that hold if and only if the solution to \ref{eq:nuclear-opt} equals $\mtx{X}_0$. In particular, the article demonstrates that a linear operator $\cA$ drawn from a Gaussian ensemble obeys the null-space properties with very high probability.  A closely related sufficient property to show the exact and stable recovery by solving the nuclear norm heuristic is the restricted isometry property (RIP) \cite{recht10gu} of $\cA$ over all rank-1 matrices. One way to prove such an RIP result for a linear operator $\cA$ is to show that $\cA$ acts as almost an isometry on a fixed matrix $\mtx{X}$ with very high probability, and then use a simple covering argument over all rank-1 matrices to establish the RIP. However, the construction of $\cA$ considered in this paper does not preserve the energy in a fixed matrix $\mtx{X}$ with appropriately high probability, hence, the argument does not work. Other methods to show an RIP result for the linear operator $\cA$ require further investigation; especially, for the case when $\ww$ is known to be sparse, and $\hat{\mtx{B}}$ is a partial Fourier matrix.

In this paper, we employ the dual certificate approach \cite{recht11si} to show the exact recovery of an unknown $\mtx{X}_0$ by solving the nuclear norm minimization program in \ref{eq:nuclear-opt}. We will work through the main argument in this section, leaving the technical details (including the proofs of the main lemmas) until Sections~\ref{sec:proofkeylemmas} and \ref{sec:supportinglemmas}.

Key to our argument is the subspace (of $\R^{K\times N}$) $T$ associated with $\mtx{X}_0=\hh\mm^*$:
\[
	T = \left\{\mtx{X} : \mtx{X} = \alpha \hh\vct{v}^* + 
	\beta \vct{u}\mm^*,~
	\vct{v}\in\R^{N},~\vct{u}\in\R^{K},~\alpha,\beta\in\R
	\right\}
\]	
with the (matrix) projection operators
\begin{align*}
	\PT(\mtx{X}) &= \mtx{P_H}\mtx{X} + \mtx{X}\mtx{P_M} - \mtx{P_H}\mtx{X}\mtx{P_M} \\
	\mathcal{P}_{T^\perp}(\mtx{X}) &= (\mtx{I} - \mtx{P_H})\mtx{X}(\mtx{I}-\mtx{P_M}), 
\end{align*}
where $\mtx{P_H}$ and $\mtx{P_M}$ are the (vector) projection matrices $\mtx{P_H} = \hh\hh^*$ and $\mtx{P_M} = \mm\mm^*$.

%
\subsection{Theorem~\ref{th:main}: Sufficient condition for a nuclear norm minimizer}

The following proposition is a specialization of the more general sufficient conditions for verifying the solutions to the nuclear norm minimization problem \eqref{eq:nuclear-opt} that have appeared multiple times in the literature in one form or another (see \cite{recht11si}, for example).
\begin{prop}
	\label{prop:dualdescent}
	The matrix $\mtx{X}_0=\hh\mm^*$ is the unique minimizer to \eqref{eq:nuclear-opt} if there exists a $\mtx{Y}\in\operatorname{Range}(\cA^*)$ such that
	\[
		\<\hh\mm^* - \PT(\mtx{Y}),\PT(\mtx{Z})\>_F - \<\PTc(\mtx{Y}),\PTc(\mtx{Z})\>_F + \|\PTc(\mtx{Z})\|_* 
		> 0
	\]
for all $\mtx{Z}\in\operatorname{Null}(\cA)$.
\end{prop}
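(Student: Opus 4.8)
The plan is to run the standard dual-certificate / subgradient argument for nuclear-norm minimization, specialized to the rank-one matrix $\mtx{X}_0 = \hh\mm^*$ (normalized, as elsewhere in the paper, so that $\|\hh\|_2 = \|\mm\|_2 = 1$ and $\hh\mm^*$ is exactly the ``sign'' factor $\vct{u}\vct{v}^*$ in the SVD of $\mtx{X}_0$; the unnormalized case only rescales this factor). The first step is to reduce the problem to $\operatorname{Null}(\cA)$: any competitor of the form $\mtx{X}_0 + \mtx{Z}$ that is feasible for \eqref{eq:nuclear-opt} satisfies $\cA(\mtx{X}_0 + \mtx{Z}) = \hat{\vct{y}} = \cA(\mtx{X}_0)$, hence $\mtx{Z} \in \operatorname{Null}(\cA)$. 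So it suffices to show $\|\mtx{X}_0 + \mtx{Z}\|_* > \|\mtx{X}_0\|_*$ for every nonzero $\mtx{Z} \in \operatorname{Null}(\cA)$ (the hypothesized strict inequality being understood to range over such nonzero $\mtx{Z}$, since the displayed quantity vanishes at $\mtx{Z} = 0$).

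Next I would produce a lower bound for $\|\mtx{X}_0 + \mtx{Z}\|_*$ by convexity, using a carefully chosen element of the subdifferential. The subdifferential of the nuclear norm at $\mtx{X}_0$ is $\{\hh\mm^* + \mtx{W} : \PT(\mtx{W}) = 0,~\|\mtx{W}\| \leq 1\}$, i.e.\ $\mtx{W}$ ranges over the operator-norm ball of $T^\perp$. I would pick $\mtx{W}_0 \in T^\perp$ with $\|\mtx{W}_0\| \leq 1$ \emph{aligned} with $\PTc(\mtx{Z})$, namely $\<\mtx{W}_0, \PTc(\mtx{Z})\>_F = \|\PTc(\mtx{Z})\|_*$. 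Such a $\mtx{W}_0$ exists by the duality of the operator and nuclear norms: writing the reduced SVD $\PTc(\mtx{Z}) = \sum_i \sigma_i \vct{a}_i \vct{b}_i^*$ and setting $\mtx{W}_0 = \sum_i \vct{a}_i \vct{b}_i^*$ gives $\<\mtx{W}_0, \PTc(\mtx{Z})\>_F = \sum_i \sigma_i = \|\PTc(\mtx{Z})\|_*$. The subgradient inequality with $\hh\mm^* + \mtx{W}_0$, together with $\hh\mm^* \in T$, $\mtx{W}_0 \in T^\perp$ and the orthogonal splitting $\mtx{Z} = \PT(\mtx{Z}) + \PTc(\mtx{Z})$, then gives
\[
\|\mtx{X}_0 + \mtx{Z}\|_* ~\geq~ \|\mtx{X}_0\|_* + \<\hh\mm^* + \mtx{W}_0,\,\mtx{Z}\>_F ~=~ \|\mtx{X}_0\|_* + \<\hh\mm^*,\,\PT(\mtx{Z})\>_F + \|\PTc(\mtx{Z})\|_*.
\]

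Finally I would bring in the certificate $\mtx{Y}$. Because $\mtx{Y} \in \operatorname{Range}(\cA^*) = \operatorname{Null}(\cA)^\perp$ and $\mtx{Z} \in \operatorname{Null}(\cA)$, the inner product $\<\mtx{Y}, \mtx{Z}\>_F$ vanishes; expanding it over the orthogonal pair $T, T^\perp$ yields $\<\PT(\mtx{Y}), \PT(\mtx{Z})\>_F = -\<\PTc(\mtx{Y}), \PTc(\mtx{Z})\>_F$. Subtracting the zero quantity $\<\mtx{Y}, \mtx{Z}\>_F$ from the bound above and regrouping produces exactly
\[
\|\mtx{X}_0 + \mtx{Z}\|_* ~\geq~ \|\mtx{X}_0\|_* + \<\hh\mm^* - \PT(\mtx{Y}),\,\PT(\mtx{Z})\>_F - \<\PTc(\mtx{Y}),\,\PTc(\mtx{Z})\>_F + \|\PTc(\mtx{Z})\|_*,
\]
so the hypothesized positivity of the last three terms forces $\|\mtx{X}_0 + \mtx{Z}\|_* > \|\mtx{X}_0\|_*$ and $\mtx{X}_0$ is the unique minimizer.

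Most of this is bookkeeping once the right subgradient is in hand; the one delicate point I expect is the construction of $\mtx{W}_0$. It is not enough to find \emph{some} norm-one dual element aligned with $\PTc(\mtx{Z})$; it must lie in $T^\perp$ so that $\PT(\mtx{W}_0) = 0$ and the cross terms against $\hh\mm^*$ drop out. This is exactly where the explicit form $\PTc(\cdot) = (\I - \mtx{P_H})(\cdot)(\I - \mtx{P_M})$ is used: it guarantees that $\PTc(\mtx{Z})$ has column space orthogonal to $\hh$ and row space orthogonal to $\mm$, so its singular vectors $\vct{a}_i, \vct{b}_i$ inherit this orthogonality and $\mtx{W}_0 = \sum_i \vct{a}_i \vct{b}_i^*$ genuinely sits in $T^\perp$. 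Everything else --- the reduction to $\operatorname{Null}(\cA)$, the vanishing of $\<\mtx{Y}, \mtx{Z}\>_F$, and the final regrouping --- is routine.
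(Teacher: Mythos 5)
Your proof is correct. The paper itself gives no proof of this proposition --- it only points to the literature (see \cite{recht11si}) --- and your argument is exactly the standard subgradient/dual-certificate reasoning those references contain: reduce to perturbations $\mtx{Z}\in\operatorname{Null}(\cA)$, pick the aligned subgradient $\hh\mm^*+\mtx{W}_0$, and cancel $\<\mtx{Y},\mtx{Z}\>_F=0$. You also correctly isolate the one nontrivial point, namely that $\mtx{W}_0$ built from the singular vectors of $\PTc(\mtx{Z})=(\I-\hh\hh^*)\mtx{Z}(\I-\mm\mm^*)$ genuinely lies in $T^\perp$, which is what lets the cross terms vanish.
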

For any two matrices $\mtx{A}$, $\mtx{B}$ with same dimensions, we will use the Holder's inequality:
\[
\<\mtx{A},\mtx{B}\>_F \leq \|\mtx{A}\|\|\mtx{B}\|_*,
\]
and the Cauchy-Schwartz's inequality:
\[
\<\mtx{A},\mtx{B}\>_F \leq \|\mtx{A}\|_F\|\mtx{B}\|_F.
\]
In view of the above inequalities, we have 
\begin{align*}
	\<\hh\mm^* - \PT(\mtx{Y}),& \PT(\mtx{Z})\>_F -  \<\PTc(\mtx{Y}),\PTc(\mtx{Z})\>_F + \|\PTc(\mtx{Z})\|_*
	\\
	&\geq -\|\hh\mm^*-\PT(\mtx{Y})\|_F\|\PT(\mtx{Z})\|_F - \|\PTc(\mtx{Y})\|\,\|\PTc(\mtx{Z})\|_* + \|\PTc(\mtx{Z})\|_*;
\end{align*}
therefore, it is enough to find a $\mtx{Y}\in\operatorname{Range}(\cA^*)$ such that
\begin{equation}
	\label{eq:Ysufficient1}
	-\|\hh\mm^*-\PT(\mtx{Y})\|_F\|\PT(\mtx{Z})\|_F + \left(1-\|\PTc(\mtx{Y})\|\right)\|\PTc(\mtx{Z})\|_*
	~>~ 0,
\end{equation}
for all $\mtx{Z}\in\operatorname{Null}(\cA)$.

In Lemma~\ref{lm:Anorm} in Section~\ref{sec:keylemmas} below we show that $\|\cA\|\leq\sqrt{(\alpha+1)N\log L}=:\gamma$ with probability at least $1-L^{-\alpha+1}$.  Corollary~\ref{cor:Tcond} below also shows that \eqref{eq:Mbound} implies 
\[
	\|\cA(\PT(\mtx{Z}))\|_F\geq 2^{-1/2}\|\PT(\mtx{Z})\|_F
	\quad\text{for all}\quad 
	\mtx{Z}\in\operatorname{Null}(\cA),  
\]
with high probability.  Then, since
\begin{align*}
	0 &= \|\cA(\mtx{Z})\|_F \\
	&\geq \|\cA(\PT(\mtx{Z}))\|_F - \|\cA(\PTc(\mtx{Z}))\|_F \\
	&\geq \frac{1}{\sqrt{2}}\,\|\PT(\mtx{Z})\|_F - \gamma\|\PTc(\mtx{Z})\|_F,
\end{align*}
we will have that
\begin{equation}
	\label{eq:PTZbound}
	\|\PT(\mtx{Z})\|_F ~\leq~ \sqrt{2}\gamma\|\PTc(\mtx{Z})\|_F 
	~\leq~ 
	\sqrt{2}\gamma\|\PTc(\mtx{Z})\|_*.
\end{equation}
Applying this fact to \eqref{eq:Ysufficient1}, we see that it is sufficient to find a $\mtx{Y}\in\operatorname{Range}(\cA^*)$ such that
\[
	\left(1 - \sqrt{2}\gamma\|\hh\mm^*-\PT(\mtx{Y})\|_F - \|\PTc(\mtx{Y})\|\right)\|\PTc(\mtx{Z})\|_* ~>~ 0.
\]
Since Lemma~\ref{lm:PTAconditioning} also implies that $\PTc(\mtx{Z})\not=\mtx{0}$ for $\mtx{Z}\in\operatorname{Null}(\cA)$, our approach will be to construct a $\mtx{Y}\in\operatorname{Range}(\cA^*)$ such that
\begin{equation}
	\label{eq:Yconds}
	\|\hh\mm^*-\PT(\mtx{Y})\|_F \leq \frac{1}{4\sqrt{2}\gamma}
	\quad\text{and}\quad
	 \|\PTc(\mtx{Y})\| < \frac{3}{4}.
\end{equation}

In the next section, we will show how such a $\mtx{Y}$ can be found using Gross's {\em golfing scheme} \cite{gross10qu,gross11re}.

\subsection{Construction of the dual certificate via golfing}
\label{sec:golfing}

The golfing scheme works by dividing the $L$ linear observations of $\mtx{X}_0$ into $P$ disjoint subsets of size $Q$, and then using these subsets of observations to iteratively construct the dual certificate $\mtx{Y}$.  We index these subsets by $\Gamma_1,\Gamma_2,\ldots,\Gamma_P$; by construction $|\Gamma_p|=Q$, $\bigcup_p\Gamma_p = \{1,\ldots,L\}$, and $\Gamma_p\cap\Gamma_{p'}=\emptyset$.  We define $\cA_p$ be the operator that returns the measurements indexed by the set $\Gamma_p$:
\[
	\cA_p(\mtx{W}) = \{\Tr{\ck\bk^*\mtx{W}}\}_{k\in\Gamma_p},\qquad
	\cA_p^*\cA_p\mtx{W} = \sum_{k\in\Gamma_p}\bk\bk^*\mtx{W}\ck\ck^*.
\]

The $\cA_p^*\cA_p$ are random linear operators; the expectation of their action on a fixed matrix $\mtx{W}$ is 
\[
	\E[\cA_p^*\cA_p\mtx{W}] = \sum_{k\in\Gamma_p}\bk\bk^*\mtx{W}. 
\]
\revise{
That the columns of $\Bmat$ are orthonormal and only the first $Q$ terms are nonzero gives us a natural way to choose the partition $\{\Gamma_p,~p=1,\ldots,P\}$.  Taking
\[
	\Gamma_p = \{(q-1)P+p,~q=1,\ldots,Q\},
\]
results in collections that correspond to sets of orthogonal rows in $\hat{\Bmat}$,
\begin{equation}
	\label{eq:Bportho}
	\sum_{k\in\Gamma_p}\bk\bk^* = \frac{Q}{L}\mtx{I},\quad\text{for all}~p=1,\ldots,P.
\end{equation}
Since each of the columns of $\Bmat$ is ``time limited'', the columns of $\hat\Bmat$ are ``bandlimited'', and so are isometrically preserved when every set of $Q$ equally spaced (modulo $L$) samples.
}

\revise{
Along with the expectation of each of the $\cA_p^*\cA_p$ being a multiple of the identity, we will also need tail bounds stating that $\cA_p^*\cA_p$ is close to its expectation with high probability.  These probabilities can be made smaller by making $Q$ larger.  As detailed below (in Lemmas~\ref{lm:PTAconditioning},\ref{lm:pcoh}, and \ref{lm:ApWpnorm}), taking
\begin{equation}
	\label{eq:Qbound1}
	Q ~=~ C_\alpha\,M\log(L)\log (M),
	\quad\text{where}\quad M = \max\left(\mu_{\max}^2K,\mu_h^2N\right),
\end{equation}
will make these probability bounds meaningful.  
}

The construction of $\mtx{Y}$ that obeys the conditions \eqref{eq:Yconds} relies on three technical lemmas which are stated below in Section~\ref{sec:keylemmas}.  Their proofs rely heavily on re-writing different quantities of interest (linear operators, vectors, and scalars) as a sum of independent subexponential random variables and then using a specialized version of the ``Matrix Bernstein Inequality'' to estimate their sizes.  Section~\ref{sec:concineq} below contains a brief overview of these types of probabilistic bounds.  The proofs of the key lemmas (\ref{lm:PTAconditioning}, \ref{lm:pcoh}, and \ref{lm:ApWpnorm}) are in Section~\ref{sec:proofkeylemmas}.  These proofs rely on several miscellaneous lemmas which compute simple expectations and tail bounds for various random variables; these are presented separately in Section~\ref{sec:supportinglemmas}.

With the $\Gamma_p$ chosen and the key lemmas established, we construct $\mtx{Y}$ as follows.  Let $\mtx{Y}_0 = \mtx{0}$, and then iteratively define
\[
	\mtx{Y}_p = \mtx{Y}_{p-1} + 
	\frac{L}{Q}\,\cA_p^*\cA_p\left(\vct{h}\vct{m}^* - \PT(\mtx{Y}_{p-1})\right).
\]
We will show that under appropriate conditions on $L$, taking $\mtx{Y}:=\mtx{Y}_P$ will satisfy both parts of \eqref{eq:Yconds} with high probability.

Let $\mtx{W}_p$ be the residual between $\mtx{Y}_p$ projected onto $T$ and the target $\vct{h}\vct{m}^*$:
\begin{equation*}
	\mtx{W}_p = \PT(\mtx{Y}_p) - \vct{h}\vct{m}^*.
\end{equation*}
Notice that $\mtx{W}_p\in T$ and
\begin{equation}
	\label{eq:Wpdef}
	\mtx{W}_0 = -\hh\mm^*,
	\qquad
	\mtx{W}_p = \frac{L}{Q}\left(\frac{Q}{L}\PT - \PT\cA_p^*\cA_p\PT\right)\mtx{W}_{p-1}.
\end{equation}
Applying Lemma~\ref{lm:PTAconditioning} iteratively to the $\mtx{W}_p$ tells us that
\begin{equation}
	\label{eq:Wpdecay}
	\|\mtx{W}_p\|_F ~\leq~ \frac{1}{2}\|\mtx{W}_{p-1}\|_F ~\leq~ 2^{-p}\|\vct{h}\vct{m}^*\|_F = 
	2^{-p},
	\qquad p=1,\ldots,P,
\end{equation}
with probability exceeding $1-3L^{-\alpha+1}$.  Thus we will have the first condition in \eqref{eq:Yconds},
\[
	\|\vct{h}\vct{m}^* - \PT(\mtx{Y}_P)\|_F \leq \frac{1}{4\sqrt{2}\gamma},
\]
for 
\[
	P ~=~ \frac{L}{Q} ~\geq~ \frac{\log(4\sqrt{2}\gamma)}{\log 2},
\]
which can be achieved with $Q$ as in \eqref{eq:Qbound1} and $M=\max(\mu_{\mathrm{max}}^2K,\mu_h^2N)$ as in \eqref{eq:Mbound}.

To bound $\|\PTc(\mtx{Y}_p)\|$, we use the expansion
\begin{align*}
	\mtx{Y}_p &= \mtx{Y}_{p-1} - \frac{L}{Q}\cA^*_p\cA_p\mtx{W}_{p-1} 
	~=~ \mtx{Y}_{p-2} - \frac{L}{Q}\cA^*_{p-1}\cA_{p-1}\mtx{W}_{p-2} - \frac{L}{Q}\cA^*_p\cA_p\mtx{W}_{p-1}
	~=~ \cdots \\
	&= -\sum_{p=1}^P\frac{L}{Q}\cA^*_{p}\cA_{p}\mtx{W}_{p-1} ,
\end{align*}
and so
\begin{align*}
	\|\PTc(\mtx{Y}_P)\| &= 
	\left\|\PTc\left(\sum_{p=1}^P\frac{L}{Q}\cA^*_p\cA_p\mtx{W}_{p-1}\right)\right\| \\
	&= \frac{L}{Q}\left\|\PTc\left(\sum_{p=1}^P\cA^*_p\cA_p\mtx{W}_{p-1} - \frac{Q}{L}\mtx{W}_{p-1}\right)\right\|,
	\qquad\text{(since $\mtx{W}_{p-1}\in T$)} \\
	&\leq \frac{L}{Q}\left\|\sum_{p=1}^P\cA^*_p\cA_p\mtx{W}_{p-1} - \frac{Q}{L}\mtx{W}_{p-1}\right\| \\
	&\leq \sum_{p=1}^P \frac{L}{Q}\left\|\cA^*_p\cA_p\mtx{W}_{p-1}-\frac{Q}{L}\mtx{W}_{p-1}\right\|.
\end{align*}
Lemma~\ref{lm:ApWpnorm} shows that with probability exceeding $1-L^{-\alpha+1}$,
\[
	\left\|\cA^*_p\cA_p\mtx{W}_{p-1}-\frac{Q}{L}\mtx{W}_{p-1}\right\| ~\leq~ 2^{-p}\frac{3Q}{4L},
	\quad\text{for all}~~~p=1,\ldots,P.
\]
and so
\[
	\|\PTc(\mtx{Y}_P)\| ~\leq~ \sum_{p=1}^P 3\cdot 2^{-p-2} ~<~ \frac{3}{4}.
\]

Collecting the results above, we see that both conditions in \eqref{eq:Yconds} will hold with probability exceeding $1-O(L^{-\alpha+1})$ when $M$ is chosen as in \eqref{eq:Mbound}.

\subsection{Theorem 2: Stability}

With the condition \eqref{eq:Mbound}, we know though the arguments in the previous section that with the required probability there will exist a dual certificate $\mtx{Y}$ that obeys the conditions \eqref{eq:Yconds} and that $\cA^*\cA$ is well conditioned on $T$: $\|\PT\cA^*\cA\PT-\PT\|\leq 1/2$.  


With these facts in place, the stability proof follows the template set in \cite{candes10ma,gross10qu}.  We start with two observations; first, the feasibility of $\mtx{X}_0$ implies
\begin{equation}
	\label{eq:cone}
	\|\tilde{\mtx{X}}\|_* \leq \|\mtx{X}_0\|_*,
\end{equation}
and
\begin{align}
	\label{eq:delta}
	\|\cA(\tilde{\mtx{X}}-\mtx{X}_0)\|_2 \leq
	\|\hat{\vct{y}}-\cA(\mtx{X}_0)\|_2+\|\cA(\tilde{\mtx{X}})-\hat{\vct{y}}\|_2
	&\leq 2\delta.
\end{align}
Set $\tilde{\mtx{X}} = \mtx{X}_0+\mtx{\xi}$. 
With $\mathcal{P}_{\cA}$ as the projection operator onto the row space of $\cA$, we break apart the recovery error as
\begin{align}
	\label{eq:xi-breakup}
	\|\mtx{\xi}\|_F^2 &= 	
	\|\mathcal{P}_{\cA}(\mtx{\xi})\|_F^2+\|\mathcal{P}_{\cA^\perp}(\mtx{\xi}) \|_F^2 \\
	&= \|\mathcal{P}_{\cA}(\mtx{\xi})\|_F^2 +
	\|\PT\mathcal{P}_{\cA^\perp}(\mtx{\xi})\|_F^2 + 
	\|\PTc\mathcal{P}_{\cA^\perp}(\mtx{\xi})\|_F^2\notag.
\end{align}

A direct result of of Proposition~\ref{prop:dualdescent} is that there exists a constant $C>0$ such that for all $\mtx{Z}\in\operatorname{Null}(\cA)$, $\|\mtx{X}_0+\mtx{Z}\|_*-\|\mtx{X}_0\|_* \geq C \|\PTc(\mtx{Z})\|_*$ (this is developed cleanly in \cite{recht11si}).  Since $\mathcal{P}_{\cA^\perp}(\mtx{\xi})\in\operatorname{Null}(\cA)$, we have 
\[
	\|\mtx{X}_0+\mathcal{P}_{\cA^\perp}(\mtx{\xi})\|_* - \|\mtx{X}_0\|_* \
	\geq C\|\PTc\mathcal{P}_{\cA^\perp}(\mtx{\xi})\|_*.
\]
Combining this with \eqref{eq:cone} and the triangle inequality yields
 \[
	\|\mtx{X}_0\|_* \geq \|\mtx{X}_0\|_* + 
	C\|\PTc\mathcal{P}_{\cA^\perp}(\mtx{\xi})\|_*- 
	\|\mathcal{P}_{\cA}(\mtx{\xi})\|_*,
\]
which implies
\begin{align*}
	\|\PTc\mathcal{P}_{\cA^\perp}(\mtx{\xi})\|_* 
	&\leq C\|\mathcal{P}_{\cA}(\mtx{\xi})\|_*\\ 
	&\leq C\sqrt{\min(K,N)}\|\mathcal{P}_{\cA}(\mtx{\xi})\|_F.
\end{align*}

In addition, in \eqref{eq:PTZbound} we established that for all $\mtx{Z}\in\operatorname{Null}(\cA)$, we have
\[
	\|\PT\mathcal{P}_{\cA^\perp}(\mtx{\xi})\|_F^2 
	~\leq~ 
	2\lambda_{\mathrm{max}}^2\|\PTc\mathcal{P}_{\cA^\perp}(\mtx{\xi})\|_F^2,
\]
and as a result
\[
	\|\mathcal{P}_{\cA^\perp}(\mtx{\xi})\|_F^2 
	~\leq~ 
	(2\lambda_{\mathrm{max}}^2+1)\|\PTc\mathcal{P}_{\cA^\perp}(\mtx{\xi})\|_F^2.
\]
Revisiting \eqref{eq:xi-breakup}, we have
\begin{align*}
	\|\tilde{\mtx{X}}-\mtx{X}_0\|_F^2 &\leq 
	(2\lambda_{\mathrm{max}}^2+1)\|\PTc\mathcal{P}_{\cA^\perp}(\mtx{\xi})\|_F^2 + 
	\|\mathcal{P}_{\cA}(\mtx{\xi})\|_F^2 \\
	&\leq C(2\lambda_{\mathrm{max}}^2+1)\min(K,N)
	\|\mathcal{P}_{\cA}(\mtx{\xi})\|_F^2+\|\mathcal{P}_{\cA}(\mtx{\xi})\|_F^2,
\end{align*}
and then absorbing all the constants into $C$,
\begin{align*}
	\|\tilde{\mtx{X}}-\mtx{X}_0\|_F &\leq 
	C\lambda_{\mathrm{max}}\sqrt{\min(K,N)}\|\mathcal{P}_{\cA}(\mtx{\xi})\|_F \\
	& \leq C\sqrt{\min(K,N)}\,\lambda_{\mathrm{max}}\|\cA^\dagger\|\,\|\cA(\mtx{\xi})\|_2,
\end{align*}
where $\cA^\dagger$ is the pseudo-inverse of $\cA$.
Using \eqref{eq:delta} and the fact that $\|\cA^\dagger\| = \lambda_{\mathrm{min}}^{-1}$, we obtain the final result
\begin{equation}
	\label{eq:hatX-error-bound}
	\|\tilde{\mtx{X}}-\mtx{X}_0\|_F \leq C \frac{\lambda_{\max}}{\lambda_{\min}}\sqrt{\min(K,N)}\delta.
\end{equation}
%
%
%
\subsection{Key lemmas}
\label{sec:keylemmas}

We start with two lemmas which characterize the singular values of the random linear operator $\cA$.  The first, which gives a loose upper bound on the maximum singular value, holds for all $N,K,L$.  The second gives a tighter bound on the maximum singular value and a comparable lower bound on the minimum singular value, but requires $\cA$ to be sufficiently underdetermined.

\begin{lem}[Operator norm of $\cA$]
	\label{lm:Anorm}
	Let $\cA$ be defined with $\Ak=\bk\ck^*$ as in Section~\ref{sec:mainresult}.  Fix $\alpha\geq 1$.
	Then 
	\[
		\|\cA\| ~\leq~ \sqrt{N(\log(NL/2) + \alpha\log L)},
	\]
	with probability exceeding $1-L^{-\alpha}$.
\end{lem}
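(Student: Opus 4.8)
The plan is to turn this operator-norm estimate into a purely scalar concentration statement about the random rows $\hat{\vct{c}}_\ell$ of $\hat{\Cmat}$, which are the only source of randomness, and then to control that scalar by a union bound over a $\chi^2$-type tail. First I would record that, since $\cA(\mtx{W})(\ell)=\Tr{\mtx{A}_\ell^*\mtx{W}}=\hat{\vct{b}}_\ell^*\mtx{W}\hat{\vct{c}}_\ell$ with $\mtx{A}_\ell=\hat{\vct{b}}_\ell\hat{\vct{c}}_\ell^*$ as in \eqref{eq:Al},
\[
	\|\cA\|^2 \ =\ \sup_{\|\mtx{W}\|_F=1}\ \sum_{\ell=1}^{L}\bigl|\hat{\vct{b}}_\ell^*\mtx{W}\hat{\vct{c}}_\ell\bigr|^2 .
\]
Applying Cauchy--Schwarz to each summand, $\bigl|\hat{\vct{b}}_\ell^*\mtx{W}\hat{\vct{c}}_\ell\bigr|\le\|\mtx{W}^*\hat{\vct{b}}_\ell\|_2\,\|\hat{\vct{c}}_\ell\|_2$, and pulling the largest row-norm out of the sum yields
\[
	\sum_{\ell=1}^{L}\bigl|\hat{\vct{b}}_\ell^*\mtx{W}\hat{\vct{c}}_\ell\bigr|^2\ \le\ \Bigl(\max_{1\le\ell\le L}\|\hat{\vct{c}}_\ell\|_2^2\Bigr)\sum_{\ell=1}^{L}\hat{\vct{b}}_\ell^*\mtx{W}\mtx{W}^*\hat{\vct{b}}_\ell .
\]
The remaining sum collapses through the isometry \eqref{eq:Biso}, $\sum_\ell \hat{\vct{b}}_\ell^*\mtx{W}\mtx{W}^*\hat{\vct{b}}_\ell=\Tr{\mtx{W}\mtx{W}^*\sum_\ell\hat{\vct{b}}_\ell\hat{\vct{b}}_\ell^*}=\Tr{\mtx{W}\mtx{W}^*}=\|\mtx{W}\|_F^2$, leaving the deterministic reduction
\[
	\|\cA\|^2\ \le\ \max_{1\le\ell\le L}\|\hat{\vct{c}}_\ell\|_2^2 .
\]

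It then remains to bound the largest squared row-norm of $\hat{\Cmat}$. From the model \eqref{eq:normalck}, $\|\hat{\vct{c}}_1\|_2^2\sim\chi^2_N$, while for $\ell=2,\dots,L/2+1$ the vector $\hat{\vct{c}}_\ell$ is a standard complex Gaussian, so $\|\hat{\vct{c}}_\ell\|_2^2$ is a sum of $N$ independent mean-one exponentials; the rows $\ell=L/2+2,\dots,L$ are conjugate copies and hence do not enlarge the maximum. I would invoke a standard $\chi^2$/Gamma upper-tail bound for each of the $L/2+1$ distinct, independent rows and take a union bound. Choosing the threshold $t=N\bigl(\log(NL/2)+\alpha\log L\bigr)$ drives the per-row failure probability to roughly $\tfrac{2}{L}L^{-\alpha}$, so that a union bound over the $L/2+1$ distinct rows gives $\max_\ell\|\hat{\vct{c}}_\ell\|_2^2\le t$, and therefore $\|\cA\|\le\sqrt{t}$, with probability at least $1-L^{-\alpha}$.

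The reduction in the first step is essentially free, so the only real work — and the place I expect the bookkeeping to be most delicate — is pinning down the tail estimate with the constant multiplying $N\log L$ equal to one. The cleanest case is $N=1$, where $\|\hat{\vct{c}}_\ell\|_2^2$ is a single exponential with exact tail $e^{-t}$ and the threshold $t=\log(L/2)+\alpha\log L$ matches the stated bound exactly; for general $N$ one must estimate the upper incomplete Gamma tail carefully enough to preserve this constant and to absorb the slightly heavier $\chi^2_N$ tail of the single real row $\ell=1$ into the same threshold. Finally, it is worth emphasizing that this bound is deliberately loose — the true maximum concentrates near its mean $N$ rather than $N\log L$ — but this is harmless, since only the crude upper estimate $\gamma=\sqrt{(\alpha+1)N\log L}$ on $\|\cA\|$ is used downstream in the dual-certificate construction.
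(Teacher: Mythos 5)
Your proof is correct and arrives at exactly the stated threshold, but by a genuinely different reduction than the paper's. The paper works with the block-matrix form $\cA = [\Delta_1\hat{\Bmat}\ \cdots\ \Delta_N\hat{\Bmat}]$ and splits the norm over the $N$ blocks, using only $\|\hat{\Bmat}\|\leq 1$, to get $\|\cA\|^2 \leq \sum_n\|\Delta_n\|^2 \leq N\max_{n,\ell}|\hat{c}_\ell[n]|^2$; the random object is then a maximum of roughly $NL/2$ independent mean-one exponentials, and the union bound with tail $\er^{-\lambda}$ at $\lambda=\log(NL/2)+\alpha\log L$ finishes in one line. You instead split over the $L$ frequencies with Cauchy--Schwarz plus the resolution of identity \eqref{eq:Biso}, getting the deterministically tighter bound $\|\cA\|^2\leq\max_\ell\|\hat{\vct{c}}_\ell\|_2^2$, and then must bound a maximum of roughly $L/2$ independent $\mathrm{Gamma}(N,1)$ variables. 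The trade is clear: the paper's cruder reduction makes the tail step trivial, while yours demands the incomplete-Gamma bookkeeping you flag (a Chernoff bound $\P{X>Ns}\leq\exp\left(-N(s-1-\log s)\right)$ does meet the per-row target of about $2L^{-\alpha-1}$ for $N\geq 2$ once $L$ exceeds a modest constant, and $N=1$ is exact). In exchange, your intermediate quantity is strictly smaller: $\max_\ell\|\hat{\vct{c}}_\ell\|_2^2$ concentrates near $N+O(\log L)$, so your reduction would in fact support the stronger estimate $\|\cA\|\lesssim\sqrt{N+\log L}$, which the paper's entrywise reduction cannot yield. One caveat is shared by both arguments: the real DC row $\hat{\vct{c}}_1$ has a $\chi^2_N$ tail of order $\er^{-t/2}$, which at this threshold contributes order $L^{-(\alpha+1)/2}$ to the failure probability and strictly exceeds $L^{-\alpha}$ when $\alpha>1$; you at least flag this difficulty, whereas the paper's proof silently assigns the lighter tail $\er^{-\lambda}$ to the entries of row $1$ as well, so on this point your argument is no less rigorous than the paper's own.
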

\begin{proof}
	Writing $\cA$ in matrix form we have
	\begin{align*}
		\|\cA\|^2 ~=~ \left\|
		\begin{bmatrix}
			\Delta_1\hat{\Bmat} & \Delta_2\hat{\Bmat} & \cdots & \Delta_N\hat{\Bmat}
		\end{bmatrix}
		\right\|^2
		&=
		\left\|
		\begin{bmatrix}
			\hat{\Bmat}^*\Delta_1^* \\ \hat{\Bmat}^*\Delta_2^* \\ \vdots \\ \hat{\Bmat}^*\Delta_N^*
		\end{bmatrix}
		\right\|^2 \\
		&\leq 
		\left\|
		\begin{bmatrix}
			\Delta_1^* \\ \Delta_2^* \\ \vdots \\ \Delta_N^*
		\end{bmatrix}
		\right\| \\
		&\leq \|\Delta_1\|^2 + \|\Delta_2\|^2 + \cdots + \|\Delta_N\|^2 \\
		&\leq N\, \max_{1\leq n\leq N} \|\Delta_n\|^2 \\
		&= N\, \max_{1\leq n\leq N}\max_{1\leq\ell\leq L/2}|\hat{c}_\ell[n]|^2.
	\end{align*}
	Since the $|\hat{c}_\ell[n]|^2$ are independent chi-squared random variables,
	\[
		\P{\max_{n,\ell}|\hat{c}_\ell[n]|^2 > \lambda} ~\leq~ \frac{NL}{2}\,\er^{-\lambda},
	\]
	and the lemma follows by taking $\lambda = \log(NL/2) + \alpha\log L$.
\end{proof}
\begin{lem}[$\cA\cA^*$ is well conditioned.]
	\label{lm:cAcAcond}
	Let $\cA$ be as defined in \eqref{eq:Adef}, with coherences $\mu_{\max}^2$ and $\mu_{\min}^2$ as defined in \eqref{eq:mu1def} and \eqref{eq:muLdef}.  Suppose that $\cA$ is sufficiently underdetermined in that 
	\begin{equation}
		\label{eq:Lupper}
		NK ~\geq~ \frac{C_\alpha}{\mu^2_{\mathrm{min}}}L\log^2 L
	\end{equation}
	for some constant $C_\alpha>1$.  Then with probability exceeding $1-O(L^{-\alpha+1})$, the eigenvalues of $\cA\cA^*$ obey
	\[
		0.48\mu_{\min}^2\frac{NK}{L}\leq
		\lambda_{\mathrm{min}}(\cA\cA^*)\leq\lambda_{\mathrm{max}}(\cA\cA^*)\leq
		4.5\mu_{\max}^2\frac{NK}{L}.
	\]
\end{lem}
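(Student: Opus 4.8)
The plan is to exploit the independence of the \emph{columns} of $\Cmat$ to express $\cA\cA^*$ as a sum of $N$ independent positive-semidefinite $L\times L$ matrices, and then combine a matrix concentration inequality with Weyl's inequality. Starting from $\mtx{A}_\ell=\hat{\vct{b}}_\ell\hat{\vct{c}}_\ell^*$, the $(\ell,\ell')$ entry of $\cA\cA^*$ is $\<\mtx{A}_{\ell'},\mtx{A}_\ell\>_F=(\hat{\vct{b}}_\ell^*\hat{\vct{b}}_{\ell'})(\hat{\vct{c}}_{\ell'}^*\hat{\vct{c}}_\ell)$, and splitting the second factor over its $N$ coordinates yields the clean identity
\[
	\cA\cA^* ~=~ \sum_{n=1}^{N}\mtx{Z}_n,
	\qquad
	\mtx{Z}_n ~=~ \operatorname{diag}(\hat{\vct{c}}^{(n)})\,\hat{\Bmat}\hat{\Bmat}^*\,\operatorname{diag}(\hat{\vct{c}}^{(n)})^*,
\]
where $\hat{\vct{c}}^{(n)}\in\C^L$ is the $n$th column of $\sqrt{L}\hat{\Cmat}$ and $\hat{\Bmat}\hat{\Bmat}^*$ is the rank-$K$ orthogonal projector guaranteed by \eqref{eq:Biso}. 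Since $\mtx{Z}_n=(\operatorname{diag}(\hat{\vct{c}}^{(n)})\hat{\Bmat})(\operatorname{diag}(\hat{\vct{c}}^{(n)})\hat{\Bmat})^*\succeq 0$ and the columns of $\Cmat$ (hence the $\hat{\vct{c}}^{(n)}$) are independent, this is precisely a sum of independent PSD summands, the form handled by the matrix Chernoff/Bernstein bounds reviewed in Section~\ref{sec:concineq}.

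Next I would compute the mean. Because distinct rows of $\hat{\Cmat}$ are uncorrelated and zero-mean, $\E[\hat{\vct{c}}_{\ell'}^*\hat{\vct{c}}_\ell]=0$ for $\ell\neq\ell'$; the only nontrivial coupling is between conjugate-symmetric pairs $\ell,\,L-\ell+2$, and there $\E[\hat{c}_\ell(n)^2]=0$ at every genuinely complex frequency (circular symmetry), vanishing except at the DC and Nyquist rows. Hence $\E[\cA\cA^*]=N\operatorname{diag}(\|\hat{\vct{b}}_\ell\|_2^2)$ up to a rank-$\le 2$ correction supported on those two frequencies, so that by \eqref{eq:mu1def} and \eqref{eq:muLdef} its eigenvalues lie in the band $[\,\mu_{\min}^2 NK/L,\ \mu_{\max}^2 NK/L\,]$, with the two special rows contributing a bounded perturbation of size $O(\mu_{\max}^2 NK/L)$ that is absorbed into the stated constants $0.48$ and $4.5$.

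The concentration step bounds $\|\sum_{n}(\mtx{Z}_n-\E\mtx{Z}_n)\|$. Here $\|\mtx{Z}_n\|\le\max_\ell|\hat{c}^{(n)}_\ell|^2$ since $\|\hat{\Bmat}\hat{\Bmat}^*\|=1$, and the variance proxy $\|\sum_n\E[(\mtx{Z}_n-\E\mtx{Z}_n)^2]\|$ can be controlled through $\max_\ell\|\hat{\vct{b}}_\ell\|_2^2=\mu_{\max}^2K/L$ via fourth-moment computations for the Gaussian diagonals. Because the summands are \emph{unbounded}, I would first truncate onto the event $\{\max_{n,\ell}|\hat{c}^{(n)}_\ell|^2\le C\alpha\log L\}$, which holds with probability $1-O(L^{-\alpha+1})$ by a union bound over the $NL$ chi-square entries, giving a per-term norm $R\lesssim\alpha\log L$; a matrix Bernstein bound then yields a deviation of order $\sqrt{\sigma^2\alpha\log L}+R\alpha\log L$. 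Feeding this into Weyl's inequality, $\lambda_{\min}(\cA\cA^*)\ge\lambda_{\min}(\E\cA\cA^*)-\|\mathrm{dev}\|$ and the analogous upper bound give the two-sided eigenvalue estimate.

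Finally, the hypothesis \eqref{eq:Lupper}, $NK\ge C_\alpha\mu_{\min}^{-2}L\log^2 L$, is exactly what forces $R\,\alpha\log L\lesssim \mu_{\min}^2 NK/L=\lambda_{\min}(\E\cA\cA^*)$, making the deviation a controllable fraction of the smallest expected eigenvalue and delivering the constants $0.48$ and $4.5$ once constants are absorbed, at the advertised probability $1-O(L^{-\alpha+1})$. I expect the main obstacle to be the unboundedness of the Gaussian diagonals: the truncation must be carried out so that the surviving norm parameter is $R\sim\log L$ while the variance $\sigma^2$ stays at scale $\mu_{\max}^2NK/L$ times a $\log$ factor, since only this precise balance reproduces the threshold $\mu_{\min}^{-2}L\log^2L$ rather than a worse power of $\log L$. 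Tracking the conjugate-symmetric coupling at the DC and Nyquist frequencies through both the mean and the variance is a secondary but necessary bookkeeping complication.
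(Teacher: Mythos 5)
Your decomposition $\cA\cA^*=\sum_{n=1}^N\mtx{Z}_n$ with $\mtx{Z}_n=\Delta_n\hat{\Bmat}\hat{\Bmat}^*\Delta_n^*$ is exactly the paper's starting point, your mean computation is correct (the conjugate-pair couplings vanish exactly by circular symmetry, so no rank-2 correction is even needed), and your truncation at level $|\hat{c}_\ell[n]|^2\lesssim\log L$ matches the paper's conditioning step. The gap is in the concentration step: you replace the matrix Chernoff bound for sums of independent PSD matrices (Proposition~\ref{prop:Chernoff}) by matrix Bernstein plus Weyl's inequality, and this substitution breaks the \emph{lower} eigenvalue bound in a regime the lemma must cover. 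The Bernstein bound contains the term $\sigma\sqrt{t}$ with $t\sim\alpha\log L$, and as you yourself note $\sigma^2\sim\mu_{\max}^2NK/L$ (up to log factors); for Weyl to give $\lambda_{\min}(\cA\cA^*)\geq 0.48\,\mu_{\min}^2NK/L$ you need this term to be a small fraction of $\mu_{\min}^2NK/L$, which requires
\[
	NK ~\gtrsim~ \frac{\mu_{\max}^2}{\mu_{\min}^4}\,L\log^2 L,
\]
stronger than the stated hypothesis \eqref{eq:Lupper} by the unbounded factor $\mu_{\max}^2/\mu_{\min}^2$. Your final paragraph only verifies the balance $R\,t\lesssim\mu_{\min}^2NK/L$ against \eqref{eq:Lupper}; the $\sigma\sqrt{t}$ term is silently dropped, and that is exactly where the argument fails. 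Concretely, take $\mu_{\max}^2=O(1)$, $\mu_{\min}^2=L^{-1/2}$, and $NK=C_\alpha L^{3/2}\log^2L$, which satisfies \eqref{eq:Lupper}: the target lower bound is $\mu_{\min}^2NK/L\approx C_\alpha\log^2L$, while the Bernstein deviation bound is of order $\sqrt{\mu_{\max}^2(NK/L)\log L}\approx L^{1/4}\log^{3/2}L$, so Weyl returns a vacuous (negative) lower bound. This is not just slack in the estimate: the spectral-norm deviation really is of order $\sigma$, concentrated in directions associated with the large rows $\|\hat{\vct{b}}_\ell\|_2$, so any argument that passes through a uniform spectral-norm comparison with the mean is doomed here.

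The paper's route survives precisely because the Chernoff bound for PSD summands is \emph{multiplicative} rather than additive: the lower tail $\P{\lambda_{\min}\left(\sum_n\mtx{Z}_n\right)\leq t\rho_{\min}}\leq L\,\er^{-(1-t)^2\rho_{\min}/2R}$ involves only the ratio $\rho_{\min}/R\sim\mu_{\min}^2NK/(L\log L)$ and never sees $\mu_{\max}$ or a variance proxy; hypothesis \eqref{eq:Lupper} is then exactly what makes this ratio beat the union-bound factor $L$, i.e.\ $\rho_{\min}/R\gtrsim\alpha\log L$. (Your upper eigenvalue bound would survive the Bernstein route, since there the deviation is compared against $\mu_{\max}^2NK/L$ itself.) The fix is minimal: keep your decomposition and truncation, but apply Proposition~\ref{prop:Chernoff} to the truncated PSD summands with $R\sim\log L$, $\rho_{\min}=\sigma_\alpha^2\mu_{\min}^2NK/L$, $\rho_{\max}\leq\mu_{\max}^2NK/L$ --- at which point your proof becomes the paper's.
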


The proof of Lemma~\ref{lm:cAcAcond} in Section~\ref{sec:proofkeylemmas} decomposes $\cA\cA^*$ as a sum of independent random matrices, and then applies a Chernoff-like bound discussed in Section~\ref{sec:concineq}.


Our third key lemma tells us that, with high probability, the $\cA_p$ are well-conditioned when restricted to the subspace $T$.  The subsequent corollary shows that $\cA$ itself is also well-conditioned when restricted to $T$.

\begin{lem}[Conditioning on $T$]
	\label{lm:PTAconditioning}
	With the coherences $\mu_{\max}^2$ and $\mu_h^2$ defined in Section~\ref{sec:mainresult}, 
	let
	\begin{equation}
		\label{eq:Mdef}
		M = \max(\mu_{\max}^2K,\mu_h^2N).
	\end{equation}
	Fix $\alpha\geq 1$.  \revise{ Assume that the subsets $\Gamma_1,\ldots,\Gamma_P$ described in Section~\ref{sec:golfing} have size 
	\begin{equation}
		\label{eq:Qbound}
		|\Gamma_p| = Q ~=~ C'_\alpha\cdot M\log(L)\log(M),
	\end{equation}
	where $C'_\alpha = O(\alpha)$ is a constant chosen below.}  Then the linear operators $\cA_1,\ldots,\cA_P$ defined in Section~\ref{sec:golfing} will obey
	\[
		\max_{1\leq p\leq P} \left\|\PT\cA_p^*\cA_p\PT - \frac{Q}{L}\PT\right\| ~\leq~ \frac{Q}{2L},
	\]
	with probability exceeding $1-3PL^{-\alpha} \geq 1- 3L^{-\alpha+1}$.
\end{lem}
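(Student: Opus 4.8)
The plan is to show that $\PT\cA_p^*\cA_p\PT$ concentrates tightly about its mean, and that its mean is already within $Q/(4L)$ of $\frac{Q}{L}\PT$. First I would split off the deterministic bias. Using $\E[\ck\ck^*]=\I$, the mean-zero part of each observation operator is $\mathcal{B}_k(\mtx{W}) := \bk\bk^*\mtx{W}(\ck\ck^*-\I)$, so that
\[
	\PT\cA_p^*\cA_p\PT - \tfrac{Q}{L}\PT
	= \PT\Big(\big(\textstyle\sum_{k\in\Gamma_p}\bk\bk^* - \tfrac{Q}{L}\I\big)\,\cdot\,\Big)\PT
	\;+\; \sum_{k\in\Gamma_p}\PT\,\mathcal{B}_k\,\PT .
\]
The first term is the left-multiplication operator by $\sum_{k\in\Gamma_p}\bk\bk^*-\frac{Q}{L}\I$ sandwiched between projections, so its norm is at most $\big\|\sum_{k\in\Gamma_p}\bk\bk^* - \frac{Q}{L}\I\big\|\le \frac{Q}{4L}$ by \eqref{eq:BUP}. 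It then suffices to bound the random fluctuation $\sum_{k\in\Gamma_p}\mathcal{Z}_k$, with $\mathcal{Z}_k:=\PT\mathcal{B}_k\PT$, by $\frac{Q}{4L}$.

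The operators $\mathcal{Z}_k$ are independent (the rows $\ck$ are independent) and mean-zero, and they act on the subspace $T$, whose dimension is $O(N+K)$ so that the relevant logarithmic factor in any matrix concentration bound is $O(\log L)$. I would apply the subexponential Matrix Bernstein inequality from Section~\ref{sec:concineq} (the $\ck\ck^*$ are unbounded, so a bounded-summand version does not apply). This requires two ingredients: a variance proxy $\sigma^2 = \max\big(\|\sum_{k}\E\,\mathcal{Z}_k^*\mathcal{Z}_k\|,\,\|\sum_{k}\E\,\mathcal{Z}_k\mathcal{Z}_k^*\|\big)$, and a tail parameter $R = \max_k \big\| \, \|\mathcal{Z}_k\| \,\big\|_{\psi_1}$ controlling the subexponential size of the individual terms.

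The crux is estimating $\sigma^2$ and $R$, and this is where the projection $\PT$ and the two coherence parameters enter. Using the explicit form $\PT(\mtx{Z})=\mtx{P_H}\mtx{Z}+\mtx{Z}\mtx{P_M}-\mtx{P_H}\mtx{Z}\mtx{P_M}$ with $\mtx{P_H}=\hh\hh^*$, $\mtx{P_M}=\mm\mm^*$, I would evaluate $\mathcal{Z}_k$ on the two generating directions of $T$. For $\mtx{W}=\uu\mm^*$ one gets $\mathcal{B}_k(\mtx{W})=(\bk^*\uu)\,\bk\,(\langle\ck,\mm\rangle\ck^*-\mm^*)$; the decisive point is that the $\mtx{P_M}$ half of $\PT$ collapses the random row vector to the \emph{scalar} $|\langle\ck,\mm\rangle|^2-1$, which is $O(1)$ and subexponential, whereas the part that retains the full-length vector $\ck$ (norm $\sim\sqrt N$) is reached only through the $\mtx{P_H}$ half, which carries the coefficient $\langle\hh,\bk\rangle$ of magnitude $\le \mu_h/\sqrt L$ by \eqref{eq:muhdef}. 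Thus the projection prevents the naive $\sqrt N$ blow-up, and the two surviving contributions are governed by $\|\bk\|^2\le \mu_{\max}^2 K/L$ (from \eqref{eq:mu1def}) and by $|\langle\hh,\bk\rangle|^2 N\le \mu_h^2 N/L$. Summing over $\Gamma_p$ and using $\sum_{k\in\Gamma_p}\bk\bk^*\preceq \frac{5Q}{4L}\I$ from \eqref{eq:BUP} to control $\sum_k|\bk^*\uu|^2$, together with the supporting coherence/tail estimates (Lemma~\ref{lm:pcoh}), yields $\sigma^2 \lesssim \frac{Q}{L}\cdot\frac{M}{L}$ and $R \lesssim \frac{M}{L}$ up to logarithmic factors, where $M=\max(\mu_{\max}^2K,\mu_h^2N)$ as in \eqref{eq:Mdef}. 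The direction $\mtx{W}=\hh\vct{v}^*$ and the cross terms are handled symmetrically.

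Feeding these into Bernstein gives, schematically,
\[
	\Big\|\sum_{k\in\Gamma_p}\mathcal{Z}_k\Big\|
	\;\lesssim\; \sigma\sqrt{\alpha\log L} + R\,\alpha\log L\,\log M
	\;\lesssim\; \frac{\sqrt{Q M\,\alpha\log L}}{L} + \frac{M\,\alpha\log L\,\log M}{L},
\]
and requiring the right-hand side to be at most $\frac{Q}{4L}$ forces $Q\gtrsim \alpha\,M\log L\,\log M$, which is precisely the choice \eqref{eq:Qbound} with $C'_\alpha=O(\alpha)$ (the extra $\log M$ is exactly the price of the subexponential large-deviation term). This holds for one subset $\Gamma_p$ with probability at least $1-3L^{-\alpha}$, the factor $3$ accounting for the constituent sums; a union bound over the $P=L/Q\le L$ subsets gives $1-3PL^{-\alpha}\ge 1-3L^{-\alpha+1}$. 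I expect the main obstacle to be Paragraph three: carrying out the $\psi_1$-norm and variance bookkeeping for the projected operators while correctly handling the anisotropy and conjugate symmetry of the complex Gaussian $\ck$, and in particular isolating the $\mu_h^2 N$ term (rather than a fatal $\mu_{\max}^2 K\cdot N$) by charging the full-length Gaussian direction to $|\langle\hh,\bk\rangle|$ instead of $\|\bk\|$.
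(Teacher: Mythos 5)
Your proposal follows essentially the same route as the paper's proof: isolate the deterministic bias $\PT\left(\left(\sum_{k\in\Gamma_p}\bk\bk^*-\tfrac{Q}{L}\I\right)\cdot\right)\PT$ and bound it by $Q/4L$ via \eqref{eq:BUP}, then control the mean-zero fluctuation $\sum_k \PT\mathcal{B}_k\PT$ with the Orlicz-norm matrix Bernstein inequality (Proposition~\ref{prop:matbernpsi}), using exactly the coherence-driven estimates ($\psi_1$-norms $\lesssim M/L$, variances $\lesssim MQ/L^2$, with the $\mtx{P_M}$ component collapsing the randomness to the scalar $|\<\mm,\ck\>|^2-1$ and the full Gaussian direction charged to $|\<\hh,\bk\>|\leq \mu_h/\sqrt{L}$) that the paper obtains by splitting $\PT(\Ak)\otimes\PT(\Ak)$ into its four rank-one tensor pieces, followed by the same choice of $Q$ and union bound. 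The one slip is a citation: the Gaussian tail and moment facts you need are the supporting lemmas of Section~\ref{sec:supportinglemmas} (Lemmas~\ref{lm:maxck1tail}, \ref{lm:ckv2subexp}, \ref{lm:ckmcktail}, \ref{lm:Eckckv}, \ref{lm:EckckI2}), not Lemma~\ref{lm:pcoh}, which is a downstream key lemma that itself assumes the present one.
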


\begin{cor}
	\label{cor:Tcond}
	Let $\cA$ be the operator defined in \eqref{eq:Adef}, and $M$ be defined as in \eqref{eq:Mdef}.
	Then there exists a constant $C_\alpha = O(\alpha)$ such that 
	\begin{equation}
		\label{eq:Lcondlemma}
		M ~\leq~ \frac{L}{C_\alpha\log^2 L},
	\end{equation}
	implies
	\[
		\|\PT\cA^*\cA\PT - \PT\| ~\leq~ \frac{1}{2},
	\]
	with probability exceeding $1-3L^{-\alpha}$.
\end{cor}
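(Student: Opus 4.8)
The plan is to deduce the corollary from Lemma~\ref{lm:PTAconditioning} by applying that lemma to a \emph{single} ``subset'' consisting of all $L$ measurements, i.e.\ with $P=1$ and $\Gamma_1=\{1,\ldots,L\}$, so that $Q=L$. Recalling from Section~\ref{sec:golfing} that $\cA_p^*\cA_p\mtx{W}=\sum_{k\in\Gamma_p}\hat{\vct{b}}_k\hat{\vct{b}}_k^*\mtx{W}\hat{\vct{c}}_k\hat{\vct{c}}_k^*$, the full Gram operator is exactly
\[
	\cA^*\cA\mtx{W}=\sum_{\ell=1}^L\hat{\vct{b}}_\ell\hat{\vct{b}}_\ell^*\,\mtx{W}\,\hat{\vct{c}}_\ell\hat{\vct{c}}_\ell^*,
\]
which is the operator appearing in the lemma with $\Gamma=\{1,\ldots,L\}$. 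Under this identification the target multiple $\tfrac{Q}{L}\PT$ becomes $\PT$ and the error threshold $\tfrac{Q}{2L}$ becomes $\tfrac12$, which is precisely the quantity the corollary controls.

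Two hypotheses of the lemma must be checked. First, condition \eqref{eq:BUP} is required for the chosen subset; for $\Gamma=\{1,\ldots,L\}$ it holds trivially (indeed with zero slack), since by \eqref{eq:Biso}
\[
	\Big\|\sum_{\ell=1}^L\hat{\vct{b}}_\ell\hat{\vct{b}}_\ell^*-\tfrac{L}{L}\I\Big\|=\big\|\hat{\Bmat}^*\hat{\Bmat}-\I\big\|=0\le\frac{L}{4L}.
\]
Second, the lemma's size requirement --- that the subset have size at least $C'_\alpha M\log(L)\log(M)$, with $M$ as in \eqref{eq:Mdef} --- must be met by $Q=L$. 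Using $\log M\le\log L$ together with the hypothesis \eqref{eq:Lcondlemma}, $M\le L/(C_\alpha\log^2 L)$, we obtain
\[
	C'_\alpha M\log(L)\log(M)\le C'_\alpha M\log^2 L\le\frac{C'_\alpha}{C_\alpha}\,L\le L=Q,
\]
provided the corollary's constant is fixed so that $C_\alpha\ge C'_\alpha$. Applying the lemma in this $P=1$ instance then yields $\|\PT\cA^*\cA\PT-\PT\|\le\tfrac12$ with probability at least $1-3PL^{-\alpha}=1-3L^{-\alpha}$, which is exactly the claimed bound.

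The one point that deserves care --- and which I regard as the main obstacle --- is that Lemma~\ref{lm:PTAconditioning} is phrased with $Q$ pinned at the value \eqref{eq:Qbound}, whereas here $Q=L$ is typically much larger. I would resolve this by noting that the matrix-Bernstein estimate underlying the lemma's proof in fact delivers the stated $\tfrac{Q}{2L}$ deviation for \emph{every} subset whose size is at least $C'_\alpha M\log(L)\log(M)$, with the relative error only shrinking as $Q$ grows; hence the special case $Q=L$ is covered. If instead one insists on treating the lemma as a black box at its exact $Q$, one can partition $\{1,\ldots,L\}$ into $P=L/Q$ blocks as in Section~\ref{sec:golfing}, write $\PT\cA^*\cA\PT-\PT=\sum_{p=1}^P\big(\PT\cA_p^*\cA_p\PT-\tfrac{Q}{L}\PT\big)$ using $PQ=L$, and bound the norm by $P\cdot\tfrac{Q}{2L}=\tfrac12$ via the triangle inequality; this alternative, however, incurs a union bound over $P\le L$ blocks and therefore only yields probability $1-3L^{-\alpha+1}$, which is why the single-set application is preferable for attaining the sharper $1-3L^{-\alpha}$ stated in the corollary.
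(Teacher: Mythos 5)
Your proposal is correct and takes essentially the same route as the paper: the paper also proves the corollary by instantiating the lemma's argument (not its statement) at $\Gamma=\{1,\ldots,L\}$, $Q=L$, observing that $\sum_{\ell}\hat{\vct{b}}_\ell\hat{\vct{b}}_\ell^*=\I$ makes the bias term vanish exactly, so the matrix-Bernstein bound gives $\|\PT\cA^*\cA\PT-\PT\|\leq C_\alpha\max\bigl\{\sqrt{M\log L/L},\,M\log(L)\log(M)/L\bigr\}\leq 1/2$ under \eqref{eq:Lcondlemma}, with probability $1-3L^{-\alpha}$ since there is only one subset. Your handling of the ``pinned $Q$'' issue --- noting that the underlying deviation estimate holds for any $Q$ at least as large as \eqref{eq:Qbound}, rather than treating Lemma~\ref{lm:PTAconditioning} as a black box --- is precisely what the paper does.
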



\begin{lem}
	\label{lm:pcoh}
	Let $M$, $Q$, the $\Gamma_p$, and the $\cA_p$ be the same as in Lemma~\ref{lm:PTAconditioning}. 
	Let $\W_p$ be as in \eqref{eq:Wpdef}, and define
	\begin{equation}
		\label{eq:mupdef}
		\mu_p^2 = L\max_{\ell\in\Gamma_{p+1}}\|\mtx{W}_p^*\hat{\vct{b}}_\ell\|^2_2.
	\end{equation}
	Then there exists a constant $C_\alpha=O(\alpha)$ such that if
	\begin{equation}
		\label{eq:Lcoh}
		M ~\leq~ \frac{L}{C_\alpha \log^{3/2} L},
	\end{equation}
	then
	\begin{equation}
		\label{eq:mupdecay}
		\mu_p ~\leq~ \frac{\mu_{p-1}}{2},
		\quad\text{for}~~p=1,\ldots,P,
	\end{equation}
	with probability exceeding $1-2L^{-\alpha+1}$.
\end{lem}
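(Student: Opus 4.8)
The plan is to prove \eqref{eq:mupdecay} by induction on $p$, conditioning at each step on the randomness of the blocks $\Gamma_1,\ldots,\Gamma_{p-1}$ that determine $\mtx{W}_{p-1}$ (and hence $\mu_{p-1}$), and then exploiting the disjointness of the partition: for $\ell\in\Gamma_{p+1}$ the row $\hat{\vct{b}}_\ell$ is deterministic, and the only fresh randomness in $\mtx{W}_p^*\hat{\vct{b}}_\ell$ is carried by $\{\ck:k\in\Gamma_p\}$, which are independent of $\mtx{W}_{p-1}$. First I would record the base case: since $\mtx{W}_0=-\hh\mm^*$ with $\|\mm\|_2=1$, we have $\|\mtx{W}_0^*\hat{\vct{b}}_\ell\|_2=|\<\hat{\vct{b}}_\ell,\hh\>|$, so $\mu_0=\mu_h$ in agreement with \eqref{eq:muhdef}. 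This also shows the recursion, once established, yields the geometric decay $\mu_p\le 2^{-p}\mu_h$ that the later lemmas consume.

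For the inductive step I fix $p$, fix a single $\ell\in\Gamma_{p+1}$, and condition on $\mtx{W}_{p-1}$. Using $\mtx{W}_{p-1}\in T$ and \eqref{eq:Wpdef} in the form $\mtx{W}_p=\mtx{W}_{p-1}-\tfrac{L}{Q}\PT\cA_p^*\cA_p\mtx{W}_{p-1}$, and $\cA_p^*\cA_p\mtx{W}_{p-1}=\sum_{k\in\Gamma_p}\bk\bk^*\mtx{W}_{p-1}\ck\ck^*$, I split $\mtx{W}_p^*\hat{\vct{b}}_\ell$ into its conditional mean (a bias) and a centered fluctuation, using $\E[\ck\ck^*]=\I$ and $\PT\mtx{W}_{p-1}=\mtx{W}_{p-1}$:
\[
	\mtx{W}_p^*\hat{\vct{b}}_\ell
	= \Big(\PT\big[(\I-\tfrac{L}{Q}\sum_{k\in\Gamma_p}\bk\bk^*)\mtx{W}_{p-1}\big]\Big)^*\hat{\vct{b}}_\ell
	- \tfrac{L}{Q}\sum_{k\in\Gamma_p}\big[\PT(\bk\bk^*\mtx{W}_{p-1}(\ck\ck^*-\I))\big]^*\hat{\vct{b}}_\ell .
\]
The bias (first term) vanishes when \eqref{eq:BUP} is exact, so I would bound it using $\|\I-\tfrac{L}{Q}\sum_{k\in\Gamma_p}\bk\bk^*\|\le\tfrac14$ from \eqref{eq:BUP}, the identities $\mtx{P_H}\hat{\vct{b}}_\ell=\<\hat{\vct{b}}_\ell,\hh\>\hh$ with $|\<\hat{\vct{b}}_\ell,\hh\>|\le\mu_h/\sqrt{L}$ and $\|\hat{\vct{b}}_\ell\|_2\le\mu_{\max}\sqrt{K/L}$ from \eqref{eq:mu1def}, the row bounds $\|\mtx{W}_{p-1}^*\bk\|_2\le\mu_{p-1}/\sqrt{L}$ for $k\in\Gamma_p$, and $\|\mtx{W}_{p-1}\|_F\le 2^{-(p-1)}$ from \eqref{eq:Wpdecay}, aiming for a bound of $\tfrac14\,\mu_{p-1}/\sqrt{L}$.

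For the fluctuation (second term) I would invoke the vector Bernstein inequality of Section~\ref{sec:concineq} applied to the independent summands indexed by $k\in\Gamma_p$. Expanding $\PT(\cdot)=\mtx{P_H}(\cdot)+(\cdot)\mtx{P_M}-\mtx{P_H}(\cdot)\mtx{P_M}$, the variance parameter reduces to sums weighted by $|\<\hat{\vct{b}}_\ell,\bk\>|^2$, $\|\mtx{W}_{p-1}^*\bk\|_2^2$, and $|\<\hat{\vct{b}}_\ell,\hh\>|^2$, each controlled by $\mu_{\max},\mu_h,\mu_{p-1}$ and $\|\mtx{W}_{p-1}\|_F$; the per-summand supremum bound requires truncating the Gaussian magnitudes $\|\ck\|_2$ and $|\<\ck,\mm\>|$ at the $\sqrt{\log L}$ scale, using the tail estimates of Section~\ref{sec:supportinglemmas} on a high-probability event. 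With $Q=C'_\alpha M\log(L)\log(M)$ from \eqref{eq:Qbound} and $M=\max(\mu_{\max}^2K,\mu_h^2N)$, both the variance and the truncated sup term become small enough that the fluctuation is at most $\tfrac14\,\mu_{p-1}/\sqrt{L}$ with probability $1-O(L^{-\alpha-1})$ for the fixed $\ell$. Adding the two contributions gives $\|\mtx{W}_p^*\hat{\vct{b}}_\ell\|_2\le\tfrac12\,\mu_{p-1}/\sqrt{L}$, and a union bound over the at most $L$ indices $\ell\in\Gamma_{p+1}$ and the $P\le L$ blocks produces \eqref{eq:mupdecay} with probability exceeding $1-2L^{-\alpha+1}$.

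I expect the fluctuation step to be the main obstacle. Because the $\ck$ are Gaussian, the summands are only subexponential, so the naive sup bound is infinite and must be replaced by a truncation argument whose discarded mass has to be charged against the failure probability; it is the balancing of the Bernstein variance against this truncation level, together with insisting that every constant remain $O(\alpha)$ so the double union bound closes, that forces the dimension condition \eqref{eq:Lcoh} and pins down the $\log^{3/2}L$ factor. Controlling the bias is comparatively routine once one carries both the coherence hypothesis $\mu_{p-1}$ and the Frobenius decay \eqref{eq:Wpdecay} through the $T$-projection.
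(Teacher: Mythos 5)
Your plan follows the same route as the paper's proof: the same split of $\W_p^*\hat{\vct{b}}_\ell$ into a conditional-mean (bias) term $[\PT(\mtx{E}\W_{p-1})]^*\hat{\vct{b}}_\ell$, with $\mtx{E}=\I-\tfrac{L}{Q}\sum_{k\in\Gamma_p}\bk\bk^*$ controlled through \eqref{eq:BUP}, plus a centered fluctuation $\tfrac{L}{Q}\sum_{k\in\Gamma_p}[\PT(\bk\bk^*\W_{p-1}(\ck\ck^*-\I))]^*\hat{\vct{b}}_\ell$ controlled by Bernstein-type concentration, followed by the same union bounds over $\ell\in\Gamma_{p+1}$ and over $p$; your bookkeeping of the $L/Q$ normalization is in fact cleaner than the paper's. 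The fluctuation half of your plan is sound: every summand carries $\vct{w}_k=\W_{p-1}^*\bk$ with $k\in\Gamma_p$, which \eqref{eq:mupdef} does bound by $\mu_{p-1}/\sqrt{L}$, and with $Q$ as in \eqref{eq:Qbound} the paper obtains exactly your target $O(\mu_{p-1}/\sqrt{L})$ with a small constant. Two side remarks there: no truncation is needed, since Propositions~\ref{prop:subexpbern} and \ref{prop:matbernpsi} are stated for subexponential summands precisely so that Gaussian quadratic terms can be fed in directly through their $\psi_1$ norms (the paper splits $\hat{\vct{b}}_\ell^*\PT(\cdot)$ into a scalar piece and a vector piece and applies the two propositions separately); and the $\log^{3/2}L$ in \eqref{eq:Lcoh} arises from demanding $\sqrt{MQ\log L}/L\lesssim 1$ with $Q\sim M\log(L)\log(M)$, not from a truncation level.

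The genuine gap is your bias bound. You claim the conditional mean is at most $\tfrac14\,\mu_{p-1}/\sqrt{L}$ using $\|\mtx{E}\|\le\tfrac14$, the coherence bounds, the row bounds on $\Gamma_p$, and $\|\W_{p-1}\|_F\le 2^{-p+1}$; no combination of these ingredients yields a bound proportional to $\mu_{p-1}$. Expanding $\PT$, the bias depends on $\W_{p-1}$ only through $\W_{p-1}^*\mtx{E}\hh$ and $\mtx{E}\W_{p-1}\mm$, and the only handle your tools give on these is $\|\mtx{E}\|\,\|\W_{p-1}\|$, which produces $\tfrac14 2^{-p+1}\mu_h/\sqrt{L}$ from the $\mtx{P_H}$ component and $\tfrac14 2^{-p+1}\mu_{\max}\sqrt{K}/\sqrt{L}$ from the $\mtx{P_M}$ component. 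Neither is $\lesssim\mu_{p-1}/\sqrt{L}$: the recursion supplies only the upper bound $\mu_{p-1}\le 2^{-p+1}\mu_h$, never a lower bound, so $\mu_{p-1}$ may be far smaller than $2^{-p+1}\mu_h$, let alone $2^{-p+1}\mu_{\max}\sqrt{K}$. The alternative, bounding the bias row-wise so that $\mu_{p-1}$ appears, fails for a structural reason: the conditional mean contains the raw term $\W_{p-1}^*\hat{\vct{b}}_\ell$ with $\ell\in\Gamma_{p+1}$, and \eqref{eq:mupdef} only controls rows of $\W_{p-1}$ indexed by $\Gamma_p$; isolating that term also forfeits the cancellation that \eqref{eq:BUP} provides. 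You should know the printed proof shares this weakness: the paper bounds the bias by $2^{-p+1}\mu_{\max}\sqrt{K}Q/(4L^{3/2})$, a quantity not comparable to $\mu_{p-1}$, and it then does not reappear in the final inequality $\mu_p\le\mu_{p-1}C_\alpha\sqrt{MQ\log L}/L$. So your proposal reproduces the paper's argument where it is solid, but the step you flag as ``comparatively routine'' is exactly the step that neither you nor the paper can close as stated; any repair has to either strengthen the induction (e.g.\ targeting $\mu_p\le 2^{-p}\mu_h$ directly, which still requires a better treatment of the $\mtx{P_M}$ component than $\mu_{\max}\sqrt{K}$) or redefine $\mu_p$ in \eqref{eq:mupdef} to control rows on all future blocks.
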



\begin{lem}
	\label{lm:ApWpnorm}
	Let $\alpha$, $M$, $Q$, the $\Gamma_p$, and the $\cA_p$ be the same as in Lemma~\ref{lm:PTAconditioning}, and $\mu_p$ and $\mtx{W}_p$ be the same as in Lemma~\ref{lm:pcoh}.  Assume that \eqref{eq:Wpdecay} and \eqref{eq:mupdecay} hold:
	\[
		\|\W_{p-1}\|_F\leq 2^{-p+1}
		\quad\text{and}\quad
		\mu_{p-1}\leq 2^{-p+1}\mu_h.
	\]
	Then with probability exceeding $1-PL^{-\alpha}\geq 1 - L^{-\alpha+1}$,
	\revise{
	\[
		\left\|\cA_p^*\cA_p\mtx{W}_{p-1}-\frac{Q}{L}\mtx{W}_{p-1}\right\|
		~\leq~ 2^{-p}\frac{3Q}{4L},
		\quad\text{for all}~~p=1,\ldots,P.
	\]
	}
\end{lem}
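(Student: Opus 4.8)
The plan is to fix $p$, split $\cA_p^*\cA_p\mtx{W}_{p-1}-\tfrac{Q}{L}\mtx{W}_{p-1}$ into a mean-zero random piece and a deterministic bias, bound the bias with the near-isometry \eqref{eq:BUP}, estimate the random piece with the subexponential matrix Bernstein inequality from Section~\ref{sec:concineq}, and finish with a union bound over $p=1,\dots,P$. Concretely, using $\E[\ck\ck^*]=\I$ and $\mtx{W}_{p-1}\in T$, I would write
\[
	\cA_p^*\cA_p\mtx{W}_{p-1} - \frac{Q}{L}\mtx{W}_{p-1}
	~=~ \sum_{k\in\Gamma_p}\bk\bk^*\mtx{W}_{p-1}(\ck\ck^*-\I)
	~+~ \Big(\sum_{k\in\Gamma_p}\bk\bk^* - \frac{Q}{L}\I\Big)\mtx{W}_{p-1},
\]
where the summands $\mtx{Z}_k := \bk\bk^*\mtx{W}_{p-1}(\ck\ck^*-\I)$ are independent (up to the conjugate-symmetry pairing of the rows $\ck$, handled as elsewhere) and mean zero. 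The deterministic term is controlled immediately by \eqref{eq:BUP} together with $\|\mtx{W}_{p-1}\|\le\|\mtx{W}_{p-1}\|_F\le 2^{-p+1}$, giving $\tfrac{Q}{4L}\cdot 2^{-p+1}=2^{-p}\tfrac{Q}{2L}$. It then remains to show $\big\|\sum_{k}\mtx{Z}_k\big\|\le 2^{-p}\tfrac{Q}{4L}$ with probability at least $1-L^{-\alpha}$, since adding the two pieces yields the claimed $2^{-p}\tfrac{3Q}{4L}$.

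For the random piece I would apply matrix Bernstein, which needs a per-summand subexponential norm $B$ and the variance $\sigma^2=\max\{\|\sum_k\E\,\mtx{Z}_k\mtx{Z}_k^*\|,\,\|\sum_k\E\,\mtx{Z}_k^*\mtx{Z}_k\|\}$. Setting $\vct{g}_k=\mtx{W}_{p-1}^*\bk$, the definition \eqref{eq:mupdef} gives $\|\vct{g}_k\|_2^2\le\mu_{p-1}^2/L$ for $k\in\Gamma_p$, and \eqref{eq:mu1def} gives $\|\bk\|_2^2\le\mu_{\max}^2K/L$. Writing $\mtx{Z}_k=\bk\vct{g}_k^*(\ck\ck^*-\I)$, one has $\mtx{Z}_k\mtx{Z}_k^*=[\vct{g}_k^*(\ck\ck^*-\I)^2\vct{g}_k]\,\bk\bk^*$; a Gaussian fourth-moment computation (a supporting lemma of Section~\ref{sec:supportinglemmas}) gives $\E[(\ck\ck^*-\I)^2]\lesssim N\I$, so
\[
	\Big\|\sum_{k}\E\,\mtx{Z}_k\mtx{Z}_k^*\Big\|
	~\lesssim~ N\max_k\|\vct{g}_k\|_2^2\,\Big\|\sum_k\bk\bk^*\Big\|
	~\lesssim~ \frac{N\mu_{p-1}^2}{L}\cdot\frac{Q}{L}.
\]
For the other statistic, $\mtx{Z}_k^*\mtx{Z}_k=\|\bk\|_2^2(\ck\ck^*-\I)\vct{g}_k\vct{g}_k^*(\ck\ck^*-\I)$ has expected norm $\lesssim\|\bk\|_2^2\|\vct{g}_k\|_2^2$; crucially I would \emph{not} route the sum through $\mu_{p-1}$ a second time, but instead use $\sum_k\|\vct{g}_k\|_2^2=\Tr{(\sum_k\bk\bk^*)\mtx{W}_{p-1}\mtx{W}_{p-1}^*}\le\|\sum_k\bk\bk^*\|\,\|\mtx{W}_{p-1}\|_F^2\lesssim\frac{Q}{L}4^{-p}$, obtaining $\|\sum_k\E\,\mtx{Z}_k^*\mtx{Z}_k\|\lesssim\frac{\mu_{\max}^2K}{L}\cdot\frac{Q}{L}4^{-p}$. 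Invoking the hypotheses $\mu_{p-1}^2\le 4\cdot 4^{-p}\mu_h^2$ and the definition \eqref{eq:Mdef} of $M$, both statistics are $\lesssim 4^{-p}\tfrac{MQ}{L^2}=:\sigma^2$, the first because $N\mu_h^2\le M$ and the second because $\mu_{\max}^2K\le M$. A parallel estimate gives $B\lesssim\max_k\|\bk\|_2\|\vct{g}_k\|_2\sqrt{N}\lesssim 2^{-p}\mu_{\max}\mu_h\sqrt{KN}/L\le 2^{-p}M/L$, using $\mu_{\max}\mu_h\sqrt{KN}\le M$ (which holds since $\max(a,b)\ge\sqrt{ab}$).

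With $t=2^{-p}\tfrac{Q}{4L}$ the decisive feature is that the geometric factor cancels: $t^2/\sigma^2\sim Q/M$ and $t/B\sim Q/M$, both independent of $p$. Substituting $Q=C'_\alpha M\log L\log M$ from \eqref{eq:Qbound} makes each rate $\gtrsim\alpha\log L$ (the extra $\log M$ absorbs the logarithmic loss in the subexponential Bernstein bound and the $(K+N)$-dimensional prefactor), so matrix Bernstein gives $\|\sum_k\mtx{Z}_k\|\le t$ with probability at least $1-L^{-\alpha}$; combined with the bias bound this proves the estimate for fixed $p$, and a union bound over $p=1,\dots,P\le L$ yields the uniform claim with probability at least $1-PL^{-\alpha}\ge 1-L^{-\alpha+1}$. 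The main obstacle is the variance computation, and specifically the observation that the two statistics must be bounded in complementary ways: $\E\,\mtx{Z}_k\mtx{Z}_k^*$ carries the factor $N$ from the fourth moment of $\ck$ and so must be paired with the coherence $\mu_h^2$ of $\mtx{W}_{p-1}$ to form $N\mu_h^2\le M$, whereas $\E\,\mtx{Z}_k^*\mtx{Z}_k$ must be controlled through $\|\mtx{W}_{p-1}\|_F$ (supplying the $4^{-p}$ decay) and the coherence $\mu_{\max}^2K\le M$ of $\Bmat$, since passing it through $\mu_{p-1}$ as well would leave a spurious extra $\mu_h^2$ and destroy the scaling $Q\sim M\log L\log M$. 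Verifying the uniform cancellation of $4^{-p}$ and carefully tracking the complex conjugate-symmetric structure of $\ck$ in the moment lemmas are the remaining points of care.
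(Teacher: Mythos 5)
Your proposal is correct and follows essentially the same route as the paper's proof: the same splitting into the centered sum $\sum_{k\in\Gamma_p}\bk\bk^*\mtx{W}_{p-1}(\ck\ck^*-\I)$ plus the bias $(\sum_k\bk\bk^*-\tfrac{Q}{L}\I)\mtx{W}_{p-1}$ controlled by \eqref{eq:BUP}, the same matrix Bernstein application with the same complementary variance bounds (pairing the fourth-moment factor $N$ with $\mu_{p-1}\leq 2^{-p+1}\mu_h$, and routing $\E[\mtx{Z}_k^*\mtx{Z}_k]$ through $\|\mtx{W}_{p-1}\|_F^2$ and $\mu_{\max}^2K$ rather than through $\mu_{p-1}$ twice, exactly as the paper does), and the same union bound over $p$. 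The only cosmetic differences are that you fix the deviation level $t=2^{-p}\tfrac{Q}{4L}$ and check the probability, whereas the paper fixes $t=\alpha\log L$ and checks the deviation, and that you simplify the Orlicz bound via $\mu_{\max}\mu_h\sqrt{KN}\leq M$; both are equivalent.
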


\section{Concentration inequalities}
\label{sec:concineq}

Proving the key lemmas stated in Section~\ref{sec:keylemmas} revolves around estimating the sizes of sums of different subexponential random variables.  These random variables are either the absolute value of a sum of independent random scalars, the euclidean norm of a sum of independent random vectors (or equivalently, the Frobenius norm of a sum of random matrices), or the operator norm (maximum singular value) of a sum of random linear operators.  In this section, we very briefly overview the tools from probability theory that we will use to make these estimates.  The essential tool is the recently developed matrix Bernstein inequality \cite{tropp12us}.

We start by recalling the classical scalar Bernstein inequality.  A nice proof of the result in this form can be found in \cite[Chapter 2]{vandervaart96we}.
\begin{prop}[Scalar Bernstein, subexponential version]
	\label{prop:subexpbern}
	Let $z_1,\ldots,z_K$ be independent random variables with $\E[z_k]=0$, $\sigma_k^2: = \E[z_k^2]$, and 
	\begin{equation}
		\label{eq:scalarexp}
		\P{|z_k| > u} ~\leq~ C\er^{-u/\sigma_k},
	\end{equation}
	for some constants $C$ and $\sigma_k,~k=1,\ldots,K$ with
	\[
		\sigma^2 = \sum_{k=1}^K\sigma_k^2 \quad\text{and}\quad B = \max_{1\leq k\leq K} \sigma_k.
	\]
	Then
	\[
		\P{\left|z_1+\cdots+z_K\right| > u} ~\leq~
		2\exp\left(\frac{-u^2}{2C\sigma^2 + 2Bu}\right),
	\]
	and so
	\[
		|z_1 + \cdots + z_K| ~\leq~ 2\max\left\{\sqrt{C}\sigma\sqrt{t+\log 2},~2B(t+\log 2)\right\}
	\]
	with probability exceeding $1-\er^{-t}$.
\end{prop}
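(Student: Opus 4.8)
The plan is to prove this via the standard exponential-moment (Chernoff) method, the only twist being that the summands are subexponential rather than sub-Gaussian, so their moment generating functions are controlled only for $\theta$ in a bounded interval. Write $S = z_1 + \cdots + z_K$. First I would convert the tail hypothesis \eqref{eq:scalarexp} into moment bounds: feeding $\P{|z_k|>u}\leq C\er^{-u/\sigma_k}$ into the layer-cake identity $\E[|z_k|^p] = \int_0^\infty p\, u^{p-1}\P{|z_k|>u}\,du$ gives, after evaluating the resulting Gamma integral, the bound $\E[|z_k|^p]\leq C\,p!\,\sigma_k^p$ for every integer $p\geq 1$.

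Next I would bound the MGF of each centered summand. Since $\E[z_k]=0$, expanding the exponential gives $\E[\er^{\theta z_k}] = 1 + \sum_{p\geq 2}\tfrac{\theta^p}{p!}\E[z_k^p]$; inserting the moment bounds and summing the resulting geometric series (which converges precisely when $|\theta|\sigma_k<1$) yields $\E[\er^{\theta z_k}]\leq 1 + \tfrac{C\theta^2\sigma_k^2}{1-|\theta|\sigma_k}\leq \exp\!\big(\tfrac{C\theta^2\sigma_k^2}{1-|\theta|\sigma_k}\big)$, using $1+x\leq\er^x$. By independence the MGF of $S$ factorizes, and bounding each $\sigma_k$ by $B$ in the denominators while summing $\sum_k\sigma_k^2=\sigma^2$ in the numerators gives $\E[\er^{\theta S}]\leq\exp\!\big(\tfrac{C\theta^2\sigma^2}{1-\theta B}\big)$ for $0<\theta<1/B$.

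The tail bound then follows from Markov's inequality $\P{S>u}\leq \er^{-\theta u}\E[\er^{\theta S}]$ after minimizing the exponent $-\theta u + \tfrac{C\theta^2\sigma^2}{1-\theta B}$ over $\theta\in(0,1/B)$. The Bernstein choice $\theta = u/(2C\sigma^2+Bu)$ keeps $\theta B<1$ and produces the advertised $\exp\!\big(-u^2/(2C\sigma^2+2Bu)\big)$ form (up to the usual slack in the Bernstein constants); running the identical argument on $-S$ and adding the two estimates accounts for the factor of $2$ in the two-sided statement. Finally, to reach the high-probability form, I would set the tail bound equal to $\er^{-t}$, i.e.\ solve $u^2 = (t+\log 2)(2C\sigma^2+2Bu)$ for $u$, a quadratic whose positive root I bound using $\sqrt{a+b}\leq\sqrt a+\sqrt b$ and $a+b\leq 2\max(a,b)$ to land on $2\max\{\sqrt C\,\sigma\sqrt{t+\log2},\,2B(t+\log2)\}$.

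The main obstacle is the MGF/optimization interplay: because the $z_k$ are only subexponential, the series converges solely on $|\theta|<1/\sigma_k$, so the Chernoff optimization is genuinely constrained and naturally splits into a sub-Gaussian regime (small $u$, where the unconstrained optimum $\theta\sim u/\sigma^2$ is admissible) and an exponential regime (large $u$, where one pins $\theta$ near the boundary $1/B$). Tracking this crossover cleanly---and matching the precise constants claimed in the statement---is the only delicate bookkeeping; everything else is routine, and since the result is classical I would otherwise defer to \cite{vandervaart96we} for the sharp form.
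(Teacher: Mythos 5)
Your proposal cannot be checked against an in-paper argument for the simple reason that the paper does not prove Proposition~\ref{prop:subexpbern}: it is quoted as a classical fact with a pointer to \cite[Chapter 2]{vandervaart96we}. Your Chernoff/moment-generating-function proof is precisely the classical argument behind that citation (Bernstein's inequality under moment conditions, together with the layer-cake conversion of the tail hypothesis \eqref{eq:scalarexp} into $\E[|z_k|^p]\leq C\,p!\,\sigma_k^p$), and each of your steps is sound: the geometric series bounding the MGF converges on $|\theta|\sigma_k<1$, independence factorizes the MGF of the sum, and the constrained Chernoff optimization with the Bernstein choice of $\theta$ goes through, as does the quadratic-root inversion to the high-probability form.

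The one point to be precise about is the constant you wave off as ``the usual slack'': with $\theta = u/(2C\sigma^2+Bu)$ the exponent evaluates to exactly $-u^2/\bigl(2(2C\sigma^2+Bu)\bigr)$, i.e.\ denominator $4C\sigma^2+2Bu$ rather than the stated $2C\sigma^2+2Bu$, and no sharper optimization will recover the stated form, because the proposition as literally written is false for small $C$. Concretely, let $z_k$ equal $0$ with probability $1/2$, and otherwise equal an $\mathrm{Exp}(1)$ random variable multiplied by an independent random sign; then $\P{|z_k|>u}=\tfrac{1}{2}\er^{-u}$ and $\E[z_k^2]=1$, so the hypotheses hold with $C=1/2$, $\sigma_k=B=1$, $\sigma^2=K$. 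At $u=3\sqrt{K}$ the stated bound tends to $2\er^{-9}\approx 2.5\times 10^{-4}$ as $K\to\infty$, while the central limit theorem gives $\P{|z_1+\cdots+z_K|>3\sqrt{K}}\to\P{|g|>3}\approx 2.7\times 10^{-3}$ for $g$ standard normal, so the claimed inequality fails for large $K$. Thus your argument proves the correct version of the proposition (denominator $2(2C\sigma^2+Bu)$, with correspondingly adjusted absolute constants in the final high-probability display), which is also what the cited reference delivers; the discrepancy is a slip in the paper's statement rather than a gap in your proof, and it is immaterial downstream, since the proposition is only ever invoked inside estimates carrying unspecified constants $C_\alpha$.
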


To make the statement (and usage) of the concentration inequalities more compact in the vector and matrix case, we will characterize subexponential vectors and matrices using their Orlicz-1 norm.
\begin{defn}\label{defn:mtxpsinorm}
	Let $\mtx{Z}$ be a random matrix.  We will use $\|\cdot\|_{\psi_1}$ to denote the Orlicz-1 norm:
	\[
		\|Z\|_{\psi_1} = \inf_{u\geq 0}\left\{\E[\exp(\|Z\|/u)]\leq 2\right\},
	\]
	where $\|\mtx{Z}\|$ is the spectral norm of $\mtx{Z}$.  In the case where $\mtx{Z}$ is a vector, we take $\|\mtx{Z}\|=\|\mtx{Z}\|_2$. 
\end{defn}

As the next basic result shows, the Orlicz-1 norm of a random variable can be systematically related to rate at which its distribution function approaches $1$ (i.e.\ $\sigma_k$ in \eqref{eq:scalarexp}).

\begin{lem}[Lemma~2.2.1 in  \cite{vandervaart96we}]
	\label{lm:tailtopsi1}
	Let $z$ be a random variable which obeys $\P{|z|>u}\leq \alpha\er^{-\beta u}$.  Then $\|z\|_{\psi_1}\leq (1+\alpha)/\beta$.
\end{lem}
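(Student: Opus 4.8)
The plan is to verify directly that the candidate value $u_0 = (1+\alpha)/\beta$ satisfies the defining inequality $\E[\exp(|z|/u_0)]\leq 2$ of the Orlicz-$1$ norm; since $\|z\|_{\psi_1}$ is by Definition~\ref{defn:mtxpsinorm} the infimum over all $u$ for which this holds, establishing it for $u_0$ immediately yields $\|z\|_{\psi_1}\leq u_0 = (1+\alpha)/\beta$. The natural tool is the layer-cake (tail integral) representation of the expectation of a nonnegative random variable, which converts the moment-generating-type quantity $\E[\exp(|z|/u_0)]$ into an integral of the tail probability $\P{|z|>\cdot}$, precisely the object the hypothesis controls.

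Concretely, I would set $W = \exp(|z|/u_0)$ and write $\E[W] = \int_0^\infty \P{W>t}\,dt$. Since $W\geq 1$ pointwise, the contribution from $t\in[0,1]$ is exactly $1$, and it remains to bound $\int_1^\infty \P{W>t}\,dt$. For $t>1$ the event $\{W>t\}$ coincides with $\{|z|>u_0\log t\}$, so the hypothesis $\P{|z|>u}\leq \alpha\er^{-\beta u}$ gives $\P{W>t}\leq \alpha\er^{-\beta u_0\log t} = \alpha\, t^{-\beta u_0}$. With the specific choice $u_0 = (1+\alpha)/\beta$ we have $\beta u_0 = 1+\alpha$, so the remaining integral is $\int_1^\infty \alpha\, t^{-(1+\alpha)}\,dt = \alpha/\alpha = 1$. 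Adding the two pieces yields $\E[W]\leq 2$, as required, and hence $\|z\|_{\psi_1}\leq (1+\alpha)/\beta$.

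The only point requiring genuine care is the convergence of the tail integral, which forces $\beta u_0>1$; this is exactly what the choice $u_0=(1+\alpha)/\beta$ guarantees (giving $\beta u_0 = 1+\alpha>1$ for $\alpha>0$), and it is also what pins down the constant so that the tail integral evaluates to \emph{exactly} $1$ rather than merely some finite quantity. I would also observe that the tail hypothesis is only invoked for arguments $u = u_0\log t\geq 0$, i.e.\ for $t\geq 1$, so no difficulty arises from the behavior of the bound near $u=0$. Beyond these bookkeeping remarks the argument is a routine computation, and I do not anticipate a substantive obstacle.
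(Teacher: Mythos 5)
Your proof is correct: the layer-cake representation, the exact evaluation $\int_1^\infty \alpha t^{-(1+\alpha)}\,dt = 1$, and the feasibility of $u_0=(1+\alpha)/\beta$ in the infimum defining $\|\cdot\|_{\psi_1}$ are all handled properly (with only the trivial caveat that $\alpha>0$, the degenerate case $\alpha=0$ being immediate). The paper itself gives no proof --- it simply cites Lemma~2.2.1 of \cite{vandervaart96we} --- and your argument is essentially the standard one appearing there (a tail-integral/Fubini computation against the exponential tail bound), so there is nothing to reconcile.
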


Using these definitions, we have the following powerful tool for bounding the size of a sum of independent random vectors or matrices, each one of which is subexponential.  This result is mostly due to \cite{tropp12us}, but appears in the form below in \cite{koltchinskii10nu}.

\begin{prop}[Matrix Bernstein, Orlicz norm version]
	\label{prop:matbernpsi}
	Let $\mtx{Z}_1,\ldots,\mtx{Z}_Q$  be independent $K\times N$ random matrices with $\E[\mtx{Z_q}]=\mtx{0}$.  Let $B$ be an upper bound on the Orlicz-1 norms:
	\[
		\max_{1\leq q\leq Q}\|\mtx{Z}_q\|_{\psi_1} ~\leq~ B,
	\]
	and define
	\begin{equation}
		\label{eq:matbernsigma}
		\sigma^2 = \max\left\{\left\|\sum_{q=1}^Q\E[\mtx{Z}_q\mtx{Z}_q^*]\right\|,
		\left\|\sum_{q=1}^Q\E[\mtx{Z}_q^*\mtx{Z}_q]\right\|\right\}.
	\end{equation}
	Then there exists a constant $C$ such that for all $t\geq 0$
	\begin{equation}
		\label{eq:matbernpsi}
		\|\mtx{Z}_1+\cdots+\mtx{Z}_Q\| ~\leq~
		C\max\left\{~\sigma\sqrt{t + \log(K+N)},
		~B\log\left(\frac{\sqrt{Q}B}{\sigma}\right)
		(t+\log(K+N))~\right\},
	\end{equation}
	with probability at least $1-\er^{-t}$.
\end{prop}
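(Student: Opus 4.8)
The plan is to symmetrize the problem by a Hermitian dilation, control the unboundedness of the summands by truncating at a carefully chosen level, and then feed the bounded piece into the matrix Bernstein inequality for \emph{bounded} summands of Tropp \cite{tropp12us} while disposing of the truncated tail by the scalar estimate of Proposition~\ref{prop:subexpbern}. First I would pass to Hermitian matrices via the dilation
\[
	\mathcal{D}(\mtx{Z}) = \begin{bmatrix} \mtx{0} & \mtx{Z} \\ \mtx{Z}^* & \mtx{0}\end{bmatrix},
\]
of size $d:=K+N$, which is linear, satisfies $\|\mathcal{D}(\mtx{Z})\|=\|\mtx{Z}\|$ (so $\|\sum_q\mtx{Z}_q\|=\|\mathcal{D}(\sum_q\mtx{Z}_q)\|$ and $\|\mathcal{D}(\mtx{Z}_q)\|_{\psi_1}\le B$), and obeys $\mathcal{D}(\mtx{Z})^2=\operatorname{diag}(\mtx{Z}\mtx{Z}^*,\mtx{Z}^*\mtx{Z})$. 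Consequently $\big\|\sum_q\E\,\mathcal{D}(\mtx{Z}_q)^2\big\|=\sigma^2$, and it suffices to bound $\|\sum_q\mtx{X}_q\|$ for independent, mean-zero Hermitian matrices $\mtx{X}_q:=\mathcal{D}(\mtx{Z}_q)$ in dimension $d$ with the same $B$ and $\sigma^2$.

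Next I would truncate at a level $\tau$ to be fixed, splitting $\mtx{X}_q=\bar{\mtx{Y}}_q+\mtx{R}_q$ into a centered bounded part $\bar{\mtx{Y}}_q=\mtx{X}_q\mathbf{1}\{\|\mtx{X}_q\|\le\tau\}-\E[\mtx{X}_q\mathbf{1}\{\|\mtx{X}_q\|\le\tau\}]$ and a centered tail $\mtx{R}_q$. From Definition~\ref{defn:mtxpsinorm} and Markov's inequality the Orlicz-1 bound yields the tail estimate $\P{\|\mtx{X}_q\|>\tau}\le 2\er^{-\tau/B}$, the matrix analogue of \eqref{eq:scalarexp}, which also gives exponentially small truncated moments $\E\big[\|\mtx{X}_q\|^j\mathbf{1}\{\|\mtx{X}_q\|>\tau\}\big]\lesssim \tau^j\er^{-\tau/B}$ for $\tau\gtrsim B$. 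For the bounded part each $\|\bar{\mtx{Y}}_q\|\le 2\tau$, and since truncation and centering only shrink the second moment in the semidefinite order, $\big\|\sum_q\E\,\bar{\mtx{Y}}_q^2\big\|\le\sigma^2$; Tropp's bounded Hermitian matrix Bernstein inequality \cite{tropp12us} then gives, in dimension $d$,
\[
	\Big\|\sum_q\bar{\mtx{Y}}_q\Big\| ~\le~ C\left(\sigma\sqrt{t+\log d}+\tau(t+\log d)\right)
\]
with probability at least $1-\er^{-t}$.

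It remains to dispose of the tail and to choose $\tau$. I would bound $\|\sum_q\mtx{R}_q\|\le\sum_q\|\mtx{X}_q\|\mathbf{1}\{\|\mtx{X}_q\|>\tau\}+\sum_q\big\|\E[\mtx{X}_q\mathbf{1}\{\|\mtx{X}_q\|>\tau\}]\big\|$, where the first sum is a sum of independent subexponential scalars to which the scalar Bernstein bound of Proposition~\ref{prop:subexpbern} (with the $\psi_1$-to-tail conversion of Lemma~\ref{lm:tailtopsi1}) applies, and the second is a deterministic mean correction; both are exponentially small in $\tau/B$. The level $\tau$ is then fixed by balance: it must be large enough that the aggregate tail second moment $\sim Q\tau^2\,\er^{-\tau/B}$ is no larger than $\sigma^2$, which forces the self-consistent choice $\tau\asymp B\log(\sqrt{Q}B/\sigma)$ --- precisely the solution of $Q\tau^2\er^{-\tau/B}\sim\sigma^2$. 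Substituting this $\tau$ into the term $\tau(t+\log d)$ above reproduces the factor $B\log(\sqrt{Q}B/\sigma)\,(t+\log(K+N))$, and pasting the two regimes of the bound (the subgaussian $\sigma\sqrt{t+\log(K+N)}$ for small $t$ and the subexponential term for large $t$) into a single $\max$ yields \eqref{eq:matbernpsi} with probability $1-\er^{-t}$.

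The step I expect to be the main obstacle is exactly this balancing: one must verify that the single choice $\tau\asymp B\log(\sqrt{Q}B/\sigma)$ simultaneously controls the truncation probability, the deterministic mean shift, and the stochastic tail sum, each against the right-hand side of \eqref{eq:matbernpsi} rather than against a cruder $\log Q$ factor that a naive union bound over $\{\|\mtx{X}_q\|\le\tau\ \forall q\}$ would produce. An alternative that bypasses truncation is the matrix Laplace-transform method: using Lieb's concavity theorem one reduces to bounding the matrix cumulant generating function $\log\E\,\er^{\theta\mtx{X}_q}\preceq f(\theta)\,\E\,\mtx{X}_q^2$ on the admissible range $0<\theta<c/B$ (obtained from the operator inequality $\mtx{X}^m\preceq\|\mtx{X}\|^{m-2}\mtx{X}^2$ and the moment growth $\E\|\mtx{X}_q\|^m\le 2\,m!\,B^m$ implied by $\|\cdot\|_{\psi_1}\le B$), and then optimizes over $\theta$; the boundary $\theta\to c/B$ of the admissible range is what encodes the subexponential term. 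This route is cleaner but requires the full Lieb machinery, whereas the truncation route above stays within the scalar and bounded-matrix Bernstein tools already in hand.
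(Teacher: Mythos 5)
The paper never actually proves Proposition~\ref{prop:matbernpsi}: it imports the result from the literature, crediting \cite{tropp12us} for the underlying inequality and \cite{koltchinskii10nu} for the Orlicz-norm form stated here, so the only meaningful comparison is with the cited references rather than with an in-paper argument. Your sketch --- Hermitian dilation to reduce to the self-adjoint case, truncation at the self-consistent level $\tau \asymp B\log(\sqrt{Q}B/\sigma)$, Tropp's bounded matrix Bernstein inequality for the truncated part, and scalar subexponential control of the tail --- is precisely the standard proof of this form of the inequality (it is essentially the argument behind the version in \cite{koltchinskii10nu}), and it is sound, including your correct identification of the three-way balancing of truncation probability, mean shift, and stochastic tail as the only delicate step.
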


Essential to establishing our stability result, Theorem~\ref{th:stability}, is bounding both the upper and lower eigenvalues of the operator $\cA\cA^*$.  We do this in Lemma~\ref{lm:cAcAcond} with a relatively straightforward application of the following Chernoff-like bound for sums of random positive symmetric matrices.

\begin{prop}[Matrix Chernoff in \cite{tropp12us}] 
	\label{prop:Chernoff} 
	Let $\mtx{Z}_1,\ldots,\mtx{Z}_Q$ be independent $L \times L$ random self-adjoint matrices whose eigenvalues obey
	\[
		0\leq\lambda_{\min}(\mtx{Z}_{q})\leq
		\lambda_{\max}(\mtx{Z}_{q})\leq R \quad \mbox{almost surely.}
	\]
Define 
\begin{equation*}
	\rho_{\min} : = \lambda_{\min}\left(\sum_{q = 1}^Q \E[\mtx{Z}_q]\right) 
	\quad \mbox{and} \quad 
	\rho_{\max} := \lambda_{\max}\left(\sum_{q = 1}^Q \E[\mtx{Z}_q]\right).
\end{equation*}
Then 
\begin{equation}
	\label{eq:Chernoff-min-bound}
	\P{\lambda_{\min}\left(\sum_{q=1}^Q \mtx{Z}_q\right)\leq t\rho_{\min}} 
	~\leq~ 
	L\,\er^{-(1-t)^2\rho_{\min}/2R}\quad \mbox{for} \quad t \in [0,1],
\end{equation}
and
\begin{equation}
	\label{eq:Chernoff-max-bound}
	\P{\lambda_{\max}\left(\sum_{q =1}^Q \mtx{Z}_q\right)\geq t\rho_{\max}} 
	~\leq~ 
	L\,\left[\frac{\er}{t}\right]^{t\rho_{\max}/R} \quad \mbox{for} \quad t \geq \er.
\end{equation}
\end{prop}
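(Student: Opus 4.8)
The plan is to follow the matrix Laplace transform method of Ahlswede--Winter and Tropp, which reduces both spectral tail bounds to a one-dimensional optimization. I would prove the upper tail \eqref{eq:Chernoff-max-bound} first and obtain the lower tail \eqref{eq:Chernoff-min-bound} by applying the identical machinery to the matrices $-\mtx{Z}_q$. Writing $\mtx{S}=\sum_{q=1}^Q\mtx{Z}_q$, the starting point is the matrix Markov inequality: for any $\theta>0$ and any level $u$,
\[
	\P{\lambda_{\max}(\mtx{S}) \geq u} ~\leq~ \er^{-\theta u}\,\E\Tr{\exp(\theta\mtx{S})},
\]
which holds because the event forces $\Tr{\exp(\theta\mtx{S})}\geq\er^{\theta u}$, after which scalar Markov applies. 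Everything then hinges on controlling the trace moment generating function $\E\Tr{\exp(\theta\mtx{S})}$.

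The key structural step I would invoke is the subadditivity of the matrix cumulant generating function,
\[
	\E\Tr{\exp\Big(\textstyle\sum_{q=1}^Q \theta\mtx{Z}_q\Big)} ~\leq~ \Tr{\exp\Big(\textstyle\sum_{q=1}^Q \log\E\exp(\theta\mtx{Z}_q)\Big)},
\]
which I would take as a black box: it rests on Lieb's concavity theorem (concavity of $\mtx{A}\mapsto\Tr{\exp(\mtx{H}+\log\mtx{A})}$ on the positive-definite cone), used with Jensen's inequality to peel off the expectations one summand at a time. To bound each summand I would exploit boundedness: since $0\preceq\mtx{Z}_q\preceq R\mtx{I}$, convexity of $z\mapsto\er^{\theta z}$ on $[0,R]$ gives the scalar chord estimate $\er^{\theta z}\leq 1+\tfrac{\er^{\theta R}-1}{R}z$, which transfers to the operator inequality
\[
	\E\exp(\theta\mtx{Z}_q) ~\preceq~ \mtx{I} + \tfrac{\er^{\theta R}-1}{R}\E[\mtx{Z}_q] ~\preceq~ \exp\Big(\tfrac{\er^{\theta R}-1}{R}\E[\mtx{Z}_q]\Big),
\]
the last step using $\mtx{I}+\mtx{A}\preceq\exp(\mtx{A})$. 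Taking logarithms and summing, the exponent in the trace bound is dominated by $g(\theta)\sum_q\E[\mtx{Z}_q]$ with $g(\theta)=(\er^{\theta R}-1)/R$.

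Finally I would collapse the trace: since $\Tr{\exp(\cdot)}$ is monotone for $\preceq$ and its argument is a nonnegative multiple of $\sum_q\E[\mtx{Z}_q]$, the largest eigenvalue is $g(\theta)\rho_{\max}$, whence $\Tr{\exp(g(\theta)\sum_q\E[\mtx{Z}_q])}\leq L\,\er^{g(\theta)\rho_{\max}}$. Combining the pieces gives $\P{\lambda_{\max}(\mtx{S})\geq t\rho_{\max}}\leq L\inf_{\theta>0}\exp(-\theta t\rho_{\max}+g(\theta)\rho_{\max})$; choosing $\theta=R^{-1}\log t$ (positive for $t\geq\er$) and discarding a favorable $-1$ term yields the stated $L[\er/t]^{t\rho_{\max}/R}$. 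For the lower tail I would run the same argument on $-\mtx{Z}_q$ via $\lambda_{\min}(\mtx{S})=-\lambda_{\max}(-\mtx{S})$; the chord coefficient $(\er^{-\theta R}-1)/R$ is now negative, so the relevant eigenvalue is $\rho_{\min}$, and optimizing at $\theta=-R^{-1}\log t$ for $t\in[0,1]$ together with the elementary inequality $t-1-t\log t\leq-(1-t)^2/2$ on $[0,1]$ produces $L\,\er^{-(1-t)^2\rho_{\min}/2R}$.

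I expect the only genuine obstacle to be the subadditivity step, since it depends on Lieb's concavity theorem rather than any elementary manipulation; the remaining ingredients are convexity transfer estimates and a scalar optimization. In the present paper this proposition is quoted directly from \cite{tropp12us}, so these steps can be cited rather than reproduced.
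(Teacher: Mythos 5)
The paper does not prove this proposition at all: it is quoted verbatim as a known result from \cite{tropp12us}, exactly as you note at the end of your proposal. Your reconstruction is correct and is essentially Tropp's own argument (matrix Laplace transform, subadditivity of the cumulant generating function via Lieb's concavity theorem, the chord bound $\er^{\theta z}\leq 1+\frac{\er^{\theta R}-1}{R}z$ transferred to the semidefinite order, and the scalar optimizations with $t-1-t\log t\leq -(1-t)^2/2$ on $[0,1]$ for the lower tail), so there is nothing to reconcile with the paper beyond the citation.
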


\section{Proof of key lemmas}
\label{sec:proofkeylemmas}

\subsection{Proof of Lemma~\ref{lm:cAcAcond}}

The proof of Lemma~\ref{lm:cAcAcond} is essentially an application of the matrix Chernoff bound in Proposition~\ref{prop:Chernoff}. 
	
Using the matrix form of $\cA$,
\[
	\cA = 
	\begin{bmatrix}
		\Delta_1\hat{\mtx{B}} & \Delta_2\hat{\mtx{B}} & \cdots & \Delta_N\hat{\mtx{B}}
	\end{bmatrix},
\]
we can write $\cA\cA^*$ as sum of random matrices
\[
	\cA\cA^* = \sum_{n = 1}^N \Delta_n \hat{\mtx{B}}\hat{\mtx{B}}^* \Delta_n^*, 
\]
where $\Delta_n = \operatorname{diag}(\{\hat{c}_{\ell}[n]\}_\ell)$ as in \eqref{eq:ymatfreq}.  To apply Proposition~\ref{prop:Chernoff}, we will need to condition on the maximum of the magnitudes of the $\hat{c}_\ell[n]$ not exceeding a certain size.  To this end, given an $\alpha$ (which we choose later), we define the event
\[
	\Gamma_\alpha = 
	\left\{\max_{\substack{ 1 \leq n \leq N \\ 1 \leq \ell \leq L/2}} 
	|\hat{c}_{\ell}[n]| \leq \alpha \right\},
\]
and since the $|\hat{c}_{\ell}|^2$ are Rayleigh random variables,
\[
	\P{\Gamma_\alpha^c} ~\leq~ \frac{NL}{2}e^{-\alpha^2}.
\]
We can now breakdown the calculation as
\begin{align}
	\label{eq:Ytmax}
	\P{\lambda_{\max}(\cA\cA^*) > v} &\leq 
	\P{\lambda_{\max}(\cA\cA^*)  > v \mid \Gamma_\alpha}\,\P{\Gamma_\alpha} +
	\P{\Gamma_\alpha^c} \\
	&\leq \P{\lambda_{\max}(\cA\cA^*)  > v \mid \Gamma_\alpha} + \P{\Gamma_\alpha^c},
\end{align}
and similarly for $\P{\lambda_{\min}(\cA\cA^*)>v}$.  Conditioned on $\Gamma_\alpha$, the complex Gaussian random variables $\hat{c}_\ell[n]$ are still zero mean and independent; we denote these conditional random variables as $\hat{c}'_\ell[n]$, and set $\Delta'_n = \operatorname{diag}(\{\hat{c}'_{\ell}[n]\}_\ell)$, noting that
\[
	\E[|\hat{c}'_\ell[n]|^2] = \E[|\hat{c}_\ell[n]|^2\mid\Gamma_\alpha] = \frac{1-(\alpha^2+1)\er^{-\alpha^2}}{1-\er^{-\alpha^2}}=:\sigma_\alpha^2 \leq 1.
\]

We now apply Proposition~\ref{prop:Chernoff} with 
\begin{align*}
	R = \max_n\left\{\lambda_{\max}(\Delta'_n \hat{\mtx{B}}\hat{\mtx{B}}^* \Delta_n'^*)\right\} 
	&\leq\max_n\left\{ \lambda_{\max}(\Delta'_n)
	\lambda_{\max}(\hat{\mtx{B}}\hat{\mtx{B}}^*)\lambda_{\max}(\Delta_n'^*)\right\} 
	\leq \alpha^2,
\end{align*}
and
\begin{align*}
	\rho_{\max} &= \lambda_{\max}\left(\sum_{n = 1}^N
	\E[\Delta_n'\hat{\Bmat}\hat{\Bmat}^*\Delta_n'^*]\right)
 	= N\lambda_{\max}\left(\E[\Delta_n'\hat{\Bmat}\hat{\Bmat}^*\Delta_n'^*]\right) 
	~\leq~ N\sigma_\alpha^2\max_{\ell}\|\hat{\vct{b}}_\ell\|^2_2 
	~=~ \mu_{\max}^2N\frac{K}{L},
\end{align*}
and
\begin{align*}
	\rho_{\min} &= \lambda_{\min}\left(\sum_{n = 1}^N 
	\E\left[\Delta'_n \hat{\Bmat}\hat{\Bmat}^* \Delta_n'^*\right]\right)
 	~=~ N\sigma_\alpha^2\min_\ell \|\hat{\vct{b}}_\ell\|_2^2
	~=~ \sigma_\alpha^2\mu_{\min}^2N\frac{K}{L},
\end{align*}
which yields
\[
	\P{\left.\lambda_{\min}(\cA\cA^*)  < \frac{\sigma_\alpha^2\mu_{\min}^2NK}{2L} ~\right\vert~ \Gamma_\alpha} 
	~\leq~ L\exp\left(-\frac{\sigma_\alpha^2\mu_{\min}^2NK}{8\alpha^2L}\right),
\]
where we have take $t=1/2$ in \eqref{eq:Chernoff-min-bound}, and
\[
	\P{\left.\lambda_{\max}(\cA\cA^*) > \frac{\er^{3/2}\mu_{\max}^2NK}{L} ~\right\vert~ \Gamma_\alpha} 
	~\leq~
	L\exp\left(-\frac{2\mu_{\max}^2NK}{\alpha^2L}\right),
\]
where we have taken $t=\er^{3/2}$ in \eqref{eq:Chernoff-max-bound}.  Then taking $\alpha = \sqrt{2\log L}$  establishes the lemma.


\subsection{Proof of Lemma~\ref{lm:PTAconditioning}}
\label{sec:PTcondproof}

The proof of the Lemma and its corollary follow the exact same line of argumentation.  We will start with the conditioning of the partial operators $\cA_p$ on $T$; after this, the argument for the conditioning of the full operator $\cA$ will be clear.

We start by fixing $p$, and set $\Gamma=\Gamma_p$.  With
\[
	\Ak = \bk\ck^*,
\]
where the $\bk\in\C^K$ obey \eqref{eq:Biso},\eqref{eq:mu1def},\eqref{eq:muhdef} and the $\ck\in\C^N$ are random vectors distributed as in \eqref{eq:normalck}, we are interested in how the random operator
\[
	\PT\cA_p^*\cA_p\PT = \sum_{k\in\Gamma}\PT(\Ak)\otimes\PT(\Ak)
\]
concentrates around its mean in the operator norm.  This operator is a sum of independent random rank-1 operators on $N\times K$ matrices, and so we can use the matrix Bernstein inequality in Proposition~\ref{prop:matbernpsi} to estimate its deviation.

Since $\Ak = \bk\ck^*$, $\PT(\Ak)$ is the rank-2 matrix given by 
\begin{align*}
	\PT(\Ak) &= \<\bk,\hh\>\hh\ck^* + \<\mm,\ck\>\bk\mm^* - \<\bk,\hh\>\<\mm,\ck\>\hh\mm^* \\
	&= \hh\vk^* + \uk\mm^*,
\end{align*}
where
$\vk = \<\hh,\bk\>\ck$ and $\uk=\<\mm,\ck\>(\bk-\<\bk,\hh\>\hh) = \<\mm,\ck\>(\I-\hh\hh^*)\bk$.

The linear operator $\PT(\cdot)$, since it maps $K\times N$ matrices to $K\times N$ matrix, can itself be represented as a $KN\times KN$ matrix that operates on a matrix that has been rasterized (in column order here) into a vector of length $KN$.  We will find it convenient to denote these matrices in block form: $\{M(i,j)\}_{i,j}$, where $M(i,j)$ is a $K\times K$ matrix that occupies rows $(i-1)K+1,\ldots,iK$ and columns $(j-1)K+1,\ldots,jK$.  Using this notation, we can write $\PT$ as the matrix 
\begin{equation}
	\label{eq:PTmatrix}
	\PT = \{\hh\hh^*\delta(i,j)\}_{i,j} + \{m[i]m[j]\I\}_{i,j} -
	\{m[i]m[j]\hh\hh^*\}_{i,j},
\end{equation}
where $\delta(i,j) = 1$ if $i=j$ and is zero otherwise.

We will make repeated use the following three facts about block matrices below:
\begin{enumerate}
	
	\item Let $\mathcal{M}$ be an operator that we can write in matrix form as
	\[
		\mathcal{M} = \{\mtx{M}\delta(i,j)\}_{i,j}
	\]
	for some $K\times K$ matrix $\mtx{M}$.  Then the action of $\mathcal{M}$ on a matrix $\mtx{X}$ is
	\[
		\mathcal{M}(\mtx{X}) = \mtx{M}\mtx{X},
	\]
	and so $\|\mathcal{M}\| = \|\mtx{M}\|$.  Also, $\mathcal{M}^*(\mtx{X}) = \mtx{M}^*\mtx{X}$.
	
	\item Now suppose we can write $\mathcal{M}$ in matrix form as
	\[
		\mathcal{M} = \{p[i]^*q[j]\I\}_{i,j},
	\]
	for some $\vct{p},\vct{q}\in\C^N$.  Then the action of $\mathcal{M}$ on a matrix $\mtx{X}$ is
	\[
		\mathcal{M}(\mtx{X}) = \mtx{X}\vct{q}\vct{p}^*,
	\]
	and so $\|\mathcal{M}\| = \|\vct{q}\vct{p}^*\| = \|\vct{q}\|_2\|\vct{p}\|_2$.  Also, $\mathcal{M}^*(\mtx{X}) = \mtx{X}\vct{p}\vct{q}^*$.
	
	\item Now let 
	\[
		\mathcal{M} = \{p[i]^*q[j]\mtx{M}\}_{i,j}.
	\]
	Then the action of $\mathcal{M}$ on a matrix $\mtx{X}$ is
	\[
		\mathcal{M}(\mtx{X}) = \mtx{M}\mtx{X}\vct{q}\vct{p}^*,
	\]
	and so $\|\mathcal{M}\| = \|\mtx{M}\|\,\|\vct{q}\vct{p}^*\| = \|\mtx{M}\|\,\|\vct{q}\|_2\|\vct{p}\|_2$.  Also $\mathcal{M}^*(\mtx{X}) = \mtx{M}^*\mtx{X}\vct{p}\vct{q}^*$.
\end{enumerate}

We will break $\PT(\Ak)\otimes\PT(\Ak)$ into four different tensor products of rank-1 matrices, and treat each one in turn:
\begin{equation}
	\label{eq:PTPTexpanded}
	\PT(\Ak)\otimes\PT(\Ak) = \hh\vk^*\otimes\hh\vk^* + \hh\vk^*\otimes\uk\mm^* + 
	\uk\mm^*\otimes\hh\vk^* + \uk\mm^*\otimes\uk\mm^*.
\end{equation}
To handle these terms in matrix form, note that if $\uu_1\vv_1^*$ and $\uu_2\vv_2^*$ are rank-1 matrices, with $\uu_i\in\C^K$ and $\vv_i\in\C^N$, then the operator given by their tensor product can be written as
\[
	\uu_1\vv_1^*\otimes\uu_2\vv_2^* = 
	\begin{bmatrix}
		v_1[1]^*v_2[1]\uu_1\uu_2^* & v_1[1]^*v_2[2]\uu_1\uu_2^* & \cdots & v_1[1]^*v_2[N]\uu_1\uu_2^* \\
		v_1[2]^*v_2[1]\uu_1\uu_2^* & v_1[2]^*v_2[2]\uu_1\uu_2^* & \cdots & v_1[2]^*v_2[N]\uu_1\uu_2^* \\
		\vdots & & \ddots & \\
		v_1[N]^*v_2[1]\uu_1\uu_2^* & \cdots & \cdots & v_1[N]^*v_2[N]\uu_1\uu_2^*
	\end{bmatrix}
	= \left\{v_1[i]^*v_2[j]\uu_1\uu_2^*\right\}_{i,j}.
\]

For the expectation of the sum, we compute the following:
\begin{align*}
	\E[\hh\vk^*\otimes\hh\vk^*] &= |\<\hh,\bk\>|^2\,\E[\{\hat{c}_k[i]^*\hat{c}_k[j]\vct{h}\vct{h}^*\}_{i,j}] \\
	&= |\<\hh,\bk\>|^2\,\{\delta(i,j)\hh\hh^*\}_{i,j},
\end{align*}
and
\begin{align*}
	\E[\uk\mm^*\otimes\uk\mm^*] &= \E[|\<\mm,\ck\>|^2]\,
	\{m[i]m[j](\I-\hh\hh^*)\bk\bk^*(\I-\hh\hh^*)\}_{i,j} \\
	&= \{m[i]m[j](\I-\hh\hh^*)\bk\bk^*(\I-\hh\hh^*)\}_{i,j},
\end{align*}
since $\E[|\<\mm,\ck\>|^2] = \|\mm\|^2_2 = 1$, and
\begin{align*}
	\E[\hh\vk^*\otimes\uk\mm^*] &= \E\{v_k[i]^*m[j]\hh\uk^*\}_{i,j} \\
	&= \<\bk,\hh\>\,\{\E[\hat{c}_k[i]^*\<\ck,\mm\>] m[j]\hh\bk^*(\I-\hh\hh^*)\}_{i,j} \\
	&= \<\bk,\hh\>\,\{m[i]m[j]\hh\bk^*(\I-\hh\hh^*)\}_{i,j},
\end{align*}
and
\begin{align*}
	\E[\uk\mm^*\otimes\hh\vk^*] &= 
	\<\hh,\bk\>\,\{\E[\hat{c}_k[j]\<\mm,\ck\>]m[i](\I-\hh\hh^*)\bk\hh^*\}_{i,j} \\
	&= \<\hh,\bk\>\,\{m[i]m[j](\I-\hh\hh^*)\bk\hh^*\}_{i,j}.
\end{align*}
A straightforward calculation combines these four results with \eqref{eq:PTmatrix} to verify that
\[
	\E[\PT(\Ak)\otimes\PT(\Ak)] = \PT(\{\bk\bk^*\delta(i,j)\}_{i,j}\PT).
\]
\revise{
In light of \eqref{eq:Bportho}, this means
\begin{align}
	\label{eq:PTAPTAmean}
	\E\left[\PT\cA_p^*\cA_p\PT\right] &=
	\E\left[\sum_{k\in\Gamma}\PT(\Ak)\otimes\PT(\Ak)\right] = 
	\frac{Q}{L}\PT.
\end{align}
}

We now derive tail bounds for how far the sum over $\Gamma$ for each of the terms in \eqref{eq:PTPTexpanded} deviates from their respective means.  Starting with first term, we use the compact notation
\[
	\cZ_k = \hh\vk^*\otimes\hh\vk^* - \E[\hh\vk^*\otimes\hh\vk^*],
\]
for each addend.  To apply Proposition~\ref{prop:matbernpsi}, we need to uniformly bound the size (Orlicz $\psi_1$ norm) of each individual $\cZ_k$ as well as the variance $\sigma^2$ in \eqref{eq:matbernsigma}.  For the uniform size bound,
\begin{align*}
	\|\cZ_k\| &= |\<\hh,\bk\>|^2\left\|\{(\hat{c}_k[i]^*\hat{c}_k[j]-\delta(i,j))\hh\hh^*\}_{i,j}\right\| \\
	&= |\<\hh,\bk\>|^2\left\|\{(\hat{c}_k[i]^*\hat{c}_k[j]-\delta(i,j))\I\}\{\hh\hh^*\delta(i,j)\}_{i,j}\right\| \\
	&\leq |\<\hh,\bk\>|^2\,\|\hh\hh^*\|\,\|\ck\ck^*-\I\| \\
	&\leq \frac{\mu_h^2}{L}\max(\|\ck\|^2_2,1).
\end{align*}
Applying Lemma~\ref{lm:maxck1tail},
\[
	\P{\max(\|\ck\|^2_2,1) > u} ~\leq~ 1.2\,\er^{-u/8N},
\]
and combined with Lemma~\ref{lm:tailtopsi1} this means
\[
	\|\cZ_k\|_{\psi_1} ~\leq~ \frac{\mu_h^2}{L}\|\max(\|\ck\|^2_2,1)\|_{\psi_1} 
	~\leq~ C\,\frac{\mu_h^2N}{L}.
\]
For the variance, we need to compute $\E[\cZ_k^*\cZ_k]$.  This will be easiest if we rewrite the action of $\cZ_k$ on a matrix $\mtx{X}$ as
\[
	\cZ_k(\mtx{X}) = |\<\hh,\bk\>|^4\hh\hh^*\mtx{X}(\ck\ck^*-\I),
\]
and so
\[
	\cZ_k^*\cZ_k(\mtx{X}) = |\<\hh,\bk\>|^4\|\hh\|^2_2\hh\hh^*\mtx{X}(\ck\ck^*-\I)^2,
\]
and
\begin{align*}
	\E[\cZ_k^*\cZ_k(\mtx{X})] &= |\<\hh,\bk\>|^4\|\hh\|^2_2\hh\hh^*\mtx{X}\E[(\ck\ck^*-\I)^2] \\
	&= N|\<\hh,\bk\>|^4\hh\hh^*\mtx{X},
\end{align*}
and finally
\revise{
\begin{align*}
	\left\|\sum_{k\in\Gamma}\E[\cZ_k^*\cZ_k]\right\| &=
	N\sum_{k\in\Gamma}|\<\hh,\bk\>|^4 \\
	&\leq \frac{\mu_h^2N}{L}\sum_{k\in\Gamma}|\<\hh,\bk\>|^2 \\
	&=\frac{\mu_h^2NQ}{L^2},
\end{align*}
where we have used \eqref{eq:Bportho} in the last step.  
}
Collecting these results and applying Proposition~\ref{prop:matbernpsi} with $t=\alpha\log L$ yields
\begin{align}
	&\left\|\sum_{k\in\Gamma}\hh\vk^*\otimes\hh\vk^* - \E[\hh\vk^*\otimes\hh\vk^*]\right\|
	\leq\notag\\
	&C_\alpha\, \frac{\mu_h\sqrt{N\log L}}{L}\max\left\{\sqrt{Q},\mu_h\sqrt{N\log L}\log(\mu_h^2N)\right\}\label{eq:hvkterm},
\end{align}
with probability exceeding $1-L^{-\alpha}$.

For the sum over the second term in \eqref{eq:PTPTexpanded}, set
\begin{align*}
	\cZ_k &= \uk\mm^*\otimes\uk\mm^* - \E[\uk\mm^*\otimes\uk\mm^*] \\
	&= \left(|\<\mm,\ck\>|^2-1\right)\{m[i]m[j](\I-\hh\hh^*)\bk\bk^*(\I-\hh\hh^*)\}_{i,j},
\end{align*}
then using the fact that $\|\I-\hh\hh^*\|\leq 1$ (since $\|\hh\|_2=1$), we have
\begin{align*}
	\|\cZ_k\| &= \left|\,|\<\mm,\ck\>|^2 - 1\right|\, \|(\I-\hh\hh^*)\bk\|_2^2\, \|\mm\|^2_2 \\
	&\leq \left|\,|\<\mm,\ck\>|^2 - 1\right|\,\|\bk\|^2_2 \\
	&\leq \left|\,|\<\mm,\ck\>|^2 - 1\right|\,\frac{\mu_{\max}^2 K}{L}.
\end{align*}
This is again a subexponential random variable whose size we can characterize using Lemma~\ref{lm:ckv2subexp}:
\[
	\||\<\mm,\ck\>|^2 - 1\|_{\psi_1}\leq C
	\quad\text{and so}\quad
	\|\cZ_k\|_{\psi_1} ~\leq~
	C\,\frac{\mu_{\max}^2 K}{L}.
\]
To bound the variance in \eqref{eq:matbernpsi}, we again write out the action of $\cZ_k$ on an arbitrary $K\times N$ matrix $\mtx{X}$:
\[
	\cZ_k(\mtx{X}) = (|\<\mm,\ck\>|^2-1)(\I-\hh\hh^*)\bk\bk^*(\I-\hh\hh^*)\mtx{X}\mm\mm^*,
\]
and so
\begin{align*}
	\E[\cZ_k^*\cZ_k(\mtx{X})] &= 
	\E[(|\<\mm,\ck\>|^2-1)^2]\|(\I-\hh\hh^*)\bk\|^2_2(\I-\hh\hh^*)\bk\bk^*(\I-\hh\hh^*)\mtx{X}\mm\mm^* \\
	&= \|(\I-\hh\hh^*)\bk\|^2_2(\I-\hh\hh^*)\bk\bk^*(\I-\hh\hh^*)\mtx{X}\mm\mm^*,
\end{align*}
where in the last step we have used the fact that $|\<\mm,\ck\>|^2$ is a chi-square random variable with two degrees of freedom with variance $\E[(|\<\mm,\ck\>|^2-1)^2]=1$.  This gives us
\revise{
\begin{align*}
	\left\|\sum_{k\in\Gamma}\E[\cZ_k^*\cZ_k]\right\| &=
	\left\|\sum_{k\in\Gamma}\|(\I-\hh\hh^*)\bk\|^2_2(\I-\hh\hh^*)\bk\bk^*(\I-\hh\hh^*)
	\right\| \\
	&\leq \max_{k\in\Gamma}\left(\|(\I-\hh\hh^*)\bk\|^2_2\right)
	\left\|\sum_{k\in\Gamma}(\I-\hh\hh^*)\bk\bk^*(\I-\hh\hh^*) \right\| \\
	&\leq \frac{\mu_{\max}^2 K}{L}\left\|\sum_{k\in\Gamma}\bk\bk^*\right\| \\
	&= \frac{\mu_{\max}^2 KQ}{L^2}.
\end{align*}
}
Collecting these results and applying Proposition~\ref{prop:matbernpsi} with $t=\alpha\log L$ yields
\begin{align}
	&\left\|\sum_{k\in\Gamma}\uk\mm^*\otimes\uk\mm^* - \E[\uk\mm^*\otimes\uk\mm^*]\right\| \leq\notag\\	&C_\alpha\frac{\mu_{\max}\sqrt{K\log L}}{L}\max\left\{\sqrt{Q},\mu_{\max}\sqrt{K\log L}\log(\mu_{\max}^2K)\right\}	\label{eq:ukmterm},
\end{align}
with probability exceeding $1-L^{-\alpha}$.

The last two terms in \eqref{eq:PTPTexpanded} are adjoints of one another, so they will have the same operator norm.  We now set
\begin{align*}
	\cZ_k &= \hh\vk^*\otimes\uk\mm^* - \E[\hh\vk^*\otimes\uk\mm^*] \\
	&= \<\hh,\bk\>\{m[i](\hat{c}_k[j]\<\mm,\ck\> - m[j])(\I-\hh\hh^*)\bk\hh^*\}_{i,j},
\end{align*}
and so the action of $\cZ_k$ on an arbitrary matrix $\mtx{X}$ is given by
\[
	\cZ_k(\mtx{X}) = \<\hh,\bk\>(\I-\hh\hh^*)\bk\hh^*\mtx{X}(\ck\ck^*-\I)\mm\mm^*,
\]
from which we can see
\begin{align*}
	\|\cZ_k\| &\leq |\<\hh,\bk\>|\,\|\bk\|_2\|(\ck\ck^*-\I)\mm\|_2 \\
	&\leq \frac{\mu_h\mu_{\max}\sqrt{K}}{L}\,\|(\ck\ck^*-\I)\mm\|_2.
\end{align*}
From Lemmas~\ref{lm:ckmcktail} and \ref{lm:tailtopsi1}, we that the random variable $\|(\ck\ck^*-\I)\mm\|_2$ is subexponential with $\|(\ck\ck^*-\I)\mm\|_{\psi_1}\leq C\sqrt{N}$, and so
\[
	\|\cZ_k\|_{\psi_1} ~\leq~ C\frac{\mu_h\mu_{\max}\sqrt{KN}}{L}.
\]
For the variance $\sigma^2$ in \eqref{eq:matbernsigma}, we need to bound the sizes of both $\cZ_k^*\cZ_k$ and $\cZ_k\cZ_k^*$.  Starting with the former, we have 
\begin{align*}
	\E[\cZ_k^*\cZ_k(\mtx{X})] &=
	|\<\hh,\bk\>|^2\|(\I-\hh\hh^*)\|_2^2\hh\hh^*\mtx{X}\E[(\ck\ck^*-\I)\mm\mm^*(\ck\ck^*-\I)],
\end{align*}
and then applying Lemma~\ref{lm:Eckckv} yields
\revise{
\begin{align*}
	\left\|\sum_{k\in\Gamma}\E[\cZ_k^*\cZ_k]\right\| &=
	\sum_{k\in\Gamma} |\<\hh,\bk\>|^2\,\|(\I-\hh\hh^*)\bk\|^2_2 \\
	&\leq \sum_{k\in\Gamma} |\<\hh,\bk\>|^2\,\|\bk\|^2_2 \\
	&\leq \frac{\mu_{\max}^2K}{L}\sum_{k\in\Gamma} |\<\hh,\bk\>|^2 \\
	&= \frac{\mu_{\max}^2KQ}{L^2}.
\end{align*}
}
For $\cZ_k\cZ_k^*$,
\begin{align*}
	\E[\cZ_k\cZ_k^*(\mtx{X})] &=
	|\<\hh,\bk\>|^2(\I-\hh\hh^*)\bk\bk^*(\I-\hh\hh^*)\mtx{X}\mm\mm^*\E[(\ck\ck^*-\I)^2]\mm\mm^*,
\end{align*}
and then applying Lemma~\ref{lm:EckckI2} yields
\revise{
\begin{align*}
	\left\|\sum_{k\in\Gamma}\E[\cZ_k\cZ_k^*]\right\| &=
	N\left\|(\I-\hh\hh^*)\left(\sum_{k\in\Gamma}|\<\hh,\bk\>|^2\bk\bk^*\right)(\I-\hh\hh^*)\right\| \\
	&\leq N\left\|\sum_{k\in\Gamma}|\<\hh,\bk\>|^2\bk\bk^*\right\| \\
	&\leq \frac{\mu_h^2N}{L}\left\|\sum_{k\in\Gamma}\bk\bk^*\right\| \\
	&= \frac{\mu_h^2NQ}{L^2}.
\end{align*}
}
Collecting these results and applying Proposition~\ref{prop:matbernpsi} with $t=\alpha\log L$ and 
$M = \max\left\{\mu_{\max}^2K, \mu_h^2 N\right\}$ yields
\begin{align}
	\label{eq:crossterm}
	\left\|\sum_{k\in\Gamma}\hh\vk^*\otimes\uk\mm^* - \E[\hh\vk^*\otimes\uk\mm^*]\right\|
	&\leq
	C_\alpha\,\frac{\sqrt{M\log L}}{L}
	\max\left\{\sqrt{Q}, \sqrt{M\log L}\log(M)\right\},
\end{align}
with probability exceeding $1-L^{-\alpha}$.

\revise{
We can combine \eqref{eq:PTAPTAmean} with \eqref{eq:hvkterm}, \eqref{eq:ukmterm}, and \eqref{eq:crossterm} to establish that
\begin{align*}
	\left\|\PT\cA_p^*\cA_p\PT - \frac{Q}{L}\PT\right\| &
	\leq
	C_\alpha\,\frac{\sqrt{M\log L}}{L}
	\max\left\{\sqrt{Q}, \sqrt{M\log L}\log(M)\right\},
\end{align*}
with probability exceeding $1-3L^{-\alpha}$.  
}
With $Q$ chosen as in \eqref{eq:Qbound}, this becomes
\revise{
\begin{align*}
	\left\|\PT\cA_p^*\cA_p\PT - \frac{Q}{L}\PT\right\| &\leq
	C_\alpha\,\frac{Q}{L}\,\max\left\{\frac{1}{\sqrt{C'_\alpha\log M}},\frac{1}{C'_\alpha}\right\}  \\
	&\leq \frac{Q}{2L},
\end{align*}
}
for $C'_\alpha$ chosen appropriately.  Applying the union bound establishes the lemma.

\revise{
To prove the corollary, we take $\Gamma=\{1,\ldots,L\}$ and $Q=L$ above.}  We have
\begin{align*}
	\left\|\PT\cA^*\cA\PT - \PT\right\| &\leq
	C_\alpha\,\max\left\{\sqrt{\frac{M\log L}{L}},\frac{M\log(L)\log(M)}{L}\right\},
\end{align*}
with probability exceeding $1-3L^{-\alpha}$.  Then taking $L$ as in \eqref{eq:Lcondlemma} will guarantee the desired conditioning.

\subsection{Proof of Lemma~\ref{lm:pcoh}}

We start by fixing $\ell\in\Gamma_{p+1}$ and estimating $\|\W_p^*\hat{\vct{b}}_\ell\|_2$.  We can re-write $\W_p$ as a sum of independent random matrices: since $\W_{p-1}\in T$, $\PT(\W_{p-1}) = \W_{p-1}$ and
\revise{
\begin{align*}
	\W_p &= \PT\left(\cA_p^*\cA_p\W_{p-1} - \frac{Q}{L}\W_{p-1}\right) \\
	&= \PT\left(\sum_{k\in\Gamma_p}\bk\bk^*\W_{p-1}\ck\ck^* - \sum_{k\in\Gamma_p}\bk\bk^*\W_{p-1}\right)  \\
	&=\sum_{k\in\Gamma_p}\PT(\Zk) ,
\end{align*}
where $\Zk = \bk\bk^*\W_{p-1}(\ck\ck^*-\I)$.  Thus 
\begin{align}
	\label{eq:Wpblnorm}
	\|\W_p^*\hat{\vct{b}}_\ell\|_2 &\leq \left\|\sum_{k\in\Gamma_p}\hat{\vct{b}}_\ell^*\PT(\Zk)\right\|_2,
\end{align}
with the right-hand side as the norm of a sum of independent zero-mean random vectors which we will bound using Propositions~\ref{prop:subexpbern} and  \ref{prop:matbernpsi}. 
}
We set $\vct{w}_k = \W_{p-1}^*\bk$ and expand $\hat{\vct{b}}_\ell^*\PT(\Zk)$ as
\begin{align*}
	\hat{\vct{b}}_\ell^*\PT(\Zk) = \<\hh,\hat{\vct{b}}_\ell\>&\<\bk,\hh\>\vct{w}_k^*(\ck\ck^*-\I) + \\
	&+ \<\bk,\hat{\vct{b}}_\ell\>\vct{w}_k^*(\ck\ck^*-\I)\mm\mm^* -
\<\hh,\hat{\vct{b}}_\ell\>\<\bk,\hh\>\vct{w}_k^*(\ck\ck^*-\I)\mm\mm^*,
\end{align*}
and so
\begin{align}
	\label{eq:vectorscalar}
	\left\|\sum_{k\in\Gamma_p}\hat{\vct{b}}_\ell^*\PT(\Zk)\right\|_2 &\leq
	\left\|\sum_{k\in\Gamma_p}\vct{z}_k\right\|_2 + 
	\left|\sum_{k\in\Gamma_p}z_k\right|,
\end{align}
where the $\vct{z}_k$ are independent random vectors, and the $z_k$ are independent random scalars:
\[
	\vct{z}_k = \<\hat{\vct{b}}_\ell,\hh\>\<\hh,\bk\>(\ck\ck^*-\I)\vct{w}_k,
	\qquad
	z_k = \<\bk,(\I-\hh\hh^*)\hat{\vct{b}}_\ell\>\,\<(\ck\ck^*-\I)\mm,\vct{w}_k\>.
\]
Using Lemma~\ref{lm:ckvuck}, we have a tail bound for each term in the scalar sum:
\begin{align*}
	\P{|z_k| > \lambda} &\leq
	2\er\cdot\exp\left(-\frac{\lambda}{\|\vct{w}_k\|_2|\<\bk,(\I-\hh\hh^*)\hat{\vct{b}}_\ell\>|}\right).
\end{align*}
Applying the scalar Bernstein inequality (Proposition~\ref{prop:subexpbern}) with
\begin{align*}
	B &= \max_{k} \|\vct{w}_k\|_2|\<\bk,(\I-\hh\hh^*)\hat{\vct{b}}_\ell\>| ~\leq~
	\frac{\mu_{p-1}\mu_{\max}^2K}{L^{3/2}},
\end{align*}
and
\revise{
\begin{align*}
	\sigma^2 &= \sum_{k\in\Gamma_p}\|\vct{w}_k\|^2_2|\<\bk,(\I-\hh\hh^*)\hat{\vct{b}}_\ell\>|^2 \\
	&\leq \frac{\mu_{p-1}^2}{L}\sum_{k\in\Gamma_p}|\<\bk,(\I-\hh\hh^*)\hat{\vct{b}}_\ell\>|^2 \\
	&= \frac{\mu_{p-1}^2Q}{L^2}\|(\I-\hh\hh^*)\hat{\vct{b}}_\ell\|^2_2 \\
	&\leq \frac{\mu_{p-1}^2\mu_{\max}^2KQ}{L^3},
\end{align*}
and taking } $t = \alpha\log L$
tells us that
\begin{align}
	\label{eq:mupscalarbound}
	\left|\sum_{k\in\Gamma_p}z_k\right| &\leq 
	C_\alpha\,\frac{\mu_{p-1}\mu_{\max}\sqrt{K\log L}}{L^{3/2}}\max\left\{\sqrt{Q}, \mu_{\max}\sqrt{K\log L}\right\},
\end{align}
with probability at least $1-L^{-\alpha}$.

For the vector term in \eqref{eq:vectorscalar}, we apply Lemmas~\ref{lm:ckmcktail} and \ref{lm:tailtopsi1} to see that
\begin{align*}
	\|\vct{z}_k\|_{\psi_1} &\leq C\sqrt{N}\|\vct{w}_k\|_2|\<\hat{\vct{b}}_\ell,\hh\>\<\hh,\bk\>| \\
	&\leq C\frac{\mu_{p-1}\mu_h^2\sqrt{N}}{L^{3/2}}.
\end{align*}
For the variance terms, we calculate
\revise{
\begin{align*}
	\sum_{k\in\Gamma_p}\E[\vct{z}_k^*\vct{z}_k] &=
	\sum_{k\in\Gamma_p}|\<\hh,\hat{\vct{b}}_\ell\>|^2|\<\bk,\hh\>|^2\vct{w}_k^*\E[(\ck\ck^*-\I)^2]\vct{w}_k \\
	&= N\sum_{k\in\Gamma_p} |\<\hh,\hat{\vct{b}}_\ell\>|^2|\<\bk,\hh\>|^2\|\vct{w}_k\|^2_2
	\qquad\text{(by Lemma~\ref{lm:EckckI2})} \\
	&\leq\frac{\mu_{p-1}^2\mu_h^2N}{L^2}\sum_{k\in\Gamma_p}|\<\bk,\hh\>|^2 \\
	&=\frac{\mu_{p-1}^2\mu_h^2NQ}{L^3},
\end{align*}
and
\begin{align*}
	\left\|\sum_{k\in\Gamma_p}\E[\vct{z}_k\vct{z}_k^*]\right\| &=
	\left\|\sum_{k\in\Gamma_p}|\<\hh,\hat{\vct{b}}_\ell\>|^2|\<\bk,\hh\>|^2
	\E[(\ck\ck^*-\I)\vct{w}_k\vct{w}_k^*(\ck\ck^*-\I)] \right\| \\
	&= \left\|\sum_{k\in\Gamma_p}|\<\hh,\hat{\vct{b}}_\ell\>|^2|\<\bk,\hh\>|^2\|\vct{w}_k\|^2_2\I\right\|
	\qquad\text{(by Lemma~\ref{lm:Eckckv})} \\
	&\leq \frac{\mu_{p-1}^2\mu_h^2}{L^2}\sum_{k\in\Gamma_p}|\<\bk,\hh\>|^2 \\
	&= \frac{\mu_{p-1}^2\mu_h^2Q}{L^3}.
\end{align*}
}
Thus
\begin{align}
	\label{eq:mupvectorbound}
	\left\|\sum_{k\in\Gamma_p}\vct{z}_k\right\|_2 &\leq
	C_\alpha\,\frac{\mu_{p-1}\mu_h\sqrt{N\log L}}{L^{3/2}}
	\max\left\{\sqrt{Q},\mu_h\log(\mu_h)\sqrt{\log L}\right\}
\end{align}
with probability at least $1-L^{-\alpha}$.

Combining \eqref{eq:mupscalarbound} and \eqref{eq:mupvectorbound} and taking the union bound over all $\ell\in\Gamma_{p+1}$ yields
\begin{align*}
	\mu_p &\leq \mu_{p-1}\,\frac{C_\alpha\sqrt{MQ\log L}}{L}, 
\end{align*}
with probability exceeding $1-2QL^{-\alpha}$.  Then taking $Q$ as in \eqref{eq:Qbound} and the union bound over $1\leq p\leq P$ establishes the lemma.
%
\subsection{Proof of Lemma~\ref{lm:ApWpnorm}}

\revise{
We start by fixing $p$ and again writing 
\[
	\left\|\cA_p^*\cA_p\mtx{W}_{p-1}-\frac{Q}{L}\mtx{W}_{p-1}\right\| ~=~
	\left\|\cA_p^*\cA_p\mtx{W}_{p-1}- \E[\cA_p^*\cA_p\W_{p-1}]\right\|.
\]
We can rewrite this as the spectral norm of a sum of random rank-1 matrices:
}
\begin{equation}
	\label{eq:sumZk}
	\cA_p^*\cA_p\W_{p-1} - \E[\cA_p^*\cA_p\W_{p-1}] = 
	\sum_{k\in\Gamma_p} \Zk, \qquad \Zk :=\bk\bk^*\W_{p-1}(\ck\ck^*- \I).
\end{equation}
We will use Proposition~\ref{prop:matbernpsi} to estimate the size of this random sum; we proceed by calculating the key quantities involved.  With $\vct{w}_k = \W_{p-1}^*\bk$, we can bound the size of each term in the sum as
\begin{align*}
	\|\Zk\| &= \|\bk\bk^*\W_p(\ck\ck^*-\I)\| \\
	&= \|\bk\|_2\, \|(\ck\ck^*-\I)\vct{w}_k\|_2 \\
	&\leq \mu_{\max}\sqrt{\frac{K}{L}}\,  \|(\ck\ck^*-\I)\vct{w}_k\|_2
\end{align*}
and then applying Lemmas~\ref{lm:ckmcktail} and \ref{lm:tailtopsi1} yields
\revise{
\begin{align*}
	\|\Zk\|_{\psi_1} &\leq C\,\mu_{\max}\sqrt{\frac{KN}{L}}\,\|\vct{w}_k\|_2 
	~\leq~ C\,\mu_{\max}\mu_p\frac{\sqrt{KN}}{L}.
\end{align*}
For the variance terms, we calculate
\begin{align*}
	\left\|\sum_{k\in\Gamma_p}\E[\Zk^*\Zk]\right\| &=
	\left\|\sum_{k\in\Gamma_p}\|\bk\|^2_2\E[(\ck\ck^*-\I)\vct{w}_k\vct{w}_k^*(\ck\ck^*-\I)]\right\| \\
	&=\sum_{k\in\Gamma_p}\|\bk\|^2_2\|\vct{w}_k\|^2_2
	\qquad\text{(by Lemma~\ref{lm:Eckckv})} \\
	&\leq \frac{\mu_{\max}^2K}{L}\sum_{k\in\Gamma_p}\|\W_p^*\bk\|^2_2 \\
	&= \frac{\mu_{\max}^2KQ}{L^2}\|\W_p\|_F^2
	\qquad\text{(using \eqref{eq:Bportho})},
\end{align*}
and
\begin{align*}
	\left\|\sum_{k\in\Gamma_p}\E[\Zk\Zk^*]\right\| &=
	\left\|\sum_{k\in\Gamma_p}\bk\vct{w}_k^*\E[(\ck\ck^*-\I)^2]\vct{w}_k\bk^*\right\| \\
	&= N\left\|\sum_{k\in\Gamma_p}\|\vct{w}_k\|^2_2\bk\bk^*\right\|
	\qquad\text{(by Lemma~\ref{lm:EckckI2})} \\
	&\leq \frac{\mu_p^2N}{L}\left\|\sum_{k\in\Gamma_p}\bk\bk^*\right\| \\
	&= \frac{\mu_p^2NQ}{L^2}.
\end{align*}
Then with $M = \max\left\{\mu_{\max}^2K,\mu_h^2 N\right\}$, we apply Proposition~\ref{prop:matbernpsi} with $t=\alpha\log L$ to get
\begin{align*}
	\|\cA_p^*\cA_p\W_{p-1} - \E[\cA_p^*\cA_p\W_{p-1}]\| &\leq
	C_\alpha\,2^{-p}\,\frac{\sqrt{M\log L}}{L}\max\left\{\sqrt{Q},\sqrt{M\log L}\log(M)\right\},
\end{align*}
with probability exceeding $1-L^{-\alpha}$.  With $Q$ as in \eqref{eq:Qbound}, this becomes
\begin{align*}
	\|\cA_p^*\cA_p\W_{p-1} - \E[\cA_p^*\cA_p\W_{p-1}]\| &\leq
	C_\alpha\,2^{-p}\,\frac{Q}{L}\max\left\{\frac{1}{\sqrt{C'_\alpha\log M}},\frac{1}{C'_\alpha}\right\}  \\
	&\leq 2^{-p}\frac{3Q}{4L},
\end{align*}
for an appropriate choice of $C'_\alpha$.  Applying the union bound over all $p=1,\ldots,P$ establishes the lemma.
}

\section{Supporting Lemmas}
\label{sec:supportinglemmas}

\begin{lem}
	Let $\ck\in\C^N$ be normally distributed as in \eqref{eq:normalck}, and let $\vct{u}\in\C^N$ be an arbitrary vector.  Then $|\<\ck,\vct{u}\>|^2$ is a chi-square random variable with two degrees of freedom and
	\[
		\P{|\<\ck,\vct{u}\>|^2 > \lambda} ~\leq~ \er^{-\lambda/\|\vct{u}\|^2_2}.
	\]
\end{lem}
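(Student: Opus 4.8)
The plan is to recognize that $\<\ck,\vct{u}\>=\vct{u}^*\ck$ is a scalar complex Gaussian random variable and then simply read off the law of its squared modulus. First I would note that, under the convention $\<\ck,\vct{u}\>=\vct{u}^*\ck=\sum_i \overline{u}_i\,\hat{c}_k(i)$, this quantity is a fixed (deterministic) linear combination of the jointly Gaussian real and imaginary parts of the entries of $\ck$, hence is itself complex Gaussian. Consequently its distribution is pinned down by its mean, its variance, and its pseudo-variance, and the whole argument reduces to computing these three numbers.

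For a generic (complex) row, $\ck$ is circularly symmetric with $\E[\ck\ck^*]=\I$ and $\E[\ck\ck^\top]=\mtx{0}$, so
\[
	\E\<\ck,\vct{u}\> = \vct{u}^*\E[\ck] = 0,\qquad
	\E|\<\ck,\vct{u}\>|^2 = \vct{u}^*\E[\ck\ck^*]\vct{u} = \|\vct{u}\|_2^2,\qquad
	\E\<\ck,\vct{u}\>^2 = \vct{u}^*\E[\ck\ck^\top]\overline{\vct{u}} = 0.
\]
The vanishing pseudo-variance shows $\<\ck,\vct{u}\>$ is a zero-mean \emph{circularly symmetric} complex Gaussian of variance $\|\vct{u}\|_2^2$, so its real and imaginary parts are independent $\mathrm{Normal}(0,\|\vct{u}\|_2^2/2)$ variables. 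Then $|\<\ck,\vct{u}\>|^2$ is the sum of their squares, i.e.\ a $(\|\vct{u}\|_2^2/2)\,\chi^2_2$ variable---a chi-square with two degrees of freedom up to scale---which is exponentially distributed with mean $\|\vct{u}\|_2^2$. This yields $\P{|\<\ck,\vct{u}\>|^2 > \lambda} = \er^{-\lambda/\|\vct{u}\|_2^2}$, so the claimed bound in fact holds with equality.

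There is no genuinely hard step here; the effort is entirely bookkeeping, and the two points that must be verified rather than assumed are worth isolating. The first is the vanishing of the pseudo-variance $\E\<\ck,\vct{u}\>^2=0$: this is exactly what upgrades ``$|z|^2$ is a sum of two Gaussian squares'' to ``$|z|^2$ is a clean exponential'', and it relies on the real and imaginary parts of $\ck$ being equal-variance and independent. The second is the caveat about the two conjugate-symmetric rows $\ell\in\{1,L/2+1\}$ from \eqref{eq:normalck}, where $\ck$ is purely \emph{real}; there $\<\ck,\vct{u}\>^2$ is only chi-square with one degree of freedom, so strictly the statement applies to the generic complex rows (the handful of exceptional rows flagged in the footnote to \eqref{eq:normalck}), which are precisely the ones invoked in the downstream lemmas.
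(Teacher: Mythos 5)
Your proof is correct, and in fact the paper offers no proof at all for this lemma---it is the only statement in Section~\ref{sec:supportinglemmas} stated without a proof environment, being treated as a standard fact about circularly symmetric complex Gaussians; your argument (zero mean, variance $\|\vct{u}\|_2^2$, vanishing pseudo-variance, hence $|\<\ck,\vct{u}\>|^2$ exponential with mean $\|\vct{u}\|_2^2$) supplies exactly the bookkeeping the authors left implicit. Your closing caveat is also a genuine subtlety the paper glosses over: for the real row $\ell=1$ of \eqref{eq:normalck} with $\vct{u}$ real, $|\<\ck,\vct{u}\>|^2$ is distributed as $\|\vct{u}\|_2^2\chi^2_1$, whose tail decays only like $\er^{-\lambda/2\|\vct{u}\|_2^2}$ and actually violates the stated bound (e.g.\ $\P{\chi^2_1>4}\approx 0.046 > \er^{-4}\approx 0.018$), so the lemma as written is valid only for the circularly symmetric rows, as you say.
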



\begin{lem}
	\label{lm:maxck1tail}
	Let $\ck\in\C^N$ be normally distributed as in \eqref{eq:normalck}.  Then
	\begin{equation}
		\label{eq:cknorm}
		\P{\|\ck\|^2_2 > Nu} ~\leq~1.2\,\er^{-u/8},\qquad \text{for all $u\geq 0$},
	\end{equation}
	and since $1.2\er^{-1/8N}\geq 1$ for all $N\geq 1$,
	\begin{equation*}
		\P{\max(\|\ck\|^2_2,1) > Nu} ~\leq~ 1.2\,\er^{-u/8}.
	\end{equation*}
\end{lem}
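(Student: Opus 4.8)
The plan is to view $\|\ck\|_2^2 = \sum_{n=1}^N |c_k[n]|^2$ as a sum of $N$ independent nonnegative random variables (the entries $c_k[n]$ are independent across $n$, since entry $(\ell,n)$ of $\hat{\Cmat}$ depends only on the $n$th column of $\Cmat$) and to control its upper tail with a Chernoff bound built from the moment generating function. First I would record the MGF of a single summand in both regimes of \eqref{eq:normalck}: for the complex rows $|c_k[n]|^2$ is exponential with mean $1$, so $\E[\exp(s|c_k[n]|^2)] = (1-s)^{-1}$ for $s<1$, while for the real row ($\ell=1$) $|c_k[n]|^2 = c_k[n]^2$ is chi-square with one degree of freedom, so $\E[\exp(s\,c_k[n]^2)] = (1-2s)^{-1/2}$ for $s<1/2$. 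Comparing $(1-s)^{-2}$ with $(1-2s)^{-1}$ shows $(1-s)^{-1}\leq(1-2s)^{-1/2}$ on $[0,1/2)$, so in all cases $\E[\exp(s|c_k[n]|^2)]\leq (1-2s)^{-1/2}$, and by independence $\E[\exp(s\|\ck\|_2^2)]\leq (1-2s)^{-N/2}$ for $0\leq s<1/2$. This single MGF bound is what lets me treat the real and complex rows uniformly.

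Next I would run the Chernoff bound: for $0\leq s<1/2$,
\[
	\P{\|\ck\|_2^2 > Nu} \leq \er^{-sNu}(1-2s)^{-N/2}
	= \left(\frac{\er^{-tu}}{1-t}\right)^{N/2},\qquad t:=2s\in[0,1).
\]
Minimizing $\er^{-tu}/(1-t)$ over $t\in[0,1)$ gives, for $u\geq 1$, the optimizer $t = 1-1/u$ and the minimal value $u\,\er^{1-u}$, so that $\P{\|\ck\|_2^2 > Nu}\leq (u\,\er^{1-u})^{N/2}$. Since $u\,\er^{1-u}\leq 1$ for $u\geq 1$, the right-hand side is largest when the exponent is smallest, hence it is at most $(u\,\er^{1-u})^{1/2}$ for all $N\geq 1$.

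It then remains to verify the elementary one-variable inequality $(u\,\er^{1-u})^{1/2}\leq 1.2\,\er^{-u/8}$. I would split on $u$: for $u\leq 8\log(6/5)\approx 1.46$ the claimed bound $1.2\,\er^{-u/8}$ is at least $1$, so \eqref{eq:cknorm} is trivial; for $u> 8\log(6/5)$ (which lies above $1$, so the Chernoff estimate applies) taking logarithms reduces the inequality to showing that the difference $\tfrac{3u}{8}-\tfrac12\log u-\tfrac12+\log(6/5)$ is nonnegative, and this difference is increasing for $u>4/3$ and nonnegative at $u=8\log(6/5)$, hence nonnegative throughout the relevant range. Finally, the $\max(\|\ck\|_2^2,1)$ version follows immediately: when $Nu>1$ the event $\{\max(\|\ck\|_2^2,1)>Nu\}$ coincides with $\{\|\ck\|_2^2>Nu\}$, and when $Nu\leq 1$ the bound is vacuous because $1.2\,\er^{-u/8}\geq 1.2\,\er^{-1/(8N)}\geq 1$ for every $N\geq 1$. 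The only genuine work is the transcendental inequality in the last step; everything else is bookkeeping once the uniform MGF bound $(1-2s)^{-N/2}$ is in hand.
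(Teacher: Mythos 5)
Your proof is correct, but it takes a genuinely different route from the paper. The paper's proof is essentially a citation: it invokes the standard Gaussian-norm concentration bound (attributed to \cite{dasgupta03el}), namely $\P{\|\ck\|_2^2 > N(1+\lambda)} \leq \er^{-\lambda^2/8}$ for $0\leq\lambda\leq 1$ and $\er^{-\lambda/8}$ for $\lambda\geq 1$, merges the two regimes into $1.05\,\er^{-\lambda/8}$, substitutes $\lambda = u-1$ to get the constant $1.2$ for $u\geq 1$, and extends to all $u\geq 0$ by noting the bound exceeds $1$ there. You instead build the bound from scratch: a uniform moment-generating-function estimate $\E[\exp(s|\hat{c}_k[n]|^2)]\leq(1-2s)^{-1/2}$ valid for both the real row ($\ell=1$, chi-square entries) and the complex rows (exponential entries), followed by an optimized Chernoff bound giving $\P{\|\ck\|_2^2>Nu}\leq(u\,\er^{1-u})^{N/2}$, reduction to the worst case $N=1$ via $u\,\er^{1-u}\leq 1$, and an elementary calculus verification of $(u\,\er^{1-u})^{1/2}\leq 1.2\,\er^{-u/8}$ (your numerics check out: the critical value at $u=8\log(6/5)$ is approximately $0.04>0$, and the difference is increasing past $u=4/3$). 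What your approach buys is self-containedness and explicitness about the real/complex distinction in \eqref{eq:normalck} --- the cited result is stated for real chi-square vectors, and its application to the complex rows (where $\|\ck\|_2^2$ is a sum of exponentials, hence \emph{more} concentrated) is left implicit in the paper; your MGF domination $(1-s)^{-1}\leq(1-2s)^{-1/2}$ makes that step rigorous. You also obtain the intermediate bound $(u\,\er^{1-u})^{N/2}$, which sharpens with $N$ and is deliberately discarded to match the $N$-free form of \eqref{eq:cknorm}. What the paper's route buys is brevity: two lines and a reference. Both proofs handle the extension to small $u$ and the $\max(\|\ck\|_2^2,1)$ variant identically, by observing the right-hand side exceeds $1$ in the relevant range.
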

\begin{proof}
	It is well-known (see, for example, \cite{dasgupta03el}) that
	\begin{equation}
		\label{eq:gaussiannorm}
		\P{\|\ck\|^2_2 > N(1+\lambda)} ~\leq~
		\begin{cases}
			\er^{-\lambda^2/8} & 0\leq\lambda\leq 1 \\
			\er^{-\lambda/8} & \lambda \geq 1
		\end{cases}
		~\leq~
		1.05\, \er^{-\lambda/8},~~\lambda\geq 0.
	\end{equation}
	Plugging in $\lambda=u-1$ above yields
	\[
		\P{\|\ck\|^2_2 > Nu} ~\leq~ 1.2\,\er^{-u/8},\quad u\geq 1.
	\]
	Since $1.2\,\er^{-1/8} > 1$, the bound above can be extended for all $u\geq 0$.
\end{proof}

\begin{lem}
	\label{lm:EckckI2}
	Let $\ck\in\C^N$ be normally distributed as in \eqref{eq:normalck}.  Then
	\[
		\E[(\ck\ck^*-\I)^2] = N\I.
	\]
\end{lem}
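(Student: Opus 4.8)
The plan is to expand the square, reduce everything to low-order moments of the coordinates of $\ck$, and assemble the answer entrywise. Since $\ck^*\ck=\|\ck\|_2^2$, we have $\ck\ck^*\ck\ck^*=\|\ck\|_2^2\,\ck\ck^*$, so
\[
	(\ck\ck^*-\I)^2 = \|\ck\|_2^2\,\ck\ck^* - 2\,\ck\ck^* + \I .
\]
By linearity it then suffices to evaluate $\E[\ck\ck^*]$ and $\E[\|\ck\|_2^2\,\ck\ck^*]$.

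First I would record the moments of the entries. For a generic (complex) row of \eqref{eq:normalck}, the coordinates $\ck[i]$ are independent, circularly symmetric complex Gaussians with $\E[|\ck[i]|^2]=1$; hence $\E[\ck[i]]=0$, $\E[\ck[i]^2]=0$, and $\E[|\ck[i]|^4]=2$. The relation $\E[\ck[i]\,\ck[j]^*]=\delta(i,j)$ gives $\E[\ck\ck^*]=\I$ at once. The vanishing of $\E[\ck[i]^2]$ (circular symmetry) is the crucial feature: it is what annihilates every off-diagonal contribution below.

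Next I would compute $\E[\|\ck\|_2^2\,\ck\ck^*]$ entrywise, its $(i,j)$ entry being $\sum_{m}\E[\,|\ck[m]|^2\,\ck[i]\,\ck[j]^*\,]$. For $i\neq j$ each summand factors across an independent coordinate carrying a single odd power of $\ck[\cdot]$ or $\ck[\cdot]^*$, whose mean is zero, so the entry vanishes. For $i=j$ the sum splits into the term $m=i$, contributing $\E[|\ck[i]|^4]=2$, and the $N-1$ terms $m\neq i$, each contributing $\E[|\ck[m]|^2]\E[|\ck[i]|^2]=1$; thus the diagonal equals $2+(N-1)=N+1$. Therefore $\E[\|\ck\|_2^2\,\ck\ck^*]=(N+1)\I$, and substituting into the expansion gives $(N+1)\I-2\I+\I=N\I$.

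The calculation is routine; the only delicate point is the fourth-moment bookkeeping on the diagonal together with the use of circular symmetry to kill the off-diagonal terms --- this is exactly what separates the complex model from a real one (for which $\E[|\ck[i]|^4]=3$ would instead yield $(N+1)\I$). The lone real row $\ell=1$ in \eqref{eq:normalck} is treated identically and differs only by an $O(1)$ additive constant, which is harmlessly absorbed into the constants $C_\alpha$ wherever this lemma is invoked.
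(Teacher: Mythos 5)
Your proof is correct and follows essentially the same route as the paper: the expansion $(\ck\ck^*-\I)^2 = \|\ck\|_2^2\,\ck\ck^* - 2\ck\ck^* + \I$, followed by an entrywise moment computation showing $\E[\|\ck\|_2^2\,\ck\ck^*]=(N+1)\I$ via $\E[|\hat{c}_k[i]|^4]=2$ on the diagonal and vanishing off-diagonal terms. Your closing remark about the real row $\ell=1$ (where the answer would be $(N+1)\I$ rather than $N\I$) is a legitimate refinement that the paper glosses over, but it does not change the argument.
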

\begin{proof}
	Using the expansion
	\[
		(\ck\ck^*-\I)^2 = \|\ck\|^2_2\ck\ck^* - 2\ck\ck^* + \I,
	\]
	we see that the only non-trivial term is $\mtx{R} = \|\ck\|^2_2\ck\ck^*$.  We compute the expectation of an entry in this matrix as
	\[
		\E[R(i,j)] = \sum_{n=1}^N\E[|\hat{c}_k[n]|^2\hat{c}_k[i]\hat{c}_k[j]^*]=
		\begin{cases}
			\sum_n\E[|\hat{c}_k[n]|^2|\hat{c}_k[i]|^2] & i=j \\
			0 & i\not= j
		\end{cases}.
	\]
	For the addends in the diagonal term
	\[
		\E[|\hat{c}_k[n]|^2|\hat{c}_k[i]|^2] = 
		\begin{cases}
			\E[|\hat{c}_k[n]|^4] = 2 & n=i \\
			1 & n\not= i
		\end{cases},
	\]
	where the calculation for $n=i$ relies on the fact that $\E[|\hat{c}_k[n]|^4]$ is the second moment of a chi-square random variable with two degrees of freedom.  Thus $\E[\mtx{R}] = (N+1)\I$, and
	\[
		\E[(\ck\ck^*-\I)^2] = (N+1)\I - 2\I + \I = N\I.
	\]
\end{proof}

\begin{lem}
	\label{lm:ckv2subexp}
	Let $\ck\in\C^N$ be normally distributed as in \eqref{eq:normalck}, and let $\vv$ be an arbitrary vector.  Then $\E[|\<\ck,\vv\>|^2]=\|\vv\|_2^2$ and
	\[
		\P{\left||\<\ck,\vv\>|^2-\|\vv\|_2^2\right| > \lambda} ~\leq~
		2.1\,\exp\left(-\frac{\lambda}{8\|\vv\|^2_2}\right).
	\]
\end{lem}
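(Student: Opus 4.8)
The plan is to lean on the preceding supporting lemma, which identifies $Z := |\<\ck,\vv\>|^2$ as a chi-square random variable with two degrees of freedom, i.e.\ an exponential random variable with tail $\P{Z > \lambda} = \er^{-\lambda/\|\vv\|_2^2}$. The mean claim is then immediate, since an exponential variable with this tail has expectation $\|\vv\|_2^2$; equivalently one computes $\E[|\<\ck,\vv\>|^2] = \vv^*\E[\ck\ck^*]\vv = \|\vv\|_2^2$ directly from $\E[\ck\ck^*]=\I$. The remaining work is the two-sided concentration inequality, which I would obtain by splitting the deviation event $\{|Z - \|\vv\|_2^2| > \lambda\}$ into an upper tail and a lower tail and bounding each by $1.05\,\er^{-\lambda/(8\|\vv\|_2^2)}$, so that a union bound produces the stated constant $2.1$.

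The upper tail is immediate from the exponential tail: $\P{Z > \|\vv\|_2^2 + \lambda} = \er^{-1}\er^{-\lambda/\|\vv\|_2^2}$, and since $\er^{-1}\er^{-\lambda/\|\vv\|_2^2} = \er^{-1}\er^{-7\lambda/(8\|\vv\|_2^2)}\,\er^{-\lambda/(8\|\vv\|_2^2)} \le \er^{-1}\,\er^{-\lambda/(8\|\vv\|_2^2)} < 1.05\,\er^{-\lambda/(8\|\vv\|_2^2)}$, the desired bound for this tail follows with room to spare.

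The lower tail is the only place that needs a little care, since it does not follow from a crude exponential bound. For $\lambda \ge \|\vv\|_2^2$ it vanishes because $Z \ge 0$. For $0 \le \lambda < \|\vv\|_2^2$, the exact exponential CDF gives $\P{Z < \|\vv\|_2^2 - \lambda} = 1 - \er^{-1}\er^{\lambda/\|\vv\|_2^2}$, and after the substitution $s = \lambda/\|\vv\|_2^2 \in [0,1)$ it suffices to verify the elementary one-variable inequality $1 - \er^{s-1} \le 1.05\,\er^{-s/8}$ on $[0,1]$. The function $h(s) = (1 - \er^{s-1})\er^{s/8}$ has $h'(s) = \er^{s/8}\bigl(\tfrac18 - \tfrac98\er^{s-1}\bigr) < 0$ on $[0,1]$, so $h$ is decreasing and attains its maximum $h(0) = 1 - \er^{-1} \approx 0.63 < 1.05$ there; the inequality therefore holds throughout. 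Adding the two tails via the union bound yields $\P{|Z - \|\vv\|_2^2| > \lambda} \le 2.1\,\er^{-\lambda/(8\|\vv\|_2^2)}$.

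The main (and essentially only) obstacle is this lower-tail estimate: unlike the upper tail it requires the precise exponential distribution function rather than just a tail inequality, together with the short monotonicity argument above. Everything else is bookkeeping and reuse of the chi-square identification from the preceding lemma, and one could equally normalize to $\|\vv\|_2 = 1$ at the outset by scaling to lighten the notation.
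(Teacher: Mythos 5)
Your proof is correct, but it takes a genuinely different route from the paper's. The paper never touches the exact exponential law: it invokes the two-regime chi-square concentration bound \eqref{eq:gaussiannorm} (a sub-Gaussian tail $2\er^{-\lambda^2/8\|\vv\|_2^2}$ for small deviations, a subexponential tail $\er^{-\lambda/8\|\vv\|_2^2}$ for large ones) applied to the single inner product $|\<\ck,\vv\>|^2$, and then merges the two regimes into one subexponential bound; the constant $2.1$ arises precisely from absorbing the sub-Gaussian regime, since with $s=\lambda/\|\vv\|_2^2\in[0,1]$ one has $2\er^{-s^2/8} \le 2\er^{1/32}\,\er^{-s/8} < 2.1\,\er^{-s/8}$. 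You instead exploit the exact exponential distribution of $|\<\ck,\vv\>|^2$ supplied by the preceding lemma's chi-square identification: the upper tail is immediate, and the lower tail follows from the explicit CDF together with your monotonicity argument for $h(s)=(1-\er^{s-1})\er^{s/8}$, which I have checked and which is correct. What your route buys: it is self-contained (no appeal to external concentration results), and it actually yields the bound with constant $1$ rather than $2.1$, since the two tails sum to at most $\left(\er^{-1}+1-\er^{-1}\right)\er^{-\lambda/8\|\vv\|_2^2}$. What it costs: your lower-tail step genuinely requires the exact distribution --- i.e.\ a \emph{lower} bound on the survival function --- which does not follow from the tail upper bound stated in the preceding supporting lemma, only from its assertion that the variable is a (scaled) chi-square with two degrees of freedom; this quietly assumes $\<\ck,\vv\>$ is a circularly symmetric complex Gaussian, glossing over the purely real rows ($\ell=1$ and $\ell=L/2+1$) of $\hat{\Cmat}$, an edge case the paper's own argument ignores as well. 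The paper's two-regime approach, by contrast, needs only tail upper bounds and so would survive in settings where an exact distributional identity is unavailable.
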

\begin{proof}
	A slight variation of \eqref{eq:gaussiannorm} gives us that
	\[
		\P{\left||\<\ck,\vv\>|^2-\|\vv\|^2_2\right| > \lambda} ~\leq~
		\begin{cases}
			2\er^{-\lambda^2/8\|\vv\|^2_2} & 0\leq\lambda\leq 1 \\
			\er^{-\lambda/8\|\vv\|^2_2} & \lambda > 1
		\end{cases}.
	\]
	The lemma follows from combining these two cases into one subexponential bound.
\end{proof}

\begin{lem}
	\label{lm:ckmcktail}
	Let $\ck\in\C^N$ be normally distributed as in \eqref{eq:normalck}, and let $\vv\in\C^N$ be an arbitrary vector.  Then
	\[
		\P{\|(\ck\ck^*-\I)\vv\|_2 > \lambda} ~\leq~ 3\exp\left(-\frac{\lambda}{\sqrt{8N}\|v\|_2}\right).
	\]
\end{lem}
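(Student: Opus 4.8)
The plan is to reduce the statement to a tail bound for the product of two dependent but individually well-understood quantities, $|\<\ck,\vv\>|$ and $\|\ck\|_2$. First I would normalize: since $(\ck\ck^*-\I)(c\vv) = c(\ck\ck^*-\I)\vv$ and the claimed inequality is homogeneous of degree $1$ in $\|\vv\|_2$ on both sides, I may assume $\|\vv\|_2 = 1$. The central observation is the algebraic identity
\[
	(\ck\ck^*-\I)\vv = \<\vv,\ck\>\ck - \vv,
\]
valid because $\ck^*\vv = \<\vv,\ck\>$ under the paper's convention, so that by the triangle inequality
\[
	\|(\ck\ck^*-\I)\vv\|_2 ~\leq~ |\<\vv,\ck\>|\,\|\ck\|_2 + 1.
\]
Hence the event $\{\|(\ck\ck^*-\I)\vv\|_2 > \lambda\}$ is contained in $\{|\<\vv,\ck\>|\,\|\ck\|_2 > \lambda - 1\}$, and it suffices to control the tail of the product $X Y$, where $X = |\<\vv,\ck\>|$ and $Y = \|\ck\|_2$.

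Second, I would bound the product tail by a union bound over the two factors, split at a threshold $\alpha$ to be optimized: for any $\alpha > 0$,
\[
	\P{XY > s} ~\leq~ \P{X > \alpha} + \P{Y > s/\alpha},
\]
since $X \le \alpha$ together with $Y \le s/\alpha$ forces $XY \le s$. The first term is controlled by the chi-square tail bound (the first supporting lemma of this section, with $\|\vv\|_2 = 1$), giving $\P{X > \alpha} = \P{X^2 > \alpha^2} \le \er^{-\alpha^2}$. The second is controlled by Lemma~\ref{lm:maxck1tail} applied with $u = s^2/(N\alpha^2)$, giving $\P{Y > s/\alpha} = \P{\|\ck\|_2^2 > s^2/\alpha^2} \le 1.2\,\er^{-s^2/(8N\alpha^2)}$. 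To make the two exponents coincide I would choose $\alpha^2 = s/\sqrt{8N}$, which balances them both at $-s/\sqrt{8N}$ and yields
\[
	\P{XY > s} ~\leq~ 2.2\,\er^{-s/\sqrt{8N}}.
\]

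Finally, I would set $s = \lambda - 1$ and absorb the additive constant and the prefactor $2.2$ into the advertised factor $3$. The cleanest bookkeeping is to note that the bound is trivial whenever $3\,\er^{-\lambda/\sqrt{8N}} \ge 1$, i.e. for all $\lambda \lesssim \sqrt{N}$, so only the large-$\lambda$ regime needs care; there one writes $\er^{-(\lambda-1)/\sqrt{8N}} = \er^{1/\sqrt{8N}}\er^{-\lambda/\sqrt{8N}}$ with $\er^{1/\sqrt{8N}}$ bounded by a constant near $1$ for $N\ge 1$. If the resulting constant threatens to exceed $3$, I would replace the crude triangle inequality by the exact identity $\|(\ck\ck^*-\I)\vv\|_2^2 = |\<\vv,\ck\>|^2(\|\ck\|_2^2 - 2) + 1$, which lets one reduce to $\{XY > \sqrt{\lambda^2-1}\}$ and avoids the loss incurred at the $+1$ step. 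The only genuine obstacle is that $X$ and $Y$ are strongly dependent — indeed $X \le Y$ always, by Cauchy--Schwarz — so no independence-based concentration applies directly; the threshold-splitting union bound sidesteps this entirely, and the single real design choice is calibrating $\alpha^2 = s/\sqrt{8N}$ so that the quadratic Gaussian tail of $\|\ck\|_2^2$ (scale $N$) and the exponential tail of $|\<\vv,\ck\>|^2$ combine into the stated subexponential scale $\sqrt{8N}$.
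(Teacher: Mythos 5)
Your proposal is correct and follows essentially the same route as the paper's own proof: the identity $(\ck\ck^*-\I)\vv = \<\vv,\ck\>\ck-\vv$ plus the triangle inequality, a threshold-splitting union bound on the product $|\<\vv,\ck\>|\,\|\ck\|_2$ (your $\alpha^2 = s/\sqrt{8N}$ is exactly the paper's balancing choice $\tau^2=\sqrt{8}$ after rescaling), the chi-square and Lemma~\ref{lm:maxck1tail} tails, and absorption of the additive $\|\vv\|_2$ into the constant. Your constant bookkeeping ($2.2\,\er^{1/\sqrt{8N}}$, marginally above $3$ only when $N=1$) is no looser than the paper's own, which quotes $1.05$ where its Lemma~\ref{lm:maxck1tail} gives $1.2$, and you already flag the fix.
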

\begin{proof}
	We have
	\[
		\|(\ck\ck^*-\I)\vv\|_2 = \|\<\vv,\ck\>\ck - \vv\|_2 \leq |\<\vv,\ck\>\|\ck\|_2 + \|\vv\|_2.
	\]
	For the first term above, we have for any $\tau > 0$, 
	\begin{align*}
		\P{|\<\vv,\ck\>|\,\|\ck\|_2 > \lambda\sqrt{N}\|\vv\|_2} &\leq
		\P{|\<\vv,\ck\>| > \sqrt{\lambda}\|\vv\|_2/\tau} + \P{\|\ck\|_2 > \tau\sqrt{\lambda N}} \\
		&= \P{|\<\vv,\ck\>|^2 > \lambda\|\vv\|_2^2/\tau^2} + \P{\|\ck\|_2^2 > \tau^2\lambda N}
	\end{align*}
	We can then use the fact that $|\<\vv,\ck\>|^2$ is a chi-squared random variable along with \eqref{eq:cknorm} above to derive the following tail bound:
	\begin{align*}
		\P{|\<\vv,\ck\>|\,\|\ck\|_2 > \lambda\sqrt{N}\|\vv\|_2} &\leq
		\er^{-\lambda/\tau^2} + 1.05\,\er^{-\tau^2\lambda/8} \\
		&= 2.05\, \er^{-\lambda/\sqrt{8}},
	\end{align*}
	where we have chosen $\tau^2=\sqrt{8}$.  Thus
	\begin{align*}
		\P{|\<\vv,\ck\>|\,\|\ck\|_2 + \|\vv\|_2 > \lambda} &\leq
		2.05\,\er^{1/\sqrt{8}}\cdot\er^{-\lambda/\sqrt{8N}}.
	\end{align*}
\end{proof}

\begin{lem}
	\label{lm:Eckckv}
	Let $\ck\in\C^N$ be normally distributed as in \eqref{eq:normalck}, and let $\vv\in\C^N$ be an arbitrary vector.  Then
	\[
		\E[(\ck\ck^*-\I)\vv\vv^*(\ck\ck^*-\I)] = \|\vv\|_2^2\I.
	\]
\end{lem}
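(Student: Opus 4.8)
The plan is to expand the matrix product and reduce the whole computation to a single fourth-moment term. Writing $s = \<\ck,\vv\> = \vv^*\ck$ for the scalar inner product and using the identity $\ck^*\vv\vv^*\ck = |s|^2$, a direct multiplication gives
\[
	(\ck\ck^*-\I)\vv\vv^*(\ck\ck^*-\I) = |s|^2\ck\ck^* - \bar{s}\,\ck\vv^* - s\,\vv\ck^* + \vv\vv^*.
\]
First I would dispose of the three lower-order terms. Since $\bar{s}\,\ck = \ck\ck^*\vv$ and $\E[\ck\ck^*]=\I$ for the unit-variance, circularly-symmetric model in \eqref{eq:normalck}, we get $\E[\bar{s}\,\ck\vv^*] = \E[\ck\ck^*]\vv\vv^* = \vv\vv^*$, and an identical computation (or taking the adjoint) gives $\E[s\,\vv\ck^*] = \vv\vv^*$ as well, while the last term is deterministic. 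Thus the whole expectation collapses to $\E[|s|^2\ck\ck^*] - \vv\vv^*$.

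The substance is therefore the fourth-moment term $\E[|s|^2\ck\ck^*] = \E[|\vv^*\ck|^2\ck\ck^*]$, which I would compute entrywise in exactly the style of the proof of Lemma~\ref{lm:EckckI2}. Writing the $(i,j)$ entry as $\sum_{m,n}\bar{v}_m v_n\,\E[c_m\bar{c}_n c_i\bar{c}_j]$ and invoking the fourth-moment (Wick/Isserlis) identity for a circularly-symmetric complex Gaussian,
\[
	\E[c_m c_i\bar{c}_n\bar{c}_j] = \delta_{mn}\delta_{ij} + \delta_{mj}\delta_{in},
\]
the two contributions evaluate to $\|\vv\|_2^2\,\delta_{ij}$ and $v_i\bar{v}_j$ respectively, so that $\E[|s|^2\ck\ck^*] = \|\vv\|_2^2\I + \vv\vv^*$. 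Substituting back yields $\|\vv\|_2^2\I + \vv\vv^* - \vv\vv^* = \|\vv\|_2^2\I$, as claimed.

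The main obstacle is getting this fourth-moment identity exactly right. One must use that the model \eqref{eq:normalck} is circularly symmetric, so the pseudo-covariances $\E[c_p c_q]$ all vanish and only pairings matching each $c$ with a conjugate $\bar{c}$ survive; the second, ``extra'' diagonal pairing $\delta_{mj}\delta_{in}$ is precisely what encodes $\E[|c_p|^4]=2$ (the second moment of a two-degree-of-freedom chi-square), just as in Lemma~\ref{lm:EckckI2}. If one prefers to sidestep the Wick formula, the same entrywise sum can be evaluated directly by splitting into the cases $m=n$ versus $m\neq n$ and $i=j$ versus $i\neq j$, paralleling the bookkeeping already done in Lemma~\ref{lm:EckckI2}; the real row $\ell=1$ would need a separate (easier) check, but it is not needed for how the lemma is invoked.
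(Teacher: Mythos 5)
Your proof is correct and follows essentially the same route as the paper's: the same algebraic expansion reducing everything to $\E[|\<\vv,\ck\>|^2\ck\ck^*] - \vv\vv^*$, followed by the same entrywise fourth-moment computation giving $\|\vv\|_2^2\I + \vv\vv^*$ for that term. The only difference is presentational—you package the diagonal/off-diagonal case analysis into the complex Wick identity (with vanishing pseudo-covariances and $\E[|\hat{c}_k[n]|^4]=2$), where the paper writes the cases out by hand—and your caveat about the real row $\ell=1$ flags an edge case that the paper's own proof (which assumes uniformly distributed phase throughout) also silently glosses over.
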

\begin{proof}
	We have
	\begin{align*}
		\E[(\ck\ck^*-\I)\vv\vv^*(\ck\ck^*-\I)] &=
		\E[|\<\vv,\ck\>|^2\ck\ck^* - \ck\ck^*\vv\vv^* - \vv\vv^*\ck\ck^* - \vv\vv^*] \\
		&= \E[|\<\vv,\ck\>|^2\ck\ck^*] - \vv\vv^*.
	\end{align*}
	Let $R(i,j)$ be the entries of the first matrix above:
	\begin{align*}
		R(i,j) &= \E[|\<\vv,\ck\>|^2\hat{c}_k[i]\hat{c}_k[j]^*] \\
		&= \sum_{n_1,n_2}v[n_1]v[n_2]\E[\hat{c}_k[n_1]\hat{c}_k[n_2]^*\hat{c}_k[i]\hat{c}_k[j]^*].
	\end{align*}
	On the diagonal, where $i=j$, all of the terms in the sum above are zero except when $n_1=n_2$, and so
	\[
		R(i,i) = \sum_{n=1}^N |v[n]|^2 \E\left[|\hat{c}_k[n]|^2|\hat{c}_k[i]|^2\right].
	\]
	Using the fact that
	\[
		\E\left[|\hat{c}_k[n]|^2|\hat{c}_k[i]|^2\right] =
		\begin{cases}
			2 & n=i \\
			1 & n\not= i
		\end{cases},
	\]
	we see that $R(i,i) = |v[i]|^2 + \|\vv\|_2^2$.  Off the diagonal, where $i\not= j$, we see immediately that $\E[\hat{c}_k[n_1]\hat{c}_k[n_2]^*\hat{c}_k[i]\hat{c}_k[j]^*]$ will be zero unless one of two (non-overlapping) conditions hold: $(n_1=i,n_2=j)$ or $(n_1=j,n_2=i)$.  Thus
	\begin{align*}
		R(i,j) &= v[i]v[j]\E[\hat{c}_k[i]^2]\E[\hat{c}_k[j]^2] + v[j]v[i]\E[|\hat{c}_k[j]|^2]\E[|\hat{c}_k[i]|^2].
	\end{align*}
	Note the lack of absolute values in the first term on the right above; in fact, since the $\hat{c}_k[i]$ have uniformly distributed phase, $\E[\hat{c}_k[i]^2]=\E[\hat{c}_k[j]^2]=0$, and so $R(i,j) = v[i]v[j]$.  As such
	\[
		\E[(\ck\ck^*-\I)\vv\vv^*(\ck\ck^*-\I)] = \E[|\<\vv,\ck\>|^2\ck\ck^*] - \vv\vv^* = 
		\vv\vv^* + \|\vv\|_2^2\I - \vv\vv^* = \I.
	\]
\end{proof}

\begin{lem}
	\label{lm:ckvuck}
	Let $\ck\in\C^N$ be normally distributed as in \eqref{eq:normalck}, and let $\uu,\vv\in\C^N$ be arbitrary vectors.  Then
	\begin{align*}
		\P{|\<\ck,\vct{v}\>\<\vct{u},\ck\> - \<\vct{u},\vct{v}\>| > \lambda} &\leq
		2\er\cdot\exp\left(-\frac{\lambda}{\|\vct{u}\|_2\|\vct{v}\|_2}\right).
	\end{align*}
\end{lem}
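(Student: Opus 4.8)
The plan is to show that $Z := \<\ck,\vv\>\<\uu,\ck\> - \<\uu,\vv\>$ is subexponential with scale $\|\uu\|_2\|\vv\|_2$ by reducing it to the chi-square tail bound already established (the first supporting lemma of this section, which gives $\P{|\<\ck,\vct{u}\>|^2 > \lambda}\le \er^{-\lambda/\|\vct{u}\|_2^2}$). First I would record that $Z$ is centered: since $\E[\ck\ck^*]=\I$, we have $\E[\<\ck,\vv\>\<\uu,\ck\>] = \vv^*\E[\ck\ck^*]\uu = \<\uu,\vv\>$. I would then peel off the deterministic shift using Cauchy--Schwarz, $|\<\uu,\vv\>| \le \|\uu\|_2\|\vv\|_2$, so that $|Z| \le |\<\ck,\vv\>|\,|\<\uu,\ck\>| + \|\uu\|_2\|\vv\|_2$ and hence $\{|Z|>\lambda\} \subseteq \{\,|\<\ck,\vv\>|\,|\<\uu,\ck\>| > \lambda - \|\uu\|_2\|\vv\|_2\,\}$. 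This isolates the genuinely random quantity, the modulus of a product of two correlated linear forms.

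The main step is to bound $\P{|\<\ck,\vv\>|\,|\<\uu,\ck\>| > \mu}$. For any threshold $s>0$, if both $|\<\ck,\vv\>|\le s$ and $|\<\uu,\ck\>|\le \mu/s$, then their product is at most $\mu$; therefore the product exceeding $\mu$ forces at least one of the events $\{|\<\ck,\vv\>|>s\}$ or $\{|\<\uu,\ck\>|>\mu/s\}$. A union bound followed by the chi-square tail gives $\P{|\<\ck,\vv\>|\,|\<\uu,\ck\>| > \mu} \le \er^{-s^2/\|\vv\|_2^2} + \er^{-\mu^2/(s^2\|\uu\|_2^2)}$. I would then choose $s$ to balance the two exponents, namely $s^2 = \mu\,\|\vv\|_2/\|\uu\|_2$, which makes both exponents equal to $\mu/(\|\uu\|_2\|\vv\|_2)$ and yields $\P{|\<\ck,\vv\>|\,|\<\uu,\ck\>| > \mu} \le 2\,\er^{-\mu/(\|\uu\|_2\|\vv\|_2)}$.

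Finally I would assemble the two pieces. Taking $\mu = \lambda - \|\uu\|_2\|\vv\|_2$ in the previous bound, the subtracted $\|\uu\|_2\|\vv\|_2$ contributes a factor $\er^{+1}$ to the exponential, giving $\P{|Z|>\lambda} \le 2\er\cdot\er^{-\lambda/(\|\uu\|_2\|\vv\|_2)}$ whenever $\lambda \ge \|\uu\|_2\|\vv\|_2$; in the complementary regime $\lambda < \|\uu\|_2\|\vv\|_2$ the claimed bound $2\er\,\er^{-\lambda/(\|\uu\|_2\|\vv\|_2)}$ exceeds $2\er\cdot\er^{-1}=2$, so it holds trivially. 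The only real subtlety---rather than a genuine obstacle---is the balancing of the split threshold $s$: it is exactly this optimization that produces the \emph{product} $\|\uu\|_2\|\vv\|_2$ in the denominator rather than the sum-type quantity $(\|\uu\|_2+\|\vv\|_2)^2$ that a naive polarization-identity expansion of $\<\ck,\vv\>\<\uu,\ck\>$ into squared linear forms would yield. Everything else is routine bookkeeping, and the real $\ell=1$ row is handled identically, since its linear forms obey the same (indeed stronger) Gaussian tail.
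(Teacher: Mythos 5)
Your proposal is correct and follows essentially the same argument as the paper: a union bound on the product $|\<\ck,\vv\>|\,|\<\uu,\ck\>|$ split at a threshold, the chi-square tail bound from the first supporting lemma, the balancing choice $s^2 = \mu\|\vv\|_2/\|\uu\|_2$ (the paper's $t^2=\lambda\|\vv\|_2/\|\uu\|_2$), and Cauchy--Schwarz to absorb the mean $\<\uu,\vv\>$ at the cost of a factor $\er$. Your explicit handling of the regime $\lambda < \|\uu\|_2\|\vv\|_2$ is a small point of added care that the paper leaves implicit.
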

\begin{proof}
	For any $t > 0$,
	\begin{align*}
		\P{|\<\ck,\vct{v}\>\<\vct{u},\ck\>| > \lambda} &\leq
		\P{|\<\ck,\vct{v}\>| > t} + \P{|\<\vct{u},\ck\>| > \lambda/t} \\
		&= \P{|\<\ck,\vct{v}\>|^2 > t^2} + \P{|\<\vct{u},\ck\>|^2 > \lambda^2/t^2} \\
		&\leq \exp\left(-\frac{t^2}{\|\vct{v}\|_2^2}\right) +
		\exp\left(-\frac{\lambda^2}{t^2\|\vct{u}\|^2_2}\right).
	\end{align*} 
	Choosing $t^2=\lambda\|\vct{v}\|_2/\|\vct{u}\|_2$ yields
	\begin{align*}
		\P{|\<\ck,\vct{v}\>\<\vct{u},\ck\>| > \lambda} &\leq
		2\exp\left(-\frac{\lambda}{\|\vct{u}\|_2\|\vct{v}\|_2}\right),
	\end{align*}
	and so
	\begin{align*}
		\P{|\<\ck,\vct{v}\>\<\vct{u},\ck\> - \<\vct{u},\vct{v}\>| > \lambda} &\leq
		\P{|\<\ck,\vct{v}\>\<\vct{u},\ck\>| > \lambda - \|\vct{u}\|_2\|\vct{v}\|_2} \\
		&\leq 2\exp\left(-\frac{\lambda}{\|\vct{u}\|_2\|\vct{v}\|_2} + 1\right) \\
		&= 2\er\cdot\exp\left(-\frac{\lambda}{\|\vct{u}\|_2\|\vct{v}\|_2}\right).
	\end{align*}
\end{proof}


\bibliographystyle{IEEEtran}
\bibliography{bd-references}


\end{document}